\newtheorem{theorem}{Theorem}[section]
\newtheorem{corollary}[theorem]{Corollary}
\newtheorem{lemma}[theorem]{Lemma}
\newtheorem{proposition}[theorem]{Proposition}
\newtheorem{remark}[theorem]{Remark}
\newtheorem{definition}[theorem]{Definition}
\def\dt{\partial^{(t)}}
\def\dnt{\partial^{(t)}_n}
\def\dtau{\partial^{(\tau)}}
\def\dntau{\partial^{(\tau)}_n}
\def\e{\varepsilon}
\DeclareMathOperator*{\dom}{\mathrm{dom}}
\def\blfootnote{\xdef\@thefnmark{}\@footnotetext}
\begin{document}

\title{\sc Norm-resolvent convergence of one-dimensional high-contrast periodic problems to a Kronig-Penney dipole-type model}
\author[1]{Kirill\,D.\,Cherednichenko}
\author[2]{Alexander\,V.\,Kiselev}
\affil[1]{Department of Mathematical Sciences, University of Bath, Claverton Down, Bath, BA2 7AY, United Kingdom}
\affil[2]{Dragomanov National Pedagogical University, 9 Pyrohova St, Kyiv, 01601, Ukraine}

\maketitle

\par{\raggedleft\slshape To the memory of Professor Yuri Safarov\par}

\begin{abstract}
We prove operator-norm resolvent convergence estimates for one-dimensional periodic differential operators with rapidly oscillating coefficients
in the non-uniformly elliptic high-contrast setting, which has been out of reach of the existing homogenisation techniques.  Our asymptotic analysis is based on a special representation of the resolvent of the operator in terms of the $M$-matrix of an associated boundary triple (``Krein resolvent formula'').
  The resulting asymptotic behaviour is shown to be described, up to a unitary equivalent transformation, by a non-standard version of the Kronig-Penney model on $\mathbb R$.
\end{abstract}

\blfootnote{\textbf{Keywords:}\ High-contrast homogenisation, boundary triples, Kre\u\i n formula, norm-resolvent estimates, quantum graphs, asymptotic analysis.}

\blfootnote{\textsc{AMS subject classification:}\ 	47N50, 34E13, 46N20, 74Q15, 78A48.}

\section{Introduction}
\label{intro_section}
It has been exploited in the mathematical analysis of periodic composite media, see {\it e.g.}  \cite{BLP}, \cite{BP},  \cite{JKO}, that they are amenable to the asymptotic analysis with respect to the period of the composite.
The related techniques, forming part of the mathematical theory of homogenisation,
are concerned with the asymptotic behaviour
of families of
operators associated with boundary-value
problems for differential equations with periodic
coefficients:
\begin{equation}
-{\rm div}\bigl(A^\varepsilon(x/\varepsilon)\nabla u\bigr)-zu=f,\ \ \ \ f\in L^2({\mathbb R}^d),\ \ \ \ \ \varepsilon>0,\ \ \ z<0,
\label{generic_eq}
\end{equation}
where for all $\varepsilon>0$ the matrix $A^\varepsilon$ is $Q$-periodic, $Q:=[0,1)^d,$ and may additionally be required to satisfy the
condition of uniform ellipticity:
\begin{equation}
A^\varepsilon(y)\ge\nu I,\ \ \ y\in Q,
\label{un_el}
\end{equation}
where $\nu>0$ is the ellipticity constant. The aim of these techniques is to describe an ``effective medium'', which represents the
family (\ref{generic_eq}) in the limit of vanishing ``microstructure size'' $\varepsilon,$ so that the corresponding ``limit'' equation, as $\varepsilon\to 0,$
has the form
\begin{equation}
-{\rm div}\bigl(A_{\rm hom}\nabla u\bigr)-zu=f,
\label{effective}
\end{equation}
with a constant matrix $A_{\rm hom}>0.$

A relatively recent area of interest within homogenisation is the behaviour of periodic media with
``high contrast'',
see {\it e.g.} \cite{Zhikov2000}, \cite{KamSm2}, \cite{CC}, where the smallest eigenvalue of the
matrix $A^\varepsilon>0$ in (\ref{generic_eq}) goes to zero as $\varepsilon\to0,$ {\it i.e.} the condition (\ref{un_el})
no longer holds and hence the differential operators in (\ref{generic_eq}) are not uniformly elliptic.
High-contrast composites play a key part in modelling photonic band-gap materials (see {\it e.g.} \cite{KamSm1}, \cite{CKS}) and media with
negative material properties (see {\it e.g.} \cite{BouchitteFelbacq}, \cite{Kohn_Shipman}).

In addition to their practical importance in modelling advanced materials, high-contrast composites are a source of new analytical challenges compared to the ``classical'' moderate-contrast
materials described by (\ref{generic_eq}). It has been well understood that the effective parameters $A_{\rm hom}$ in  (\ref{effective}) are given by the leading-order term at the zero energy $\lambda=0$ of the energy-quasimomentum dispersion relation
$\lambda_1=\lambda_1^\varepsilon(
\varkappa
)=
A_{\rm hom}\varkappa\cdot\varkappa
+O(\varkappa^3),$ $\varkappa\to0,$ for the first eigenvalue in the problem
\begin{equation}
-\bigl(\nabla+{\rm i}
\varkappa\bigr)\cdot A^\varepsilon\bigl(\nabla+{\rm i}
\varkappa
\bigr)u=\lambda u,\ \ \ \ u\in L^2(Q),\ \ \ \varkappa\in\bigl[0,2\pi\bigr)^d,
\label{fibre_equation}
\end{equation}
with respect to the scaled variable $y=x/\varepsilon\in Q,$ so that $A^\varepsilon=A^\varepsilon(y),$ and the gradient $\nabla$ in (\ref{fibre_equation}) is taken with respect to
$y.$   The link between the effective properties of the operator in (\ref{generic_eq}) and the asymptotics of $\lambda_1^\varepsilon(\varkappa)$  was first studied in  \cite{BLP} for elliptic and  \cite{Zhikov1989} for parabolic equations. The direct fibre decomposition into problems (\ref{fibre_equation}), followed by a perturbation analysis of its eigenvalue
$\lambda_1^\varepsilon(\varkappa)$
in each fibre,
allows one to obtain sharp operator-norm resolvent convergence estimates for the problem (\ref{generic_eq}), see \cite{Zhikov1989}, \cite{BirmanSuslina}. The related asymptotic results can be interpreted as a ``threshold effect near $\lambda=0$'' (see \cite{BirmanSuslina}, who coined the term in the context of  homogenisation) for the resolvent of the operator
$-\nabla\cdot A^\varepsilon\nabla$ in
$L^2({\mathbb R}^n),$ due to the relation
\begin{equation}
\bigl\{-{\rm div}_x\bigl(A^\varepsilon(x/\varepsilon)\nabla_xu\bigr)-z\bigr\}^{-1}=\varepsilon^2\bigl\{-{\rm div}_y\bigl(A^\varepsilon(y)\nabla_yu\bigr)-\varepsilon^2z\bigr\}^{-1},
\label{resolvents_scaling}
\end{equation}
so that the rescaled spectral parameter $\varepsilon^2z$ goes to zero as $\varepsilon\to0$ for a fixed $z.$  However, in order for this approach to work in the case of general coefficient matrices $A^\varepsilon,$ it is crucial that
the sequence $\{\lambda_2^\varepsilon(\varkappa)\}_{\varepsilon>0}$ be separated from zero uniformly in $\varepsilon$ and $\varkappa.$
Here $\{\lambda_j^\varepsilon(\varkappa)\}_{j=1}^\infty$ is the sequence of
all eigenvalues of (\ref{fibre_equation})--(\ref{generic_high_contrast}) for each $\varepsilon, \varkappa,$ indexed by $j$ in
non-decreasing order.
This condition is not satisfied for periodic models of ``double-porosity'', whose typical representative is described by
\begin{equation}
A^\varepsilon(y)=\left\{\begin{array}{ll}1, \ \ \ \ y\in Q_1,\\[0.15em] \varepsilon^2, \ \ \ y\in Q_0,\end{array}\right.
\label{generic_high_contrast}
\end{equation}
where $Q_0\cup Q_1=Q$ and $Q_0\neq\emptyset$ satisfies some minimal smoothness requirements.
It is easily seen that in this case  $\lambda_j^\varepsilon(\varkappa)\to0$ as $\varepsilon\to0,$ for all
$\varkappa\in[0,2\pi)^d,$ $j\in{\mathbb N}.$
Additional non-trivial analysis shows that for $l=0, 2,$ there are infinitely many
functions $j:(0,1]\to{\mathbb N}$
such that $\varepsilon^{-l}\lambda_{j(\varepsilon)}^\varepsilon(\varkappa)$ is  continuous in $\varepsilon,$ $\varkappa,$ and
tends to a finite non-zero limit as $\varepsilon\to0.$

This implies, in particular, that no equation of the form (\ref{effective}) describes the behaviour of (\ref{generic_eq}), (\ref{generic_high_contrast})
correctly in the resolvent sense, {\it i.e.} with an operator-norm smallness  estimate for the difference between the resolvent of
(\ref{generic_eq}), (\ref{generic_high_contrast}) and the resolvent of (\ref{effective}). These observations necessitate the development of analytical tools capable of dealing with the high-contrast problem (\ref{generic_eq}), (\ref{generic_high_contrast}).
In our approach, which we develop in the present paper for the one-dimensional situation, the operator on a fibre is considered as an extension of a suitably chosen ``minimal'' closed symmetric operator with equal finite deficiency indices. The extension theory, rooted in the classical works of J. von Neumann  \cite{JvN} and its further development by  M. G. Kre\u\i n \cite{Krein}, M. I. Vi\v sik \cite{Vishik}, M. S. Birman \cite{Birman} (commonly known as the Birman-Kre\u\i n-Vi\v sik theory), was reformulated in abstract terms in \cite{Ko1,Gor,DM} as the theory of boundary triples (see a brief exposition below, Section \ref{triples}). It relies on an abstract Green formula, which expresses the sesquilinear form of a maximal symmetric operator in terms of two boundary operators from the original Hilbert space to a ``boundary space''. In our setting the boundary space is finite-dimensional, hence the basic version of the theory is applicable, whereby both boundary operators are assumed to be surjective, and the (self-adjoint) extension under consideration is parameterised by a Hermitian matrix, exactly as in the Birman-Kre\u\i n-Vi\v sik approach. The
main analytic tool in the study of (proper) extensions of the minimal operator is
then
the
Weyl-Titchmarsh $M$-function, which is a generalisation of the classical Weyl-Titchmarsh $m$-coefficient, see {\it e.g.} \cite{Titchmarsh}. We remark that the $M$-function often plays a central r\^ ole in the spectral analysis of partial differential equations (PDE), where it is usually referred to as the Dirichlet-to-Neumann map. The advantages of using the above abstract approach are twofold: firstly, in this way the spectral analysis of the original problem can be reduced to the analysis of finite-dimensional matrices that depend analytically on the spectral parameter, and secondly, the celebrated Kre\u\i n formula (see Section \ref{sect:Krein}), expressing the (generalised) resolvent of the operator extension considered in terms of the resolvent of a given proper self-adjoint extension $A_\infty,$ allows one to use the Glazman splitting procedure \cite{Glazman}, where $A_\infty$ is a suitable ``split
operator''.

Our main result is the asymptotics, in the norm-resolvent sense, of a sequence of differential operators with periodic rapidly oscillating coefficients with high contrast:
\begin{equation}
-\bigl(a^\varepsilon\bigl(x/\varepsilon\bigr)u'\bigr)'-zu=f,\ \ \ \ f\in L^2({\mathbb R}),\ \ \ \varepsilon>0,\ \ \ z\in{\mathbb C},
\label{orig_problem}
\end{equation}
where, for all $\varepsilon>0,$ the coefficient $a^\varepsilon$ is $1$-periodic and
\begin{equation}
a^\varepsilon(y):=\left\{\begin{array}{lll}a_1,\ \ \ y\in[0, l_1),\\[0.2em] \varepsilon^2,\ \ \ y\in[l_1, l_1+l_2),\\ a_3,\ \ \ y\in[l_1+l_2, 1),\end{array}\right.
\label{high_contrast_case}
\end{equation}
with $a_1,$ $a_3>0,$ and $0<l_1<l_1+l_2<1.$ In a physical context ({\it e.g.} elasticity, porous-medium flow, electromagnetism) this represents a laminar composite medium of the double-porosity type, with $[0, l_1)$ and $[l_1+l_2, 1)$ referred to as the ``stiff'' components and  $[l_1, l_1+l_2)$
as the ``soft'' component of the composite (in terms of the ``unit cell'' $[0,1)$).  It has been noticed in \cite{Zhikov2005} that the spectra of a class of multi-dimensional versions of (\ref{orig_problem})
have the remarkable property of an infinite set of gaps opening in the limit of a vanishing period. The corresponding fact for laminar high-contrast media (equivalently, one-dimensional operators with high contrast) does not follow from the analysis of \cite{Zhikov2005}, and was established separately in \cite{CherednichenkoCooper}.
However, neither work goes as far as to establish the behaviour as $\varepsilon\to0$ of the resolvents of the $\varepsilon$-dependent operators describing the heterogeneous medium, in the operator-norm sense. As is argued by \cite{CC} in the multi-dimensional case, the resolvent asymptotics is not recovered by the standard two-scale analysis and requires a uniform asymptotic analysis of all components in the fibre decomposition of the underlying periodic operator. In the present work we utilise a version of the
Kre\u\i n formula,
written for a suitable boundary triple, in order to provide such a uniform asymptotics for (\ref{orig_problem}).


We start by providing auxiliary material leading up to a representation of the resolvents of (\ref{orig_problem}) in terms of a
family of resolvents of the elements of their fibre decompositions $A_\e^{(t)},$ $t\in[0,2\pi\varepsilon^{-1}).$ We develop a new approach to the analysis of this family, by
considering it as defined on a particular finite compact metric graph, thus bridging a gap between the problem of homogenisation of the family (\ref{orig_problem})--(\ref{high_contrast_case}) and the seemingly unrelated subject of spectral analysis of quantum graphs (see {\it e.g.} \cite{Kuchment2} and references therein). This includes (Section \ref{preliminaries1}) a description of the Gelfand transform, the boundary triple, and the Green formula associated with (\ref{orig_problem}), as well as a derivation of the corresponding $M$-matrix and a discussion of its invertibility properties. We also carry out (Section \ref{preliminary}) a useful rescaling of the problem on the fibre, and recall the Kre\u\i n resolvent formula, which is key to the analysis of the subsequent sections.

In Section \ref{intermediate_comparison} we show that the resolvents of the operators $A_\e^{(t)},$ $t\in[0,2\pi\varepsilon^{-1}),$ in the fibre decomposition of (\ref{orig_problem}) are close, in the operator-norm sense, to the family of generalised resolvents $\bigl(\tilde{A}_\e^{(t)}-z\bigr)^{-1}$ associated with a
modified metric graph subject to suitable vertex conditions. The estimate between the resolvents
of the two families is uniform with respect to the values of the ``spectral parameter'' $z$ in any compact $K\subset \mathbb C$ outside a fixed neighbourhood of a set $S:$
\begin{equation}
z\in K:
{\rm dist}(z, S)\ge\rho>0,
\label{z_set}
\end{equation}
where $S$ is the union of the limit spectrum for the family $A_\e^{(t)}$, described by \eqref{CC} (\emph{cf.} \cite{CherednichenkoCooper}), and the spectrum of the Dirichlet boundary-value problem on the ``soft'' component $[l_1, l_1+l_2).$ Following the same approach, it is possible to extend the results (at the expense of a worse estimate for the error term) to the transitional regime when $z\varepsilon^\omega$, $\omega<2,$ tends to a positive constant as $\varepsilon\to0.$
As for the ``high-frequency'' regime of $\omega=2$ ({\it cf.} \cite{Birman_internal_gap}, \cite{CKP} for the ``moderate-contrast'' high-frequency case),
the rationale of Section \ref{intermediate_comparison} is still applicable and leads to a different form of the result, which is outside the scope of the present paper.

In Section \ref{main_section} we carry out the uniform asymptotic analysis for the ``intermediate'' generalised resolvents of $\tilde A_\e^{(t)}$ in the
``finite-frequency'' setting, when the value of $z$ is fixed according to (\ref{z_set}) and $\varepsilon\to0.$ We prove our main result (Corollary \ref{main_result_statement}): for a suitable family $A_{\rm hom}^{(\tau)},$ $\tau:=\varepsilon t\in[0,2\pi),$ the bound
\begin{equation}
\bigl\Vert\bigl(A_\e^{(t)}-z\bigr)^{-1}-P\Phi_\varepsilon^*\bigl(\Psi^{(t)}\bigr)^*\bigl(A_{\rm hom}^{(\tau)}-z\bigr)^{-1}\Psi^{(t)}\Phi_\e P\bigr\Vert\le C\varepsilon^2, \ \ \ \ C>0,
\label{main_est}
\end{equation}
holds for all $\e\in(0,1],$ $t\in[0,2\pi\varepsilon^{-1}),$ and $z$ satisfying (\ref{z_set}), which yields, in particular, the spectral convergence result of \cite{CherednichenkoCooper}. We remark that in contrast to the result of \cite{Zhikov1989, BirmanSuslina} (``classical'' homogenisation) and \cite{CC} (multi-dimensional double porosity), where the error is estimated as $O(\e)$, in the case studied in the present paper it admits a higher order estimate.  In the estimate (\ref{main_est}), the projection operator $P$,
the unitary operators $\Phi_\varepsilon,$ $\Psi^{(t)},$
and the ``homogenised'' operator family $A_{\rm hom}^{(\tau)}$ are given explicitly, see formula (\ref{Phi_definition}) and Definitions \ref{Psi_definition}, \ref{Ahom}.

Finally, in Section 6 we show that the asymptotic behaviour given by the family $A_{\rm hom}^{(\tau)}$
is equivalently represented by a Schr\" odinger operator on ${\mathbb R}$ perturbed by a periodic dipole-type (``$\delta'$-type") potential. This suggests an interpretation of (\ref{orig_problem}) as a model of a ``metamaterial'', where the high contrast between components in the composite has an effective Kronig-Penney formulation with artificial magnetism.
The Kronig-Penney type effective description also suggests a strong connection between the problem (\ref{orig_problem})--(\ref{high_contrast_case})
and ``photonic band-gap materials": as the argument of Section \ref{Sect:Kronig_Penney} shows, the asymptotic result of the well-known work \cite{FK}, on $z$-dependent $\delta$-type interactions in periodic photonic crystals (albeit in a reduced Maxwell setting), is equivalent to the presence of
a $\delta'$-type interaction potential of the kind we obtain.

In what follows, we use interchangeably the notation $z$ and $k^2$ for the spectral parameter, as well as $\sqrt{z}$ and $k$ for the square root of it, where we always choose the branch so that
$\arg\sqrt{z}\in[0,\pi).$ For operators $A, B$ in a Hilbert space ${\mathfrak H},$ whenever we say that $Au=Bu+O(\varepsilon^2),$ $u\in{\mathfrak H},$ in the operator-norm sense as $\varepsilon\to0,$ we imply the existence of $C>0$ such that $\Vert Au-Bu\Vert\le C\varepsilon^2\Vert u\Vert$ for all $u\in {\mathfrak H}$ and $\varepsilon$ in some neighbourhood of zero.



\section{Gelfand transform, boundary triple, and  $M$-matrix}
\label{preliminaries1}


\subsection{Gelfand transform}
\label{Gelfand}

Consider
a graph ${\mathcal G}$ in ${\mathbb R}^d,$ invariant with respect to translations through elements of
${\mathbb Z}^d.$
For the one-dimensional Hausdorff measure $d{\mathcal H}^1$ on ${\mathcal G},$ we consider the space
$L^2({\mathcal G})$ of functions on ${\mathcal G}$ that are square integrable with respect to $d{\mathcal H}^1.$
We use the notation $Q:={\mathcal G}\cap[0,1)^d$  and $Q':=[0,2\pi)^d.$
The Gelfand transform, see \cite{Gelfand},
of a function $U\in L^2({\mathcal G})$ is the element  $\hat{U}=\hat{U}(y,\varkappa)$ of $L^2(Q\times Q')$ defined by the formula
\begin{equation}
\hat{U}(y,\varkappa)=(2\pi)^{-d/2}\sum_{n\in{\mathbb Z}^d}U(y+n)\exp\bigl(-{\rm i}\varkappa\cdot(y+n)\big),\ \ \ y\in
Q,
\ \varkappa\in Q'.
\label{Gelfand_formula}
\end{equation}
The Gelfand transform is a unitary operator between $L^2({\mathcal G})$ and $L^2(
Q\times Q'),$ where the inverse
transform is expressed by the formula
\begin{equation*}
U(y)=(2\pi)^{-d/2}\int_{Q'}\hat{U}(y, \varkappa)\exp({\rm i}\varkappa\cdot y)d\varkappa,\ \ \ \ y\in Q.
\end{equation*}
For the scaled version of the above transform, for $u\in L^2(\varepsilon{\mathcal G})$ we set
\begin{equation}
\hat{u}(x,t)=\biggl(\frac{\varepsilon}{2\pi}\biggr)^{d/2}\sum_{n\in{\mathbb Z}^d}u(x+\varepsilon n)\exp\bigl(-{\rm i}t\cdot(x+\varepsilon n)\big),\ \ \ x\in \varepsilon Q,\ t\in\varepsilon^{-1}Q',
\label{scaled_Gelfand}
\end{equation}
which is the result of applying the transform (\ref{Gelfand_formula}) to the function $U(y)=\varepsilon^{d/2}u(\varepsilon y)$ and setting
$y=x/\varepsilon.$ The inverse of the transform (\ref{scaled_Gelfand}) is given by
\begin{equation}
u(x)=\biggl(\frac{\varepsilon}{2\pi}\biggr)^{d/2}\int_{\varepsilon^{-1}Q'}\hat{u}(x,t)\exp({\rm i}t\cdot x)dt,\ \ \ \ x\in\varepsilon{\mathcal G}.
\label{inverse_scaled_Gelfand}
\end{equation}
In the rest of this article we use the above definitions with $d=1$ and consider the case of a connected graph ${\mathcal G},$ so that $Q=[0,1).$

Applying the above transform to the equation (\ref{orig_problem}) yields the direct fibre decomposition
\begin{equation}
-\frac{d}{dx}\biggl(a^\varepsilon(\cdot/\varepsilon)\frac{d}{dx}\biggr)=\int_{\oplus}\biggl(\dfrac{1}{\rm i}\dfrac{d}{dx}+t\biggr)a^\varepsilon\biggl(\dfrac{1}{\rm i}\dfrac{d}{dx}+t\biggr)\,dt,
\label{fibre_decomposition}
\end{equation}
where $\int_{\oplus}$ denotes the direct integral with respect to $t\in\bigl[0,2\pi\varepsilon^{-1}\bigr),$ and
all operators are defined in a standard way, {\it e.g.} by the corresponding sesquilinear forms.


\subsection{Boundary triples}
\label{triples}

Our approach is
based
on the theory of boundary triples \cite{Gor,Ko1,Koch,DM}, applied
to the class of operators introduced above. We next recall two fundamental
concepts of this theory, namely the boundary triple and the generalised Weyl-Titchmarsh
matrix function. Assume that $A_{\min}$ is a symmetric
densely defined operator with equal deficiency indices in a Hilbert space $H,$ and set $A_{\max}:=A_{\min}^*$.

\begin{definition}[\cite{Gor,Ko1,DM}]\label{Def_BoundTrip}
\label{definition1_1}
Let $\Gamma_0,$ $\Gamma_1$ be linear mappings of $\dom(A_{\max})$
to a separable Hilbert space $\mathcal{H}.$ The triple
$(\mathcal{H}, \Gamma_0,\Gamma_1)$ is called \emph{a boundary
triple} for the operator $A_{\max}$ if:
\begin{enumerate}
\item For all $u,v\in \dom(A_{\max})$ one has
\begin{equation}
\langle A_{\max} u,v \rangle_H -\langle u, A_{\max} v \rangle_H = \langle \Gamma_1 u, \Gamma_0
v\rangle_{\mathcal{H}}-\langle\Gamma_0 u, \Gamma_1 v\rangle_{\mathcal{H}}.
\label{Green_identity}
\end{equation}
\item The mapping
$u\longmapsto (\Gamma_0 u;
\Gamma_1 u),$ $f\in \dom(A_{\max})$ is surjective, {\it i.e.}, for all
$Y_0,Y_1\in\mathcal{H}$ there exists an element $y\in
\dom(A_{\max})$ such that $\Gamma_0 y=Y_0,\ \Gamma_1 y =Y_1.$
\end{enumerate}

A non-trivial extension ${A}_B$ of the operator $A_{\min}$ such
that $A_{\min}\subset  A_B\subset A_{\max}$  is called
\emph{almost solvable} if there exists a boundary triple
$(\mathcal{H}, \Gamma_0,\Gamma_1)$ for $A_{\max}$ and a bounded
linear operator $B$ defined on $\mathcal{H}$ such that for every
$u\in \dom(A_{\max})$
$$
u\in \dom({A_B})\ \ \ \text{\ if and only if }\ \ \ \Gamma_1 u=B\Gamma_0 u.
$$

The operator-valued function $M=M(z),$ defined by
\begin{equation*}\label{Eq_Func_Weyl}
M(z)\Gamma_0 u_{z}=\Gamma_1 u_{z}, \ \
u_{z}\in \ker (A_{\max}-z),\  \ z\in
\mathbb{C}_+\cup{\mathbb C}_-,
\end{equation*}
is called the Weyl-Titchmarsh $M$-function of the operator
$A_{\max}$ with respect to the corresponding boundary triple.
\end{definition}

The property of the  $M$-function that makes it the
tool of choice for the analysis of high-contrast periodic problems  is formulated as follows (\cite{DM,Ryzhov}):
provided that $A_B$ is an almost solvable extension of a
simple\footnote{In other words, there exists no reducing subspace $H_0$ such
that the restriction $A_{\min}|_{H_0}$ is a selfadjoint operator in
$H_0.$} symmetric operator $A_{\min}$, one has $z_0\in \rho(A_B)$ if
and only if $\bigl(B-M(z)\bigr)^{-1}$ admits analytic continuation
into 
$z_0$. Henceforth, we shall refer to points where the latter condition fails as ``zeros'' of
$B-M(z)$.

\subsection{The triple and the Green formula}
\label{triple_section}

For all $\varepsilon>0$ and $t\in[0,2\pi\varepsilon^{-1}),$ we study the operators $A_{\e}^{(t)}$ obtained by applying Gelfand transform
to the operator (\ref{orig_problem}), see (\ref{fibre_decomposition}). These are defined by the differential expressions
\begin{equation}
a_j\biggl(\frac 1{\rm i}\frac d{dx} +t\biggr)^2,\ \ \ \ j=1,2,3,\ \ \ \ \ \ a_2=\varepsilon^2,
\label{diff_expression}
\end{equation}
on the orthogonal sum $H_\varepsilon:=L^2(0,\ell_1)\oplus L^2(0,\ell_2)\oplus L^2(0,\ell_3),$ where
$\ell_j:=\e l_j,$ $j=1, 2, 3,$ so that
$\ell_1+\ell_2+\ell_3=\e.$ Here $l_1$ and $l_2$ are the same as in \eqref{high_contrast_case}, whereas $l_3:=1-l_1-l_2$.
The domain of the operator is the linear set in $\oplus_{j=1}^3 W^{2,2}(0,\ell_j)$ 
 of vector functions  $u=(u_1,u_2,u_3)^\top$ such that
\[
u_1(\ell_1)=u_2(0),\ \ \  u_2(\ell_2)=u_3(0),\ \ \ u_3(\ell_3)=u_1(0),
\]
\[
\dt u_1\vert_{\ell_1}= \dt u_2\vert_0,\ \ \  \dt u_2\vert_{\ell_2}=\dt u_3\vert_0,\ \ \  \dt u_3\vert_{\ell_3}=\dt u_1\vert_0.
\]
Here
\begin{equation}
\dt u:=a\biggl(\frac{du}{dx}+{\rm i}t u\biggr),
\label{partialt}
\end{equation}
where $a$ stands for $a_1$, $a_3,$ or $\e^2,$ depending on the interval that the derivative is taken on, see
(\ref{high_contrast_case}). Further, we define a normal derivative at the endpoints of each interval $[0, \ell _j],$ $j=1,2,3,$ in the direction towards the interior of the interval: 
\begin{equation}
\dnt u(x):= \begin{cases}
             \dt u(x), & \text{if } x=0,\\
             -\dt u(x), & \text{if } x=\ell_j.
            \end{cases}
\label{partialtn}
\end{equation}
The described operator can be viewed as defined
by the form
\[
\int_0^\varepsilon \bigl|\partial^{(t)}u\bigl|^2 dx
\]
considered on its natural domain.

By virtue of the fact that $A^{(t)}_\varepsilon$ is a family of problems on an interval viewed as a ``cycle'', where the end-points are identified with each other, it
proves convenient to exploit the toolbox of the theory of differential operators on metric graphs (``quantum graphs''), which we introduce next. In particular, in our treatment of the family
$A^{(t)}_\varepsilon,$ we build on a recent development of
the related theory in \cite{Yorzh3}, see also references therein, concerning the use of the $M$-function machinery in the study of the inverse spectral problem for quantum graphs.
Albeit not a familiar tool in homogenisation, the terminology and rationale of the theory of quantum graphs proves highly useful in addressing the behaviour of the related operator families.

With the above idea in mind, we view
$A_{\e}^{(t)}$ as a second-order differential operator on a
 metric graph ${\mathbb G}_\varepsilon$, which in our case is a simple cycle with three vertices, and rewrite the matching conditions in the following way. First, we identify the left endpoint of the interval $[0,\ell_j]$ with the right endpoint of the interval $[0,\ell_{j-1}],$ where for $j=1$ we set $j-1=3$.
 This yields three equivalence
classes of the edge endpoints, which we denote by $V_j,$ $j=1,2,3,$ while the matching conditions take the form:
$$
\forall\,j\ \ \  \ \ u \text{\ is continuous at\ } V_j,\ \
\sum_{x\in V_j} \dnt u(x)=0.
$$
We thus arrive at a ``quantum graph''
with an associated weighted magnetic Laplacian\footnote{The definition and well-known basic properties of the Laplacian on a quantum graph perturbed by a magnetic field are discussed
in {\it e.g.} \cite{Aharonov}, see also references therein.}, where all vertices are of the ``$\delta$-type'', using the terminology of \cite{Exner,Yorzh1}, with zero coupling constant at each vertex. In order to facilitate notation, we shall sometimes also denote by $x_m,$
$m=1,2, ..., 6,$ the endpoints of the intervals (graph edges) $\tilde e_j:=[0, \ell _j],$ $j=1,2,3,$ where the odd indices $m=1,3,5,$
correspond to the left end-points of the corresponding intervals, and the even indices $m=2,4,6,$ correspond to
their right end-points, respectively.

In
the spectral analysis of the above operator we use the boundary triples approach extensively.
First, we define a ``maximal'' operator
({\it cf.} \cite{Ryzhov}) $A_{\max}$ in the space $H_\varepsilon,$ by the same differential expression as above, its domain being $\oplus_j W^{2,2}(0,\ell_j),$ subject to the condition of continuity
at all vertices. We remark that the choice of the operator $A_{\rm max}$ is certainly non-unique, and for our choice
one has $A_{\text{max}}\subset A_{\text{max},0},$ where  $A_{\text{max},0}$ is defined on $\oplus_j W^{2,2}(0,\ell_j)$ and is adjoint to $A_{\text{min},0}$ defined on $W^{2,2}$-functions that vanish
with their first derivatives at the endpoints of all intervals $\tilde e_j,$ $j=1,2,3.$ Yet our choice turns out to be suitable for our purposes, as it leads to an ``effective'' boundary triple, using the terminology of \cite{Yorzh3}. We set the adjoint to $A_{\max}$ to be the ``minimal'' densely defined symmetric operator $A_{\min}.$  We choose the boundary triple
as follows: the boundary space is $\mathcal{H}=\mathbb{C}^3$, and the boundary operators are
\begin{equation*}
     (\Gamma_0u)_j:=  u(V_j),\ \ \ \ \
     (\Gamma_1u)_j:=
        \sum_{x\in V_j} \dnt u(x), \ \ \ \ \ j=1,2,3.
\end{equation*}
The Green identity (\ref{Green_identity})
holds by integration by parts:
\begin{multline*}
\langle A_{\rm max} u, v\rangle - \langle u, A_{\rm max} v\rangle
=\sum_{j=1}^{3} \left[ -u(x_{2j}) \dt \bar{v} (x_{2j}) + u(x_{2j-1})
\dt \bar{v} (x_{2j-1})  \right.\\[-0.5em]
\left. + \dt u(x_{2j})
\bar{v} (x_{2j})-\dt u(x_{2j-1}) \bar{v} (x_{2j-1}) \right ]= \sum_{k=1}^{6}
\left[ \dnt u(x_{k}) \bar{v} (x_{k}) - u(x_{k}) \dnt
\bar{v} (x_{k})
 \right ].
\end{multline*}
Rearranging the sum in the last expression yields
\begin{multline*}
\langle A_{\rm max} u, v\rangle - \langle u, A_{\rm max}
v\rangle
=\sum_{j=1}^3 \biggl(\sum_{k:\,x_k\in V_j}
\dnt u(x_{k}) \bar{v} (x_{j}) - \sum_{k:\,x_k\in V_j}
u(x_{j})
\dnt \bar{v}
(x_{k})\biggr)
 =\langle \Gamma_1 u, \Gamma_0 v\rangle_{\mathbb{C}^3} - \langle \Gamma_0 u, \Gamma_1 v \rangle_{\mathbb{C}^3},
\end{multline*}
as required.

\subsection{Datta -- Das Sarma conditions}
\label{Datta_section}

In what follows, we study second-order differential operators on metric graphs with matching conditions more general than those of $\delta-$type, introduced above, namely, with the so-called weighted, or ``Datta -- Das Sarma'', matching conditions, see \cite{pavlov1,pavlov2,Datta}. In the case of differential expression (\ref{diff_expression}) on the graph ${\mathbb G}_\varepsilon,$ the corresponding modification is described as follows.

Assume that some endpoints $x_m$ are assigned weights $w_m$ such that either $w_m\in{\mathbb R}$ or
$w_m=\exp({\rm i}\theta_m),$ $\theta_m\in{\mathbb R}.$ Without loss of generality, we set $w_m=1$ for all remaining endpoints $x_m.$ Then the formulae at the end of Section \ref{triple_section} stand, if one modifies the definition of the domain of $A_{\max}$ and the definition of boundary operators $\Gamma_0^{(1)}, \Gamma_1^{(1)},$ as follows. The domain of the new operator $A_{\max}$ consists of all $W^{2,2}$-functions $u$ such that $w_l u(x_l)=w_k u(x_k)$ for all $x_k, x_l\in V_j,$
and
$$
\bigl(\Gamma_0^{(1)}u\bigr)_j:=
w_k u(x_k), \ \
x_k\in V_j,\ \ \ \ \ \
\bigl(\Gamma_1^{(1)}u\bigr)_j:= \sum_{x_k\in V_j} \widehat{\partial}_n^{(t)}u(x_k),\ \ \ j=1,2,3,
$$
where
\[
\widehat{\partial}_n^{(t)}u(x_k):=\overline{w_k}\,^{-1}\dnt u(x_k),\ \ \ \ \ \ \ \ \ k=1, 2,..., 6.
\]
Introducing the weights described above allows for the treatment of graph operators with more general matching conditions than the basic
$\delta$-type conditions. In particular, the analysis is no longer limited
to domains consisting of functions that are either continuous or have continuous co-normal derivatives.

In what follows, it is crucial that we can consider matching conditions that no longer have zero coupling constants, or equivalently in terms of the boundary operators introduced above, that are no longer described as $\Gamma_1 u=0$ on the domain of $A_{\max}$. We parameterise these general matching conditions by a matrix $B,$ {\it cf.} Definition \ref{definition1_1}. For each operator and boundary triple considered, we attach a superscript to the related matrices $B$ and $M,$ so that the matrices with the same superscript always pertain to the same operator and the same triple.


\subsection{$M$-matrix}
\label{Mmatrixsection}
In order to proceed with the spectral analysis of the operator family $A_{\e}^{(t)}$ introduced above, we construct its  $M$-matrix with respect to  the boundary triple described in Section \ref{triple_section}. On all edges of the graph we deal with a differential equation of the form
\begin{equation}
\label{geneq}
-a\biggl(\frac{d}{dx}+{\rm i}t\biggr)^2u
=k^2u,
\end{equation}
with a suitable value of the coefficient $a=a_j>0,$ $j=1,2,3,$ where $a_2=\varepsilon^2.$
For any solution $u$ of the equation (\ref{geneq}) on the interval $[0,l]$ one has
\begin{equation}
u=A{\rm e}^{-{\rm i}tx}\exp\biggl(-{\rm i}\frac{k}{\sqrt{a}}x\biggr)+B{\rm e}^{-{\rm i}tx}\exp\biggl({\rm i}\frac{k}{\sqrt{a}}x\biggr)
\label{genform}
\end{equation}
with some $A, B\in{\mathbb C}.$
The solution $u$ such that
$u(0)=1,$ $u(l)=0,$ corresponds to the values
\[
A=\biggl\{2{\rm i}\sin\biggl(\dfrac{k}{\sqrt{a}}l\biggr)\biggr\}^{-1}\exp\biggl({\rm i}\dfrac{k}{\sqrt{a}}l\biggr),\ \ \ \ \ \ \ \ \ \
B=-\biggl\{2{\rm i}\sin\biggl(\dfrac{k}{\sqrt{a}}l\biggr)\biggr\}^{-1}\exp\biggl(-{\rm i}\dfrac{k}{\sqrt{a}}l\biggr).
\]

Consider a vertex of ${\mathbb G}_\varepsilon,$ such that one of its adjacent edges is represented by the above interval $[0,l]$ that ``starts'' at the vertex,
{\it i.e.} the vertex is represented by the boundary point $0.$
Then the contribution at the vertex to the value of the boundary operator $\Gamma_1^{(1)}$ calculated on the solution
(\ref{genform})
is
given by
\begin{equation}
\partial_n^{(t)}
u(x)\Bigr\vert_{x=0}=A\biggl\{a\biggl(-{\rm i}t-{\rm i}\frac{k}{\sqrt{a}}\biggr)+{\rm i}ta\biggr\}{\rm e}^{-{\rm i}tx}\exp\biggl(-{\rm i}\frac{k}{\sqrt{a}}x\biggr)
\Biggr\vert_{x=0}
\end{equation}
\begin{equation}
+B\biggl\{a\biggl(-{\rm i}t+{\rm i}\frac{k}{\sqrt{a}}\biggr)+{\rm i}ta\biggr\}{\rm e}^{-{\rm i}tx}\exp\biggl({\rm i}\frac{k}{\sqrt{a}}x\biggr)\Biggr\vert_{x=0}
\label{firstAB}
\end{equation}
\[
=-A{\rm i}k\sqrt{a}{\rm e}^{-{\rm i}tx}\exp\biggl(-{\rm i}\frac{k}{\sqrt{a}}x\biggr)\Biggr\vert_{x=0}+B{\rm i}k\sqrt{a}{\rm e}^{-{\rm i}tx}\exp\biggl({\rm i}\frac{k}{\sqrt{a}}x\biggr)\Biggr\vert_{x=0}
=-k\sqrt{a}\cot\biggl(\dfrac{k}{\sqrt{a}}l\biggr).
\]
A similar contribution of the boundary operator $\Gamma_1^{(1)}$ for the case of an edge that ``terminates'' at the vertex, {\it i.e.} the vertex is represented
by  the boundary point $l$ is given by
\[
\partial_n^{(t)}u(x)\Bigr\vert_{x=l}=A\biggl\{a\biggl({\rm i}t\frac{k}{\sqrt{a}}\biggr)-{\rm i}ta\biggr\}{\rm e}^{-{\rm i}tx}\exp\biggl(-{\rm i}\frac{k}{\sqrt{a}}x\biggr)\biggr\vert_{x=l}
\]
\[
+B\biggl\{a\biggl({\rm i}t-{\rm i}\frac{k}{\sqrt{a}}\biggr)-{\rm i}ta\biggr\}{\rm e}^{-{\rm i}tx}\exp\biggl({\rm i}\frac{k}{\sqrt{a}}x\biggr)\Biggr\vert_{x=l}
\]
\[
=A{\rm i}k\sqrt{a}{\rm e}^{-{\rm i}tx}\exp\biggl(-{\rm i}\frac{k}{\sqrt{a}}x\biggr)\Biggr\vert_{x=l}+B{\rm i}k\sqrt{a}{\rm e}^{-{\rm i}tx}\exp\biggl({\rm i}\frac{k}{\sqrt{a}}x\biggr)\Biggr\vert_{x=l}
=k\sqrt{a}{\rm e}^{-{\rm i}tl}\csc\biggl(\dfrac{k}{\sqrt{a}}l\biggr).
\]
Therefore, the following explicit formula for the $M$-matrix holds:
\begin{equation}
M^{(1)}_\varepsilon(z)
=\left(
\begin{array}{ccc}
 -\sum\limits_{j=1,3}\sqrt{a_j} k \cot \dfrac{k \ell_j}{\sqrt{a_j}} & \sqrt{a_1} {\rm e}^{{\rm i} \ell_1 t} k \csc \dfrac{k \ell_1}{\sqrt{a_1}} & \sqrt{a_3} {\rm e}^{-{\rm i} \ell_3 t} k \csc\dfrac{k \ell_3}{\sqrt{a_3}}
 \\[1em]
 \sqrt{a_1} {\rm e}^{-{\rm i} \ell_1 t} k \csc \dfrac{k \ell_1}{\sqrt{a_1}} & -\sum\limits_{j=1,2}\sqrt{a_j} k \cot \dfrac{k \ell_j}{\sqrt{a_j}} & \sqrt{a_2} {\rm e}^{{\rm i} \ell_2 t} k \csc \dfrac{k \ell_2}{\sqrt{a_2}}
 \\[1em]
 \sqrt{a_3} {\rm e}^{{\rm i} \ell_3 t} k \csc \dfrac{k \ell_3}{\sqrt{a_3}} & \sqrt{a_2} {\rm e}^{-{\rm i} \ell_2 t} k \csc \dfrac{k \ell_2}{\sqrt{a_2}} & -\sum\limits_{j=2,3}\sqrt{a_j} k \cot \dfrac{k \ell_j}{\sqrt{a_j}}
\end{array}
\right).
\label{M^(1)}
\end{equation}

\subsection{Zeros of the $M$-matrix and spectrum}\label{sect:motivation}


Putting the discussion about simplicity of $A_{\min}$ aside for a moment, consider the set of ``zeroes'' of
 $M^{(1)}_\varepsilon,$ which we as above define as those points $z$ at which
 $M^{(1)}_\varepsilon(z)$ has a zero eigenvalue.

\begin{proposition}
\label{detM_proposition}
The determinant of $M^{(1)}_\varepsilon(z)$ admits the following asymptotic formula as $\e\to0$ for all $z\equiv k^2\in K,$ where $K\subset\mathbb C$ is a compact:
\begin{equation}
\det M^{(1)}_\varepsilon
(z)=(l_1 l_3\e)^{-1}a_1 a_3 k \bigl(2\csc k l_2 \cos \e t+k (l_1+l_3)-2 \cot k l_2\bigr) + O(\e).
\label{detM_expansion}
\end{equation}
\end{proposition}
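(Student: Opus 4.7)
The plan is to compute $\det M^{(1)}_\varepsilon(z)$ in closed algebraic form by direct cofactor expansion of the $3\times 3$ matrix (\ref{M^(1)}) and then to insert the small-$\varepsilon$ asymptotics of its trigonometric entries. I would introduce the shorthand
\[
\alpha_j:=\sqrt{a_j}\,k\cot\bigl(k\ell_j/\sqrt{a_j}\bigr),\qquad \beta_j:=\sqrt{a_j}\,k\csc\bigl(k\ell_j/\sqrt{a_j}\bigr),\qquad j=1,2,3.
\]
Since the exponentials along either off-diagonal triangle of (\ref{M^(1)}) multiply to $\exp(\pm {\rm i}(\ell_1+\ell_2+\ell_3)t)=\exp(\pm {\rm i}\varepsilon t)$, cofactor expansion yields
\[
\det M^{(1)}_\varepsilon(z)=-(\alpha_1+\alpha_3)(\alpha_1+\alpha_2)(\alpha_2+\alpha_3)+(\alpha_1+\alpha_3)\beta_2^2+(\alpha_2+\alpha_3)\beta_1^2+(\alpha_1+\alpha_2)\beta_3^2+2\beta_1\beta_2\beta_3\cos(\varepsilon t).
\]
The Pythagorean identity $\csc^2-\cot^2=1$ supplies $\beta_j^2=\alpha_j^2+a_j k^2$; combining this with the elementary symmetric-function identity
\[
(\alpha_1+\alpha_3)\alpha_2^2+(\alpha_2+\alpha_3)\alpha_1^2+(\alpha_1+\alpha_2)\alpha_3^2=(\alpha_1+\alpha_3)(\alpha_1+\alpha_2)(\alpha_2+\alpha_3)-2\alpha_1\alpha_2\alpha_3
\]
collapses the expression to
\[
\det M^{(1)}_\varepsilon(z)=-2\alpha_1\alpha_2\alpha_3+k^2\bigl[a_1(\alpha_2+\alpha_3)+a_2(\alpha_1+\alpha_3)+a_3(\alpha_1+\alpha_2)\bigr]+2\beta_1\beta_2\beta_3\cos(\varepsilon t).
\]

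Next I would substitute the Taylor expansions of the trigonometric factors. On the stiff edges the argument $k\ell_j/\sqrt{a_j}=k\varepsilon l_j/\sqrt{a_j}$ tends to zero uniformly for $k^2\in K$, giving $\alpha_j=a_j/(\varepsilon l_j)+O(\varepsilon)$ and $\beta_j=a_j/(\varepsilon l_j)+O(\varepsilon)$ for $j=1,3$; on the soft edge the substitution $a_2=\varepsilon^2$ is exact and yields $\alpha_2=\varepsilon k\cot(kl_2)$, $\beta_2=\varepsilon k\csc(kl_2)$, both $O(\varepsilon)$. Hence $\alpha_1\alpha_3=a_1 a_3/(\varepsilon^2 l_1 l_3)+O(1)$ and $\beta_1\beta_3=a_1 a_3/(\varepsilon^2 l_1 l_3)+O(1)$, so that multiplying by $\alpha_2$ and $\beta_2$ respectively produces $\alpha_1\alpha_2\alpha_3=(\varepsilon l_1 l_3)^{-1}a_1 a_3 k\cot(kl_2)+O(\varepsilon)$ and $\beta_1\beta_2\beta_3=(\varepsilon l_1 l_3)^{-1}a_1 a_3 k\csc(kl_2)+O(\varepsilon)$. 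The bracketed term contributes $k^2\bigl[a_1 a_3/(\varepsilon l_3)+a_1 a_3/(\varepsilon l_1)\bigr]+O(\varepsilon)=k^2 a_1 a_3(l_1+l_3)/(\varepsilon l_1 l_3)+O(\varepsilon)$, with $a_2(\alpha_1+\alpha_3)=O(\varepsilon)$ absorbed into the error. Assembling these ingredients and factoring out $a_1 a_3 k/(\varepsilon l_1 l_3)$ reproduces precisely (\ref{detM_expansion}).

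The main technical care lies in the bookkeeping: the algebraic consolidation of the first paragraph is essential, since otherwise the expanded determinant contains several independent $O(\varepsilon^{-2})$ contributions that must cancel, and spotting the Pythagorean and symmetric-function identities is the non-routine step. Once the compact form is in hand the order tracking is straightforward; uniformity in $t\in[0,2\pi\varepsilon^{-1})$ is automatic because the $t$-dependence enters only through $|\exp({\rm i}\ell_j t)|=1$ and $|\cos(\varepsilon t)|\le 1$, and uniformity in $z\in K$ follows from the stiff-edge arguments tending to zero uniformly for $k^2\in K$, on compact subsets avoiding the poles of $\cot(kl_2)$ and $\csc(kl_2)$; on compacta that meet these poles, the identity (\ref{detM_expansion}) is to be read in the meromorphic sense, the poles appearing with identical residues on both sides.
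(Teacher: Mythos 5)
Your proof is correct, and it takes a genuinely different route from the paper's. The paper substitutes $\ell_j=\e l_j$, $a_2=\e^2$ into (\ref{M^(1)}), Taylor-expands the matrix entries in $\e$ to relative accuracy $O(\e^3)$, displays the resulting asymptotic matrix, and then states that (\ref{detM_expansion}) ``follows'' by computing its determinant; the cancellations that reduce the a priori $O(\e^{-3})$ contributions from the diagonal entries down to the actual $O(\e^{-1})$ leading order are left implicit. You instead compute the determinant \emph{exactly} by cofactor expansion, collapse it to
\[
\det M^{(1)}_\e(z)=-2\alpha_1\alpha_2\alpha_3+k^2\bigl[a_1(\alpha_2+\alpha_3)+a_2(\alpha_1+\alpha_3)+a_3(\alpha_1+\alpha_2)\bigr]+2\beta_1\beta_2\beta_3\cos(\e t)
\]
via the Pythagorean relation $\beta_j^2=\alpha_j^2+a_jk^2$ and the cubic symmetric-function identity, and only then expand in $\e$. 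This reorders the two steps so that the dangerous $O(\e^{-3})$ and $O(\e^{-2})$ cancellations are absorbed into algebraic identities rather than tracked term by term, which is a cleaner and more transparent derivation; it also makes the uniformity claims in $t$ and $z$ straightforward to read off. Your closing caveat about the poles of $\cot(kl_2)$ and $\csc(kl_2)$ is well taken: as literally stated, (\ref{detM_expansion}) with a uniform $O(\e)$ remainder cannot hold on a compact $K$ meeting $\sin(kl_2)=0$, and the paper silently restricts to $K$ away from these poles (consistent with how it later excludes $S_\infty$ in \eqref{eq:Srho}); your meromorphic reading is the right interpretation.
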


\begin{proof}

We substitute $\ell_j=\e l_j,$ $a_2=\e^2$ into (\ref{M^(1)}) and expand trigonometric functions into power series
wherever possible. Note, that since $t$ is not bounded independently of $\e$ (indeed, $t$ spans the interval
$\bigl[0,2\pi\e^{-1}\bigr)$, which grows as $\e\to 0$), one cannot use power expansions for exponentials. As a result, we obtain the following formula:
$$
M^{(1)}_\varepsilon(z)=
\left(
\begin{array}{cc}
-\dfrac{1}{\varepsilon}\biggl(\dfrac{a_1}{l_1}+\dfrac{a_3}{l_3}\biggr)+\dfrac{1}{3} \e  \left(l_1 k^2+l_3 k^2\right)
& \dfrac{a_1}{\e  l_1}{\rm e}^{{\rm i} \e t l_1}+\dfrac{1}{6}{\rm e}^{{\rm i} \e t l_1} \e  l_1 k^2  \\[1.0em]
 \dfrac{a_1}{\e  l_1}{\rm e}^{-{\rm i} \e t l_1} +\dfrac{1}{6}{\rm e}^{-{\rm i} \e t l_1} \e  l_1 k^2&
-\dfrac{a_1}{\e  l_1}+\e\left(\dfrac{k^2 l_1}{3} -k \cot \left(k l_2\right)\right) \\[1.0em]
 \dfrac{a_3}{\e  l_3}{\rm e}^{{\rm i} \e t l_3}+\dfrac{1}{6}{\rm e}^{{\rm i} \e t l_3} \e  l_3 k^2 &
 {\rm e}^{-{\rm i} \e t l_2} k \e  \csc \left(k l_2\right)
\end{array}\right.\ \ \ \ \ \ \ \ \ \ \ \ \ \ \ \ \ \ \ \ \ \ \ \ \ \ \ \ \ \ \ \ \ \ \ \ \ \ \ \
$$
\[
\ \ \ \ \ \ \ \ \ \ \ \ \ \ \ \ \ \ \ \ \ \ \ \ \ \ \ \ \ \ \ \ \ \ \ \ \ \ \ \ \ \ \ \ \ \ \ \ \ \ \ \ \ \ \ \ \ \ \ \ \ \ \left.
\begin{array}{c}
\dfrac{a_3}{\e  l_3}{\rm e}^{-{\rm i} \e t l_3}+\dfrac{1}{6}{\rm e}^{-{\rm i} \e t l_3} \e  l_3 k^2\\[1.0em]
{\rm e}^{{\rm i} \e t l_2} k \e  \csc \left(k l_2\right)\\[1.0em]
-\dfrac{a_3}{\e  l_3}+\e\left(\dfrac{k^2 l_3}{3}-k \cot \left(k l_2\right)\right)
\end{array}
\right)+O\bigl(\varepsilon^3\bigr),
\]
as $\varepsilon\to0,$ and (\ref{detM_expansion}) follows.
\end{proof}

The spectrum of the operator $A_{\e}^{(t)}$ is a union of the set $S_M^\e$ of points $z$ into which the inverse of $M^{(1)}_\varepsilon$ can not be analytically continued (zeroes of $M^{(1)}_\varepsilon$) and the set $S_{\min}$ of eigenvalues
of the reducing self-adjoint ``part''
of the symmetric minimal operator $A_{\text{min}}=A_{\max}^*,$ which are ``invisible'' to the $M$-matrix, as discussed
in {\it e.g.} \cite{DM}. The latter appear whenever the operator $A_{\text{min}}$ is not simple, {\it cf.} Section \ref{triples} above. A straightforward argument, see {\it e.g.} \cite{Yorzh1}, demonstrates that in our case $S_{\min}$ coincides with the set of eigenvalues of the symmetric operator  $A_{\min}$. In our setting, the named operator is defined by the same differential expression as $A_{\text{max}}$ on functions $u\in\dom(A_{\max})$ subject to the conditions $\Gamma_0 u=\Gamma_1 u=0$.

Proposition \ref{detM_proposition} immediately implies
that for all compact $K\subset \mathbb C,$ the set $S_M^\e\cap K$
 converges as $\e\to0$ to the set of solutions $k^2\in K$ to
\begin{equation}\label{CC}
2\cos\tau+k (l_1+l_3)\sin k l_2-2 \cos k l_2=0,\ \ \ \tau=\varepsilon t\in[0,2\pi),
\end{equation}
in line with the result of \cite{CherednichenkoCooper}.  Notice that for each $\varepsilon, t,$ the set of poles of $M^{(1)}_{\e}$, where one needs to check additionally
whether $M^{(1)}_{\e}$ has a vanishing eigenvalue, coincides with the set of zeroes of $\sin kl_2$, at which the determinant (\ref{detM_expansion}) is either regular or has a pole. It is regular at a given point in this set if and only if one has
$\vert\cos\e t\,\vert=1$ at the same time ({\it i.e.} $t=0$ or $t=\pi/\varepsilon$), which immediately implies that
exactly one eigenvalue of
$M^{(1)}_\e$ vanishes for such $k,$ $\varepsilon,$ $t.$ Clearly, these values of $k,$ $\varepsilon,$ $t$ also satisfy (\ref{CC}).

In the remainder of this section, motivated by the above calculation, we give an example of an operator family that is asymptotically isospectral 
(as $\varepsilon\to0$) but is not resolvent-close to the family $A_\varepsilon^{(t)}.$ For all $z\in{\mathbb R}_+,$ define the operator family $\check A^{(\check{\tau})}(z)$ by the differential expression
\[
\biggl(\frac 1{\rm i} \frac d{dx} +\check{\tau}\biggr)^2,\qquad \check{\tau}\in\bigl[0,2\pi l_2^{-1}\bigr),
\]
on the interval $[0,l_2]$ with the following $z$-dependent
conditions:
\begin{equation}\label{eq:z_dep}
u(0)=u(l_2),\ \ \ \ \ \partial^{(\check{\tau})}_n u\bigr\vert_0+\partial^{(\check{\tau})}_n u\bigr\vert_{l_2}=-z(l_1+l_3)u(0).
\end{equation}
Here, notation analogous to (\ref{partialt})--(\ref{partialtn}) is used:
\begin{equation}\label{eq:tau-notation}
\partial^{(\tau)}u:=\frac{du}{dx}+{\rm i}\tau u,\ \ \ \  \ \ \ \ \  \ \ \ \
\partial^{(\tau)}_n u(x):= \begin{cases}
             \partial^{(\tau)}u(x), & \text{if } x=0,\\
             -\partial^{(\tau)}u(x), & \text{if } x=l_2,
            \end{cases}\ \ \ \ \ \ \ \ \ \ \ \tau\in{\mathbb R},
\end{equation}
with $\tau=\check{\tau}.$ We remark that
  $\check A^{(\check{\tau})}(z)$ can be treated as an operator pencil, admitting the form of a differential operator with an energy-dependent perturbation that is a Dirac delta-function multiplied by a spectral parameter, see \cite{KuchmentZeng2004, Exner} and Section \ref{discussion}. It is checked directly that the set of $z=k^2$ such that $k$ is a solution to \eqref{CC} coincides with the set of poles of the resolvent\footnote{Note that we evaluate the resolvent at the point $k^2,$ which determines the domain of the operator. The object thus defined is therefore a generalised resolvent, {\it cf.} Section \ref{sect:Krein} below.} $\bigl(\check A^{(\check{\tau})}(z)-z\bigr)^{-1}.$
Indeed, consider a cycle of two vertices connected by two edges of lengths $\check l_1,$ $\check l_2,$ such that $\check l_1+\check l_2=l_2$. Proceeding as above yields the following $M$-matrix for the operator $\check A_{\max}$ on the domain of $W^{2,2}$-functions that are
  continuous on the cycle:
$$
\check M^{(\check{\tau})}(z)=k\begin{pmatrix}
-\cot {k \check l_1}-\cot k \check l_2 & \dfrac {{\rm e}^{{\rm i}\check{\tau} \check l_1}}{\sin k\check l_1}+\dfrac {{\rm e}^{-{\rm i}\check{\tau} \check l_2}}{\sin k\check l_2}\\[1.0em]
\dfrac {{\rm e}^{-{\rm i}\check{\tau} \check l_1}}{\sin k\check l_1}+\dfrac {{\rm e}^{{\rm i}\check{\tau} \check l_2}}{\sin k\check l_2}
& -\cot {k \check l_1}-\cot k \check l_2
\end{pmatrix}.
$$
The requirement that at one of the vertices, say $V_1,$ one has the energy-dependent matching condition \eqref{eq:z_dep}, leads to the equation
$$
\det\bigl(\check M^{(\check{\tau})}(z)- \check B(z)\bigr)=0,\ \ \ \ \
\check B(z):=\mathrm{diag}\bigl\{-(l_1+l_3)z,0\bigr\},
$$
which by a straightforward
manipulation is reduced to \eqref{CC}, with $\check{\tau}=\tau/l_2.$

The above argument shows that (the ``visible'' part of) the spectra of the family $A_\e^{(\tau/\varepsilon)}$ converge, as $\varepsilon\to0,$ to the set of singularities of the generalised resolvent  $\bigl(\check A^{(\tau/l_2)}({z})-z\bigr)^{-1}$, which
suggests
that
$\check A^{(\tau/l_2)}({z})$
 is the resolvent limit of the family $A_\e^{(t)}$ in the operator-norm
 sense.
However, as we
demonstrate below (see Theorem \ref{second_claim} and Remark \ref{remark_label}),
this is false for $\bigl(\check A^{(\tau/l_2)}({z})-z\bigr)^{-1}$ or
any of its unitary transformations, and
a closely related self-adjoint operator, albeit in a larger space,
has the desired property.

\section{Preliminary observations}
\label{preliminary}

\subsection{Auxiliary re-scaling in the soft component}
\label{section_rescaling}
Motivated by the above result on spectral convergence, we apply to the initial operator family $A_\e^{(t)}$
a unitary transformation that
rescales the soft component interval $[0,\e l_2]$ to size of order one
while leaving the stiff component intact. The unitary image of $A_\e^{(t)}$ under this transformation is shown to have the same $M$-matrix as the operator $A_\e^{(t)},$ after an appropriate modification of the boundary triple. This modification
 is done by passing from the $\delta$-type coupling to a Datta -- Das Sarma coupling at the endpoints of the interval $[0,l_2]$. To this end, we consider the unitary dilation $F_\e: L^2(0,\e l_2)\to L^2(0,l_2),$ given by
$(F_\e u)(x):=\sqrt{\e} u(\e x)$. The operator
\begin{equation}
\Phi_\e:=\begin{pmatrix}
I &0&0\\
0&F_\e&0\\
0&0&I
\end{pmatrix}
\label{Phi_definition}
\end{equation}
is a unitary transform of $\oplus_j L^2(0,\e l_j)$ to the space $H:=L^2(0,\e l_1)\oplus L^2(0,l_2)\oplus L^2(0,\e l_3)$. We denote by $\mathbb{G}$ the graph $\mathbb{G}_\varepsilon$, to which the above rescaling has been applied. Clearly, the matching conditions at the vertex common to $[0,\e l_1]$ and $[0,\e l_3]$ are not affected. As for the matching conditions at the remaining vertices $V_2$ and $V_3$ (see Fig.\,\ref{fig:mod}(a) on p.\,\pageref{fig:mod} below), the following calculation applies. First, notice that the differential expression on the ``developed'' weak component
\begin{equation}\label{eq:soft}
F_\e\biggl(\frac{\e}{{\rm i}}\frac {d}{dx}+\e t\biggr)^2 F_\e^*u
=\biggl(\frac{1}{\rm i}\frac {d}{dx} +t\e\biggr)^2 u
\end{equation}
remains essentially the same, with the symbol of the differential part of the operator losing the coefficient $\e$.
As for the endpoints of the dilated soft component, they acquire Datta -- Das Sarma weights $1/\sqrt{\e}$. This is immediately obvious for the values of the function under the unitary transformation $F_\e$, whereas for $\dt u$ one has:
$$
\dt u = \e^2 \frac 1{\e^{3/2}}(F_\e u)'+{\rm i}\e^2 t \frac 1{\sqrt{\e}} (F_\e u)=\sqrt{\e}\bigl((F_\e u)'+{\rm i}t \e (F_\e u)\bigr)=
\sqrt{\e}\dtau (F_\e u),
$$
where $\tau=\e t$ and the notation \eqref{eq:tau-notation} is
used.

In line with the discussion of Section \ref{Datta_section}, the boundary triple for the rescaled operator is chosen as follows: both endpoints of the interval $[0,l_2]$ are  assigned the weight $w_m=1/\sqrt{\e}$, whereas
 $w_m=1$ for all remaining endpoints $x_m.$ The domain of $A_{\max}$ consists of all $W^{2,2}$-functions $u$ such that $w_l u(x_l)=w_k u(x_k)$ for all $x_l,x_k\in V_j$ for each vertex $V_j,$ $j=1, 2, 3,$ and
$$
\bigl(\Gamma_0^{(2)}u\bigr)_j:=
w_k u(x_k),\ \
x_k\in V_j,\ \ \ \ \
\bigl(\Gamma_1^{(2)}u\bigr)_j:= \sum_{x_k\in V_j} \widehat{\partial}_n^{\sigma(w_k)} u(x_k),\ \ \ j=1, 2, 3,
$$
where ({\it cf.} (\ref{partialt})--(\ref{partialtn}), (\ref{eq:tau-notation}))
$$\widehat{\partial}_n^{\sigma(w_k)} u(x_k):=\overline{w_k}\,^{-1}\partial_n^{\sigma(w_k)} u(x_k),
\ \ \ \ \ \ \ \ \ \ \ \sigma(w_k):=\begin{cases}
(t),& \text{ if } w_k=1,\\
(\tau), & \text { if } w_k=1/\sqrt{\e},
\end{cases}\ \ \ \ \ \ \ \ \ \ \ k=1,2, ...,6.
$$

\begin{remark}
The formula \eqref{eq:soft} suggests that after the unitary rescaling $\Phi_\e,$ the differential expression that defines the operator loses its dependence on the parameter $\e$ on the soft component. This becomes obvious after the substitution $\tau =\e t$ in \eqref{eq:soft}. Henceforth, we use $\tau$ and $\e t$ interchangeably:
 $\tau$ in the objects pertaining to the soft component, and $\e t$ in those pertaining to the stiff component, as in the latter case one cannot drop the explicit dependence on $\e$.
\end{remark}

The claim concerning the form of the $M$-matrix follows. Indeed, when obtaining its expression one constructs for any given vertex $V$ the solution $u_z\in{\rm ker}(A_{\max}-z)$ such that this solution equals unity at the vertex $V$ and  zero at any other vertex ({\it cf.} Section \ref{triple_section} above). Such solutions are constructed independently on any edge emanating from the vertex $V.$ If this edge is the edge $[0,l_2]$, the corresponding solution acquires the
factor $\sqrt{\e}$ compared to the corresponding solution on the edge $[0,\e l_2]$. The column of the $M$-matrix corresponding to the vertex $V$ is then obtained by evaluating either $\dnt u_z$ or $\sqrt{\e}\,\dntau u_z$, which yields yet another multiplication by $\sqrt{\e}$ of the normal derivatives at both endpoints of $[0,l_2]$, where we use the fact that Datta -- Das Sarma weights at the two endpoints are equal. As a result, we obtain the following expression for the $M$-matrix of the unitary image of the operator $A_\e^{(t)},$ {\it cf.} (\ref{M^(1)}):
\begin{equation}
M^{(2)}_\varepsilon(z)=\left(
\begin{array}{ccc}
 -\sum_{j=1,3}\sqrt{a_j} k \cot \dfrac{k \e l_j}{\sqrt{a_j}} & \sqrt{a_1} {\rm e}^{{\rm i} \e l_1 t} k \csc \dfrac{k  \e l_1}{\sqrt{a_1}} & \sqrt{a_3} {\rm e}^{-{\rm i}  l_3 \e t} k \csc\dfrac{k \e l_3}{\sqrt{a_3}} \\[1.1em]
 \sqrt{a_1} {\rm e}^{-{\rm i} \e l_1 t} k \csc \dfrac{k \e l_1}{\sqrt{a_1}} & -\sqrt{a_1} k \cot \dfrac{k \e l_1}{\sqrt{a_1}}-\e k \cot{k l_2}  & \e {\rm e}^{{\rm i} \e l_2  t} k \csc k  l_2  \\ [1.1em]
 \sqrt{a_3} {\rm e}^{{\rm i} \e l_3 t} k \csc \dfrac{k \e l_3}{\sqrt{a_3}} & \e  {\rm e}^{-{\rm i} \e l_2 t} k \csc k  l_2 & -\sqrt{a_3} k \cot \dfrac{k \e l_3}{\sqrt{a_3}}-\e k \cot{k l_2} \end{array}
\right).
\label{M2form}
\end{equation}

\subsection{Kre\u\i n resolvent formula}
\label{sect:Krein}
One of the cornerstones of our analysis is the celebrated Kre\u\i n formula, which allows to relate the resolvent of $A_B,$ see Section \ref{triples}, to the resolvent of a self-adjoint operator $A_\infty$ defined as the
restriction of the maximal operator $A_{\text{\rm max}}$ to the set
$$
\dom(A_\infty)=\bigl\{u\in \dom(A_{\text{\rm max}})|\, \Gamma_0 u=0\bigr\}.
$$
(We follow \cite{Ryzhov} in using the notation $A_\infty,$ justified by the fact that in the language of triples this extension formally corresponds to $A_B$ with $B=\infty.$)

In particular, we will find it necessary to consider not only proper operator extensions $A_B$ of the symmetric operator $A_{\text{min}}$ which are defined on domains
$$
\dom(A_B)=\bigl\{u\in \dom(A_{\text{\rm max}})|\, \Gamma_1 u=B\Gamma_0 u\bigr\}
$$
parameterised by bounded in $\mathcal{H}$ operators $B$, but also those for which the parameterising operator $B$ depends on the spectral parameter $z$. This amounts to considering spectral boundary-value problems where the spectral parameter is present not only in the differential equation but also in the boundary conditions:
\begin{equation}
A_{\text{\rm max}}u-z u =f,\ \ \ \
u\in \dom (A_{\text{\rm max}}),\ \ \
\Gamma_1 u = B(z) \Gamma_0 u.
\label{gen_prob}
\end{equation}
The solution operator $R(z)$ for a boundary-value problem of this type is known \cite{Strauss} to be a generalised resolvent in the case when $-B(z)$ is an $R$-function:
if $B(z)$ is analytic in $\mathbb C_+\cup \mathbb C_-$ with $\Im z \Im B(z)\leq 0,$ then
\begin{equation}\label{eq:out-of-space}
R(z)=P_{\mathfrak H} (A_{\mathfrak H}-z)^{-1}\bigr|_{\mathfrak H},
\end{equation}
where $\mathfrak H$ is a Hilbert space such that $H\subset \mathfrak H,$ the operator $P_{\mathfrak H}$ is the orthogonal projection of $\mathfrak H$ onto $H,$ and $A_{\mathfrak H}$ is a self-adjoint in $\mathfrak H$ out-of-space extension of the operator $A_{\text{min}}$.

On the other hand, for any fixed $z$ the operator $R(z)$  coincides with the resolvent (evaluated at the point $z$) of a closed linear operator 
that is a proper extension of $A_{\text{min}}$ with the $z$-dependent domain given in (\ref{gen_prob}).
It is for this reason that in what follows we preserve the notation $(A_B-z)^{-1}$ for the generalised resolvent of $A_B$ when 
$B=B(z).$

The Kre\u\i n formula suitable for treatment of such problems was obtained in \cite{DM}. For the sake of completeness we include a short proof of this result.

\begin{proposition}[Version of the Kre\u\i n formula of \cite{DM}]\label{prop:Krein}
Assume that $\{\mathcal{H},\Gamma_0,\Gamma_1\}$ is a boundary triple for the operator $A_{\text{\rm max}}$. Then for the (generalised) resolvent  $(A_B-z)^{-1}$, where $B=B(z)$ is a bounded operator in $\mathcal{H}$ for $z\in\mathbb C_+\cup \mathbb C_-$, one has, for all
$z\in \rho(A_B)\cap \rho(A_\infty)$:
\begin{multline}
\label{eq:resolvent}
(A_B-z)^{-1}=(A_\infty-z)^{-1}+ \gamma(z)\bigl(B(z)-M(z)\bigr)^{-1}\gamma^*(\bar z)
\\
=(A_\infty-z)^{-1}+ \gamma(z)\bigl(B(z)-M(z)\bigr)^{-1}\Gamma_1 (A_\infty-z)^{-1},
\end{multline}
where $M(z)$ is the  M-function of $A_{\text{\rm max}}$ with respect to the boundary triple $\{\mathcal{H},\Gamma_0,\Gamma_1\}$ and $\gamma(z)$ is the solution operator
$$
\gamma(z)=\bigl(\Gamma_0|_{\text{\rm ker\,}(A_{\text{\rm max}}-z)}\bigr)^{-1}.
$$
\end{proposition}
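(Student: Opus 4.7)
The plan is to derive \eqref{eq:resolvent} by the standard splitting argument. Given $f \in H$ and $z \in \rho(A_B) \cap \rho(A_\infty)$, set $u := (A_B-z)^{-1} f$ and $u_0 := (A_\infty - z)^{-1} f$, so that $u$ solves \eqref{gen_prob} while $u_0$ satisfies $(A_{\max} - z) u_0 = f$ with $\Gamma_0 u_0 = 0$. The difference $v := u - u_0$ lies in $\ker(A_{\max} - z)$ and satisfies $\Gamma_0 v = \Gamma_0 u$, so by the definition of the solution operator, $v = \gamma(z) \Gamma_0 u$. This yields the preliminary identity
\begin{equation*}
u = (A_\infty - z)^{-1} f + \gamma(z)\, \Gamma_0 u.
\end{equation*}

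Next, I would apply $\Gamma_1$ to both sides and invoke the definition of the $M$-function on $v \in \ker(A_{\max}-z)$ to obtain $\Gamma_1 u = \Gamma_1 u_0 + M(z) \Gamma_0 u$. Combined with the boundary condition $\Gamma_1 u = B(z) \Gamma_0 u$ that defines $\dom(A_B)$, this yields
\begin{equation*}
\bigl(B(z) - M(z)\bigr) \Gamma_0 u = \Gamma_1 (A_\infty - z)^{-1} f.
\end{equation*}
The hypothesis $z \in \rho(A_B)$ ensures invertibility of $B(z) - M(z)$ --- this is precisely the solvability criterion recalled in Section \ref{triples} --- so one may solve for $\Gamma_0 u$ and substitute back into the preliminary identity, producing the second equality in \eqref{eq:resolvent}.

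The first form of \eqref{eq:resolvent} then follows from the adjoint identity $\gamma^*(\bar z) = \Gamma_1 (A_\infty - z)^{-1}$, which I would establish by a direct application of the abstract Green formula \eqref{Green_identity}. Taking $u = (A_\infty - z)^{-1} f$ with arbitrary $f \in H$ and $v = \gamma(\bar z) g$ with arbitrary $g \in \mathcal{H}$, the conditions $\Gamma_0 u = 0$, $A_{\max} u = f + zu$, $A_{\max} v = \bar z v$, $\Gamma_0 v = g$ reduce the right-hand side of \eqref{Green_identity} to $\langle \Gamma_1 u, g\rangle_{\mathcal H}$ and the left-hand side to $\langle f, v\rangle_H$. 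The resulting identity $\langle f, \gamma(\bar z) g\rangle_H = \langle \Gamma_1 (A_\infty - z)^{-1} f, g\rangle_{\mathcal H}$, holding for all $f$ and $g$, is equivalent to the claimed adjoint formula.

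No step constitutes a serious obstacle, as the formula is essentially classical; the only subtlety is that the parameter $B$ depends on the spectral variable. At each fixed $z \in \rho(A_B) \cap \rho(A_\infty)$, however, the domain of $A_B$ is the genuine proper extension domain $\{u \in \dom(A_{\max}) \colon \Gamma_1 u = B(z) \Gamma_0 u\}$, so the algebra above goes through verbatim pointwise in $z$. The Naimark/Strauss interpretation of $R(z)$ as an out-of-space resolvent expressed in \eqref{eq:out-of-space} is not invoked in this derivation; it enters only to justify regarding the family $\{R(z)\}$ as a \emph{generalised} resolvent in the sense needed elsewhere in the paper.
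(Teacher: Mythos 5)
Your proof is correct and follows the same splitting argument as the paper: decompose $(A_B-z)^{-1}f - (A_\infty-z)^{-1}f \in \ker(A_{\max}-z)$, compute $\Gamma_0$ and $\Gamma_1$ of the pieces, invoke the $M$-function identity and the boundary condition $\Gamma_1 u = B(z)\Gamma_0 u$, and solve for $\Gamma_0 u$. The only difference is that you derive the adjoint identity $\gamma^*(\bar z)=\Gamma_1(A_\infty-z)^{-1}$ directly from the Green formula \eqref{Green_identity}, whereas the paper simply cites it from \cite{DM}; this is a harmless and correct addition.
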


\begin{proof}
For any $f\in H$, one clearly has
\begin{equation}\label{eq:K1}
u_z:=(A_B-z)^{-1}f-(A_\infty-z)^{-1}f\in \text{ker\,}(A_{\text{\rm max}}-z):=N_z
\end{equation}
Setting $u:=(A_B-z)^{-1}f$ and using the explicit description of the domain of $A_\infty$ together with the equality \cite{DM} $\gamma^*(\bar z)=\Gamma_1 (A_\infty -z)^{-1}$, one has:
\begin{equation*}
\Gamma_1 u = \Gamma_1 u_z + \Gamma_1 (A_\infty -z)^{-1} f= \Gamma_1 u_z + \gamma^*(\bar z)f, \ \ \ \ \ \
\Gamma_0 u = \Gamma_0 u_z,
\end{equation*}
and, since
$$
\Gamma_1 u=B(z) \Gamma_0 u,
$$
one immediately arrives at the equality
$$
\Gamma_1 u_z + \gamma^*(\bar z) f = B(z) \Gamma_0 u_z.
$$
On the other hand, since $u_z\in N_z$ one has $\Gamma_1 u_z=M(z) \Gamma_0 u_z$, which yields
$$
\bigl(B(z)-M(z)\bigr)\Gamma_0 u_z = \gamma^*(\bar z)f,
$$
and hence
$$
\Gamma_0 u_z =\bigl(B(z)-M(z)\bigr)^{-1} \gamma^*(\bar z)f.
$$
Since $\Gamma_0$ is invertible \cite{DM} on $N_z$ provided that $z\in\rho(A_\infty),$ and writing $(\Gamma_0|_{N_z})^{-1}=\gamma(z)$, this leads to
$$
u_z=\gamma(z)\bigl(B(z)-M(z)\bigr)^{-1}\gamma^*(\bar z)f,
$$
which together with \eqref{eq:K1} completes the proof.
\end{proof}

\section{Comparison to the  ``intermediate'' generalised resolvents $\bigl(\tilde{A}_\varepsilon^{(t)}-z\bigr)^{-1}$}
\label{intermediate_comparison}

We shall now consider an operator family $\tilde A_\e^{(t)}$ that is defined by the same differential expression as
$\Phi_\varepsilon A_\e^{(t)}\Phi_\varepsilon^*$ and on the same Hilbert space $H$ but is different from
$\Phi_\varepsilon A_\e^{(t)}\Phi_\varepsilon^*$ as a graph Hamiltonian: it is defined by a topologically different underlying metric graph $\widetilde{\mathbb G}$ in the terminology of the spectral theory of quantum graphs. The graph $\widetilde{\mathbb G}$ has two components that correspond to the ``soft'' and ``stiff'' components of the original graph ${\mathbb G}.$ These will be almost decoupled, but for the non-local interface condition of the order $\sqrt{\e}$ intertwining the two. This family turns out to be a good approximation, up to a rank-one operator, for the original operator family $A_\e^{(t)}$, while being at the same time 
a convenient intermediate operator for the
final step of our plan, the passage to the homogenised operator.
\emph{From now on, we shall assume that $a_1=a_3\equiv a$ for the sake of brevity}.
Note that the domain of $\tilde A_\e^{(t)}$ depends on the spectral parameter $z.$ The operator $\bigl(\tilde A_\e^{(t)}-z\bigr)^{-1}$
solves a spectral boundary-value problem where the spectral parameter is present not only in the differential equation, but also in the associated boundary conditions. In the terminology of \cite{DM,Strauss}, it is therefore a generalised resolvent of the corresponding boundary-value problem, \emph{cf.} Section \ref{sect:Krein} above. Nevertheless, in Section \ref{main_section} it will become apparent that this intermediate generalised resolvent
itself is, up to the same correcting rank-one operator,
 $O(\e^2)$-close in the operator-norm sense to the resolvent of a unitary transformation of a self-adjoint operator $A_{\text{hom}}$, yielding the estimate \eqref{main_est}.

We first describe a modification procedure for the original cycle graph ${\mathbb G},$
see Fig.\,\ref{fig:mod}.
\begin{figure}
\begin{center}
\includegraphics[width=.8\textwidth]{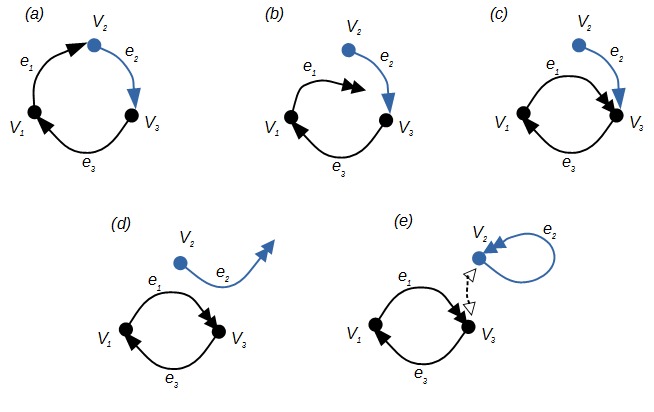}
\end{center}
\caption{{\scshape The graph modification.} {\small The stiff component is in black, the soft component is in blue. Double arrows represent vertices carrying unimodular Datta--Das Sarma weights. Dotted arrowed line between vertices $V_2$ and $V_3$ represents non-local $\e-$dependent interface.}}\label{fig:mod}
\end{figure}
The modified graph $\widetilde{\mathbb G}$ is a two-component graph with
edges
$e_1
\equiv \tilde e_1
:=[0,\e l_1]$,
$e_3\equiv \tilde
e_3:=[0, \e l_3],$ and $e_2:=[0,l_2].$
The edges $e_1$ and $e_3$ are ``glued'' together, forming a cycle with two vertices. Compared to the original graph ${\mathbb G}$ (Fig.\,\ref{fig:mod}(a)), the vertex $V_1$ remains unchanged, whereas the right endpoint $\e l_1$ of the edge $e_1$ disconnects from the vertex $V_2$ (Fig.\,\ref{fig:mod}(b)) and joins
$V_3,$ which is the left endpoint of $e_3$
(Fig.\,\ref{fig:mod}(c)). There is a price to be paid for this: this right endpoint of $e_1$ is then assigned a Datta -- Das Sarma unimodular weight $w_{\rm stiff}:=\exp\bigl({\rm i}(l_1+l_3)\tau\bigr).$
The edge $e_2$ in turn disconnects from the vertex $V_3$ where its right endpoint was attached to in
${\mathbb G}$ (Fig.\,\ref{fig:mod}(d)), and loops backwards to the vertex $V_2$ (Fig.\,\ref{fig:mod}(e)). The loop thus formed is assigned a Datta -- Das Sarma 
weight
\[
w_{\rm soft}=\overline{w_{\rm stiff}}=\exp\bigl(-{\rm i}(l_1+l_3)\tau\bigr)
\]
 at its right endpoint $l_2$. Compared to the graph ${\mathbb G}$, the  weights $1/\sqrt{\e}$ at both endpoints of the weak component are no longer applied.
 Notice also that
the weights $w_{\rm stiff},$ $w_{\rm soft}$ are independent of $\e,$ which is important in view of our aim to obtain an
$\varepsilon$-independent family $A_{\rm hom}^{(\tau)}$ in the estimate (\ref{main_est}).

The operator $\tilde A_{\e}^{(t)}$ is defined by the same differential expression as the operator $\Phi_\varepsilon A_\e^{(t)}\Phi_\varepsilon^*,$ which 
has an $\e$-independent form. The domain of $\tilde A_\e^{(t)},$ however, depends on $\e$ as well as on $k^2$ and is described by the following system of matching conditions (\ref{matching1})--(\ref{matching4}). We always assume $u=(u_1,u_2,u_3)$ with respect to the space decomposition, where $u_2$ is the value on the soft component.

A.
At the vertex $V_1$: standard $\delta$-type matching with the coupling constant equal to zero.

B.
At the vertex $V_3$ (stiff component):
\begin{equation}
u_3(0)=w_{\rm stiff} u_1(\e l_1),\ \ \ \ \ \ \ \  \ \ \ \ \ \ \ \  \ \ \ \ \ \ \ \  \ \ \ \ \ \ \ \  \ \ \ \ \ \ \ \  \ \ \ \ \ \ \ \  \ \ \ \ \ \ \ \  \ \ \ \ \ \ \ \ \ \  \ \ \ \ \ \ \ \  \ \ \ \ \ \ \ \  \ \ \ \ \ \
\label{matching1}
\end{equation}
\vskip -0.55cm
\begin{equation}
\partial^{(t)}u_3\bigr\vert_0-w_{\rm stiff}\partial^{(t)}u_1\bigr\vert_{\e l_1}=\sqrt{\e} k^2(l_1+l_3)w_{\rm stiff} u_2(0).\ \ \ \ \ \ \ \  \ \ \ \ \ \ \ \  \ \ \ \ \ \ \ \  \ \ \ \ \ \ \ \ \ \ \ \ \ \ \ \ \ \ \ \ \
\label{matching2}
\end{equation}

C.
At the vertex $V_2$ (soft component):
\begin{equation}
u_2(0)=w_{\rm soft} u_2(l_2), \ \ \ \  \ \ \ \ \ \ \ \  \ \ \ \ \ \ \ \ \ \ \ \ \ \  \ \ \ \ \ \ \ \  \ \ \ \ \ \ \ \  \ \ \ \ \ \ \ \  \ \
\ \ \ \  \ \ \ \ \ \ \ \  \ \ \ \ \ \ \ \ \ \ \ \ \ \ \ \ \ \ \ \ \ \ \ \
\label{matching3}
\end{equation}
\vskip -0.55cm
\begin{equation}
\partial^{(\tau)}u_2\bigr\vert_0-w_{\rm soft}\partial^{(\tau)}u_2\bigr\vert_{l_2}=\sqrt{\e} k^2(l_1+l_3)w_{\rm soft} u_3(0)
-2 k^2 (l_1+l_3) u_2(0).\ \ \ \ \ \ \ \ \ \ \ \ \ \ \ \ \ \ \ \ \
\label{matching4}
\end{equation}
Clearly, all these conditions are of $\delta$-type, with
$\e$-dependent non-local terms in (\ref{matching2}) and (\ref{matching4}), which link the two components.

The operator $\tilde A_\e^{(t)}$ is written down in terms of the Datta -- Das Sarma boundary triple,
see Section \ref{Datta_section}, for the modified graph
$\widetilde{\mathbb G}.$ It involves Datta -- Das Sarma matching conditions at two of the three graph vertices, namely, $V_2$ (incoming edge endpoint, weight $w_{\rm soft}$) and $V_3$ (incoming edge endpoint, weight $w_{\rm stiff}$).  We denote by $\tilde{\Gamma}_0^{(2)},$ $\tilde{\Gamma}_1^{(2)}$ the corresponding boundary operators and by
$\tilde{B}^{(2)}(z)$ the matrix such that the interface conditions (\ref{matching1})--(\ref{matching4}) are equivalent to
$$\tilde{\Gamma}_1^{(2)}u=\tilde{B}^{(2)}(z)\tilde{\Gamma}_0^{(2)}u.$$
Omitting the details of the calculation for $\tilde{B}^{(2)}(z)$ and for the $M$-matrix $\tilde{M}^{(2)}_\varepsilon(z)$ of the operator
$\tilde A_\e^{(t)}$ with respect to $\tilde{\Gamma}_0^{(2)},$ $\tilde{\Gamma}_1^{(2)}$ (which is analogous to the calculations of Sections \ref{Mmatrixsection} and \ref{section_rescaling}), we claim that
$$
\tilde{M}^{(2)}_\varepsilon(z)-\tilde{B}^{(2)}(z)=\left(\begin{array}{ccc}
-\sqrt{a}k\sum\limits_{j=1,3} \cot\dfrac{k l_j \e}{\sqrt{a}}&0
\\[1.3em]
0& \dfrac{2k(\cos \varepsilon t-\cos k l_2)}{\sin k l_2}+2 k^2(l_1+l_3)
\\[0.6em] {\rm e}^{{\rm i}\e l_3 t}\sqrt{a}k\sum\limits_{j=1,3}\biggl(\sin\dfrac{k l_j \e}{\sqrt{a}}\biggr)^{-1}&-{\rm e}^{{\rm i}\e(l_1+l_3)t}k^2 (l_1+l_3)\sqrt{\e}
\end{array}\right.\ \ \ \ \ \ \ \ \ \ \ \ \ \ \ \ \ \ \ \ \ \ \ \ \ \ \ \ \ \ \ \ \
$$
\begin{equation}
\ \ \ \ \ \ \ \ \ \ \ \ \ \ \ \ \ \ \ \ \ \ \ \ \ \ \ \ \ \ \ \ \ \ \ \ \ \ \ \ \ \ \ \ \ \ \ \ \ \ \ \ \ \ \ \ \ \ \ \ \ \ \ \ \ \ \ \ \ \ \ \ \ \ \ \ \ \ \ \ \left.\begin{array}{c}
{\rm e}^{-{\rm i}\e l_3 t}\sqrt{a}k\sum\limits_{j=1,3}\biggl(\sin\dfrac{k l_j \e}{\sqrt{a}}\biggr)^{-1}\\[1.5em] -{\rm e}^{-{\rm i}\e(l_1+l_3)t}k^2 (l_1+l_3)\sqrt{\e}\\
[0.7em]-\sqrt{a}k\sum\limits_{j=1,3} \cot\dfrac{k l_j \e}{\sqrt{a}}
\end{array}\right).
\label{MBtilde_diff}
\end{equation}
We
argue that the difference between the resolvent of
$\Phi_\varepsilon A_\e^{(t)}\Phi_\varepsilon^*$ and the generalised resolvent of $\tilde A_\e^{(t)}$
is
of order $O(\e^2)$ in the operator-norm sense, up to a
``correcting" operator, which
 takes into account the
difference between the kernels of the
$\Phi_\varepsilon A_\e^{(t)}\Phi_\varepsilon^*$ and $\tilde A_\e^{(t)}$ on the stiff component and is
$O(\e^2)$-close to a rank-one operator multiplied by $z^{-1}.$
Once the mentioned estimate is obtained,
 it is possible to
 eliminate $\e$  from the domain description of the operator $\tilde A_\e^{(t)},$ which
 can therefore be viewed as intermediate from the perspective of homogenisation. 
We keep this
step
explicit, owing to the fact that the resolvent estimate in this form does not require the assumption that the spectral parameter belongs to a compact set. It therefore shows what happens during the transition from the ``classical'' homogenisation regime to the ``high-frequency''  regime, when
 the norm of the correcting rank-one operator discussed above goes to zero as $\e\to0.$
 In the present paper
we refrain from
discussing the related details and
assume that the spectral parameter $z$ belongs to a compact set $K\subset \mathbb C$. We point out that in the
transition regime
the error estimates in the statements given at the end of the present section are changed accordingly, which will be studied
elsewhere.

In order that the Kre\u\i n formula of Section \ref{sect:Krein} be applicable, we must
ensure that the spectral parameter is away from the zeroes of the denominator. Let $S_{\text{hom}}^{(t)}$ be the limiting spectrum of the family $A_\e^{(t)}$ described by \eqref{CC}, and let $S_\infty$ be the set of eigenvalues of the Dirichlet boundary-value problem of the operator $-d^2/dx^2$ on the soft component $e_2,$
\emph{i.e.} the set of points $z>0$ such that $\sin \sqrt{z} l_2=0$. Setting ({\it cf.} (\ref{z_set}))
\begin{equation}\label{eq:Srho}
S^{(t)}:=S_{\text{hom}}^{(t)}\cup S_\infty \cup \{0\}, \quad \quad S_{K,\rho}^{(t)}:=\bigl\{z\in K\,|\,\text{dist}\,\bigl(z,S^{(t)}\bigr)\geq \rho>0\bigr\},
\end{equation}
the following theorem holds.

\begin{theorem}
\label{comparison_theorem}
Denote
\begin{equation}
{\mathcal X}^{(t)}(x):=\begin{cases}{\rm e}^{-{\rm i}tx}, & x\in e_1,\\[0.2em]
                              {\rm e}^{{\rm i}t(\e l_3-x)}, & x\in e_3,
                              \end{cases}
\label{Chi_capital}
\end{equation}
\begin{equation}
L_{\rm stiff}^2:=L^2(e_1)\oplus L^2(e_3),
\label{L2s_def}
\end{equation}
and consider the $z$-dependent linear operator $C^{(t)}$ on $H$ given by
\begin{equation}\label{Ct_def1}
C^{(t)} \begin{pmatrix} f_1 \\ f_2 \\ f_3 \end{pmatrix}:=
\begin{pmatrix}
P_{e_1}
\\[0.2em]
0\\[0.2em]
P_{e_3}
\end{pmatrix}\widehat C^{(t)}\begin{pmatrix} f_1 \\ f_3 \end{pmatrix},
\ \ \ \ \ \ \ \ \ \widehat C^{(t)}[\cdot]:=\bigl(\e z(l_1+l_3)\bigr)^{-1}\bigl\langle \cdot, {\mathcal X}^{(t)}\bigr\rangle_{L^2_{\rm stiff}}{\mathcal X}^{(t)},
\end{equation}
where $P_{e_j}$
is the orthogonal protection onto $L^2(e_j),$
$j=1,3.$
Then the following estimate holds:
\begin{equation}\label{eq:intermediate}
\bigl\|\Phi_\varepsilon \bigl(A_\e^{(t)}-z\bigr)^{-1}\Phi_\varepsilon^*-\bigl(\tilde A_\e^{(t)} -z\bigr)^{-1}-C^{(t)}\bigr\|=O(\e^2),
\end{equation}
uniformly with respect to $t\in[0,2\pi\varepsilon^{-1})$ for all $z\in S_{K,\rho}^{(t)},$ and therefore, as is seen from the explicit expression for $\bigl(M^{(2)}_\varepsilon(z)-\tilde{B}^{(2)}(z)\bigr)^{-1}$ below, away from the set of singularities of the generalised resolvent $\bigl(\tilde A_\e^{(t)}-z\bigr)^{-1}$.
\end{theorem}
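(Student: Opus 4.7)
The plan is to apply the Kre\u\i n resolvent formula (Proposition~\ref{prop:Krein}) to both $\Phi_\e A_\e^{(t)}\Phi_\e^*$ and $\tilde A_\e^{(t)}$ with the same reference extension $A_\infty,$ namely the orthogonal direct sum of Dirichlet Laplacians on the three edges $e_1, e_2, e_3.$ This common choice is possible in spite of the topological difference between the graphs $\mathbb G$ and $\widetilde{\mathbb G},$ because in each of the two Datta--Das Sarma triples the condition $\Gamma_0 u = 0$ reduces, thanks to $w_k \ne 0,$ to $u(x_k) = 0$ at every endpoint, irrespective of how endpoints are grouped into vertices. Subtracting the two Kre\u\i n identities, the $(A_\infty-z)^{-1}$ terms cancel and the estimate (\ref{eq:intermediate}) reduces to
\[
\gamma(z)\bigl(-M^{(2)}_\e(z)\bigr)^{-1}\gamma^*(\bar z) - \tilde\gamma(z)\bigl(\tilde M^{(2)}_\e(z)-\tilde B^{(2)}(z)\bigr)^{-1}\tilde\gamma^*(\bar z) = C^{(t)} + O(\e^2),
\]
uniformly in $t\in[0,2\pi\e^{-1})$ and $z\in S_{K,\rho}^{(t)},$ where $\gamma,\tilde\gamma$ are the solution operators attached to the two triples.

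The second step is an asymptotic expansion $M^{(2)}_\e(z)=\e^{-1}N_0(t)+\e N_1(z,t)+O(\e^3),$ obtained by Taylor-expanding in (\ref{M2form}) the trigonometric functions of small argument $k\e l_j/\sqrt a$ while keeping the phases ${\rm e}^{\pm{\rm i}\e l_j t}$ intact (as $\e t$ ranges over all of $[0,2\pi)$). A direct calculation shows that the leading matrix $N_0(t)$ is singular: the vector $v_0(t):=(1,\,{\rm e}^{-{\rm i}\e l_1 t},\,{\rm e}^{{\rm i}\e l_3 t})^\top$ lies in its kernel, reflecting the zero-energy magnetic mode on the decoupled stiff cycle. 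The distance condition (\ref{eq:Srho}) --- through the role of $S^{(t)}_{\rm hom}$ in the dispersion relation (\ref{CC}) and of $S_\infty$ in $\sin k l_2$ --- guarantees both that $N_0(t)$ is uniformly non-degenerate on $v_0(t)^\perp$ and that the Feshbach scalar $\langle N_1(z,t)v_0(t),v_0(t)\rangle,$ an explicit computation of which identifies it with $-z(l_1+l_3)\|v_0(t)\|^2,$ stays uniformly bounded away from zero. Consequently $(M^{(2)}_\e)^{-1}$ decomposes as a rank-one $O(\e^{-1})$ contribution in the direction $v_0\otimes v_0^*$ plus an $O(\e)$ remainder. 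A parallel Feshbach reduction of (\ref{MBtilde_diff}) --- whose Datta--Das Sarma weights $w_{\rm stiff},w_{\rm soft}$ and $\sqrt\e$ stiff-soft cross blocks have been designed precisely so that the corresponding $O(\e^{-1})$ singular contribution is absent from the generalised resolvent of $\tilde A_\e^{(t)}$ --- yields an inverse whose $O(\e)$ part matches the $O(\e)$ remainder of $(M^{(2)}_\e)^{-1}$ modulo $O(\e^3).$

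In parallel one expands the $\gamma$-fields on each edge using the general solution (\ref{genform}): on the stiff edges $e_1,e_3$ the $\gamma(z)$-image of a boundary datum is, modulo $O(\e^2),$ the magnetically twisted linear interpolation between the two endpoints, so in particular $\gamma(z)v_0(t)=\mathcal X^{(t)}+O(\e^2)$ in $L^2_{\rm stiff};$ on the soft edge $e_2$ the $\gamma$-field is $\e$-independent. Contracting the rank-one $O(\e^{-1})$ piece of $(M^{(2)}_\e)^{-1}$ on both sides by $\gamma(z)$ and $\gamma^*(\bar z)$ then yields exactly the operator $C^{(t)}$ of (\ref{Ct_def1}), with the prefactor $\bigl(\e z(l_1+l_3)\bigr)^{-1}$ arising from the Feshbach scalar together with $\|\mathcal X^{(t)}\|^2_{L^2_{\rm stiff}}=\e(l_1+l_3);$ the $O(\e)$ remainders of the two matrix inverses, contracted with the respective $\gamma$-fields, agree to $O(\e^2),$ yielding (\ref{eq:intermediate}). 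The main obstacle is precisely the uniformity in $t\in[0,2\pi\e^{-1}):$ the phases ${\rm e}^{\pm{\rm i}\e l_j t}$ forbid any Taylor expansion, forcing $v_0(t)$ and hence the correcting operator to carry their full $t$-dependence through $\mathcal X^{(t)},$ and obliging all matrix manipulations above to be carried out with these phases treated as $O(1)$ unknowns --- the distance condition on $z$ being the minimal hypothesis ensuring that the Feshbach denominators remain uniformly invertible throughout this range.
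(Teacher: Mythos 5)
Your overall strategy --- Kre\u\i n's formula for both operators with the common Dirichlet decoupling $A_\infty$, subtraction to cancel the $(A_\infty-z)^{-1}$ terms, and asymptotic inversion of the two $M$-matrices --- matches the paper, and your kernel vector $v_0(t)=(1,\,{\rm e}^{-{\rm i}\e l_1 t},\,{\rm e}^{{\rm i}\e l_3 t})^\top$ of the $\e^{-1}$ part of $M^{(2)}_\e$ is correct (it is the range direction in Lemma \ref{asymptotic_lemma}(1)). However, the mechanism you propose for producing $C^{(t)}$ is wrong on three linked counts. The Feshbach scalar is in fact $\langle N_1 v_0, v_0\rangle = k^2(l_1+l_3)-2k\cot k l_2 + 2k\csc(kl_2)\cos(\e t) = D(k)$, the Cherednichenko--Cooper dispersion symbol, \emph{not} $-z(l_1+l_3)\|v_0\|^2$ --- this is why the distance condition must keep $z$ away from $S^{(t)}_{\rm hom}$ (the zero set of $D$), not merely away from $0$. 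The $O(\e^{-1})$ singular part of $(\tilde M^{(2)}_\e - \tilde B^{(2)})^{-1}$ is \emph{not} absent: Lemma \ref{asymptotic_lemma}(2) gives a non-zero $\e^{-1}M_{-1}(z)$ with prefactor $(k^2(l_1+l_3))^{-1}+D(k)^{-1}$, supported on the stiff index pair $\{1,3\}$. Consequently your attribution of $C^{(t)}$ to the contraction of $(M^{(2)}_\e)^{-1}$'s singular block has the direction backwards: the $D(k)^{-1}$ pieces of the two singular blocks \emph{cancel}, and the corrector is exactly the leftover $(k^2(l_1+l_3))^{-1}$ contribution coming from $\tilde A_\e^{(t)}$; sandwiching that piece between the stiff-side solution operators is what reproduces the prefactor $(\e z(l_1+l_3))^{-1}$ of (\ref{Ct_def1}). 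Contracting $(\e D(k))^{-1}v_0v_0^*$ instead produces a rank-one operator with prefactor $D(k)^{-1}$, which in any case disappears upon subtraction.

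Beyond these mis-identifications, your sketch omits the quantitative mechanism that actually closes the $O(\e^2)$ bound. The paper's proof proceeds edge by edge after splitting the input $f$ into stiff and soft parts, and the crucial cancellations come from two $\sqrt\e$-smallness sources that never appear in your account: (i) the estimate $\gamma_j=O(\sqrt\e)$ for the $\Gamma_1$-traces of the stiff Dirichlet resolvent, obtained from an explicit kernel computation, and (ii) the $O(\sqrt\e)$ $L^2$-norm factor contributed by the stiff edges of length $\sim\e$. Their product tames the $\e^{-1}$ matrix singularities into $O(\e^2)$ residuals; without it, matching the $O(\e)$ parts of the two matrix inverses does not by itself yield the claimed order of approximation of the resolvents, which must also account for the $O(\e^{-1/2})$-sized cross blocks in $M_0(z)$ and the $z$-dependent denominators staying uniformly away from zero for $z\in S^{(t)}_{K,\rho}$.
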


\begin{proof}

We start with the following lemma.

\begin{lemma}
\label{asymptotic_lemma}
1. The inverse of the matrix $M^{(2)}_\varepsilon(z),$ see (\ref{M2form}), has the following asymptotics as $\e\to 0$:
$$
\bigl(M^{(2)}_\varepsilon(z)\bigr)^{-1}=\frac{1}{\e D(k)}
\begin{pmatrix}
1& {\rm e}^{{\rm i}\e l_1 t} & {\rm e}^{-{\rm i}\e l_3 t}\\[0.4em]
{\rm e}^{-{\rm i}\e l_1 t} & 1 & {\rm e}^{-{\rm i}\e (l_1+l_3)t}\\[0.4em]
{\rm e}^{{\rm i}\e l_3 t}& {\rm e}^{{\rm i}\e (l_1+l_3)t}&1
\end{pmatrix}+ O(\e),
$$
uniformly with respect to $t\in[0,2\pi\varepsilon^{-1})$ for all $z\in S_{K,\rho}^{(t)}$, where
\[
D(k):=k^2(l_1+l_3)-2k\cot k l_2 + \frac{2k\cos t}{\sin k l_2},
\]
and the matrix defining the leading term of order  $O(1/\e)$  is a rank-one matrix with the range spanned by the eigenvector $\bigl(e^{-{\rm i}\e l_3 t}, e^{-{\rm i} \e (l_1+l_3)t},1\bigr)^\top$ corresponding to the eigenvalue 3.


2. For the inverse of the matrix $\tilde M^{(2)}_\varepsilon(z)-\tilde{B}^{(2)}(z),$ see (\ref{MBtilde_diff}), one has
$$
\bigl(\tilde M^{(2)}_\varepsilon(z)-\tilde{B}^{(2)}(z)\bigr)^{-1}=
\varepsilon^{-1}M_{-1}(z) + M_0(z)+ \e M_1(z)+O\bigl(\e^2\bigr)M_0(z)+O(\e^2),\ \ \ \varepsilon\to0,
$$
uniformly with respect to $t\in[0,2\pi\varepsilon^{-1})$ for all $z\in S_{K,\rho}^{(t)}$, where
$$M_{-1}(z):=\left(\frac 1 {k^2(l_1+l_3)}+\frac 1{D(k)} \right)
\begin{pmatrix}
   1&0&{\rm e}^{-{\rm i}\e l_3 t}\\[0.3em]
   0&0&0\\[0.3em]
   {\rm e}^{{\rm i}\e l_3 t}&0&1
   \end{pmatrix},
$$
$$
M_0(z):= \frac{1}{D(k)}
            \begin{pmatrix}
            0&\dfrac{1}{\sqrt{\e}}{\rm e}^{{\rm i}\e l_1 t}&0\\[0.3em]
            \dfrac{1}{\sqrt{\e}}{\rm e}^{-{\rm i}\e l_1 t}&1&\dfrac{1}{\sqrt{\e}}{\rm e}^{-{\rm i}\e (l_1+l_3)t}
            \\[0.3em] 0&\dfrac{1}{\sqrt{\e}}{\rm e}^{{\rm i}\e (l_1+l_3) t}&0
            \end{pmatrix},
$$
and the matrix $M_1(z)$ has
all but the four corner elements vanishing.

\end{lemma}

\begin{proof} This is the result of a direct
calculation.
\end{proof}

In order to compare the two resolvents, we use the Kre\u\i n resolvent formula of Proposition \ref{prop:Krein} twice, namely for $\Phi_\varepsilon A_\varepsilon^{(t)}\Phi_\varepsilon^*$ and $\tilde A_\e^{(t)},$ as well as the observation that
in both cases
the ``reference operator'' $A_\infty$ is
the same Dirichlet decoupling: on each edge $e_j$ of both ${\mathbb G}$ and $\widetilde{\mathbb G}$ it is the differential operator defined by the corresponding differential expression subject to Dirichlet conditions at both endpoints, $u_j(0)=u_j(\e l_j)=0$ for $j=1,3,$  or $u_2(0)=u_2(l_2)=0$. Note that the
operator $B,$ see Definition \ref{definition1_1}, for
$A_\e^{(t)}$ with respect to the triple of
Section \ref{section_rescaling} is the zero matrix, and hence the matrix $-M^{(2)}_\varepsilon(z)$ plays the role of
the operator $B(z)-M(z)$ in the corresponding Kre\u\i n formula.

We consider three cases for the form of the argument of the resolvents, as follows.

I. First, we apply the two mentioned resolvents to functions
$f=(0,f_2,0)^\top$. Then $(A_\infty - z)^{-1}f=(0,v,0)^\top$ and
\begin{gather*}
\Gamma_1^{(2)}\begin{pmatrix} 0\\ v\\ 0\end{pmatrix} =
\sqrt{\e} \begin{pmatrix} 0\\ \partial^{(\tau)}_nv\bigr|_0 \\ \partial^{(\tau)}_n v\bigr|_{l_2} \end{pmatrix}=:\sqrt{\e} \begin{pmatrix} 0\\ \alpha_2 \\ \beta_2 \end{pmatrix},\\
\tilde \Gamma_1^{(2)}\begin{pmatrix} 0\\ v\\ 0 \end{pmatrix}=
\begin{pmatrix}
0\\
\dntau v\bigr|_0+{\rm e}^{-{\rm i}\e (l_1+l_3)t}\dntau v\bigr|_{l_2}\\
0
\end{pmatrix}=
\begin{pmatrix}
0\\
\alpha_2 +{\rm e}^{-{\rm i}\e(l_1+l_3)t}\beta_2\\
0
\end{pmatrix}=:
\begin{pmatrix}
0\\ \gamma_2 \\ 0
\end{pmatrix}.
\end{gather*}
Using Lemma \ref{asymptotic_lemma}, we obtain:
\begin{gather*}
\bigl(M^{(2)}_\varepsilon(z)\bigr)^{-1}\Gamma_1^{(2)} \begin{pmatrix}0\\ v \\ 0 \end{pmatrix}=
\frac{1}{\sqrt{\e}D(k)}
\begin{pmatrix}
\gamma_2{\rm e}^{{\rm i}\e l_1 t} \\[0.3em]
\gamma_2 \\[0.3em]
\gamma_2{\rm e}^{{\rm i}\e (l_1+l_3)t}
\end{pmatrix}
+\begin{pmatrix}
O(\varepsilon^{3/2})\\[0.3em]
O(\varepsilon^{3/2})\\[0.3em]
O(\varepsilon^{3/2})
\end{pmatrix},\\
\bigl(\tilde{M}^{(2)}_\varepsilon(z)-\tilde{B}^{(2)}(z)\bigr)^{-1}\tilde{\Gamma}_1^{(2)}\begin{pmatrix}0\\ v \\ 0 \end{pmatrix}=
M_0\tilde{\Gamma}_1^{(2)}\begin{pmatrix}0\\ v \\ 0 \end{pmatrix} +\dots =
\frac{1}{D(k)}
\begin{pmatrix}
\dfrac{\gamma_2}{\sqrt{\e}}{\rm e}^{{\rm i}\e l_1 t} \\[0.3em]
\gamma_2 \\[0.3em]
\dfrac{\gamma_2}{\sqrt{\e}}{\rm e}^{{\rm i}\e (l_1+l_3)t}
\end{pmatrix}+
\begin{pmatrix}
O(\varepsilon^{3/2})\\[0.3em]
O(\varepsilon^2)\\[0.3em]
O(\varepsilon^{3/2})
\end{pmatrix}.
\end{gather*}
It remains to apply the solution operators $\gamma(z)$ and $\tilde\gamma(z)$ of Proposition \ref{prop:Krein},  pertaining to the boundary triples of operator families $A_\e^{(t)}$ and $\tilde A_\e^{(t)}$, respectively. This amounts to comparing solutions to three pairs of boundary-value problems, on $e_1$, $e_2,$ and $e_3.$

(a) Solutions on $e_2.$
Due to the definitions of boundary triples, to the leading order in each case one solves boundary-value problems  with the boundary data
\[
u_2(0)=\frac{\gamma_2}{D(k)},\ \ \ \ \ u_2(l_2)=\frac{\gamma_2}{D(k)}{\rm e}^{{\rm i}\e(l_1+l_3)t},
\]
with an error of order $O(\varepsilon^2)$ between the contributions to the resolvents $\bigl(A_\varepsilon^{(t)}-z\bigr)^{-1}$ and $\bigl(\tilde A_\varepsilon^{(t)}-z\bigr)^{-1}.$

(b) Solutions on $e_1.$
In both cases, to the leading order one gets the solution to the boundary-value problems with the data
\[
u_1(0)=\frac{\gamma_2}{\sqrt{\e}D(k)}{\rm e}^{{\rm i}\e l_1 t} ,\ \ \ \ \ \
u_1(\e l_1)=\frac{\gamma_2}{\sqrt{\e}D(k)}.
\]

(c) In the case of $e_3,$
to the leading order one also gets the same solution for both $A_\e^{(t)}$ and $\tilde A_\e^{(t)},$ which is fixed by
\[
u_3(0)=\frac{\gamma_2}{\sqrt{\e}D(k)}{\rm e}^{{\rm i}\e (l_1+l_3) t} ,\ \ \ \
u_3(\e l_3)=\frac{\gamma_2}{\sqrt{\e} D(k)}{\rm e}^{{\rm i}\e l_1 t}.
\]
In the cases (b), (c)
(stiff component), the error between the actions of
the resolvents
$\bigl(A_\varepsilon^{(t)}-z\bigr)^{-1}$ and $\bigl(\tilde A_\varepsilon^{(t)}-z\bigr)^{-1}$ is of the order
$O(\varepsilon^2).$ Indeed, the pointwise error is of the order $O(\varepsilon^{3/2}),$ and $e_1,$ $e_3$ have lengths proportional to $\e$.

II. Now consider vectors $f=(f_1,0,0)^\top$. Denoting
$(A_\infty - z)^{-1}f=(v,0,0)^\top,$ one has
\begin{equation*}
\Gamma_1^{(2)}\begin{pmatrix} v\\0 \\ 0\end{pmatrix} =
 \begin{pmatrix} \dnt v\bigr|_0\\ \dnt v\bigr|_{\e l_1}\\ 0 \end{pmatrix}=: \begin{pmatrix} \alpha_1 \\ \beta_1 \\0 \end{pmatrix},
 \quad
\tilde \Gamma_1^{(2)} \begin{pmatrix} v\\ 0 \\ 0\end{pmatrix}=
\begin{pmatrix}
\dnt v\bigr|_0\\
0\\
{\rm e}^{{\rm i}\e (l_1+l_3)t} \dnt v\bigr|_{\e l_1}
\end{pmatrix}=
\begin{pmatrix}
\alpha_1\\
0\\
{\rm e}^{{\rm i}\e (l_1+l_3)t} \beta_1
\end{pmatrix}.
\end{equation*}
Denoting $\gamma_1:=\alpha_1 + \beta_1{\rm e}^{{\rm i}\e l_1t}$ and using Lemma \ref{asymptotic_lemma} again, we obtain:
\begin{gather*}
\bigl(M^{(2)}_\varepsilon(z)\bigr)^{-1}\Gamma_1^{(2)}\begin{pmatrix}v\\ 0 \\ 0 \end{pmatrix}=
\frac{1}{\e D(k)}
\begin{pmatrix}
\alpha_1 + \beta_1{\rm e}^{{\rm i}\e l_1t} \\[0.4em]
\alpha_1{\rm e}^{-{\rm i}\e l_1t} +\beta_1 \\[0.4em]
\alpha_1{\rm e}^{{\rm i}\e l_3 t} + \beta_1{\rm e}^{{\rm i}\e (l_1+l_3)t}
\end{pmatrix}
+\dots
=\frac{1}{\e D(k)}
\begin{pmatrix}
\gamma_1 \\[0.3em] \gamma_1{\rm e}^{-{\rm i}\e l_1 t}\\[0.3em] \gamma_1{\rm e}^{{\rm i} \e l_3 t}
\end{pmatrix}+\gamma_1 O(\e),
\\
\bigl(\tilde{M}^{(2)}_\varepsilon(z)-\tilde{B}^{(2)}(z)\bigr)^{-1}\tilde{\Gamma}_1^{(2)}\begin{pmatrix}v\\ 0 \\ 0 \end{pmatrix}=
\frac 1\e \left(\frac 1{k^2(l_1+l_3)}+ \frac{1}{D(k)} \right)
\begin{pmatrix}
 \gamma_1 \\[0.3em]
0 \\[0.3em]
\gamma_1{\rm e}^{{\rm i}\e l_3t}
\end{pmatrix}
+\frac 1{D(k)}
\begin{pmatrix}
0
\\[0.3em]
\dfrac{\gamma_1}{\sqrt{\e}}{\rm e}^{-{\rm i}\e l_1 t}\\[0.3em]
0
\end{pmatrix}
+\gamma_1 \begin{pmatrix}
O(\e)\\[0.4em]
O(\e^{3/2})\\[0.35em]
O(\e)
\end{pmatrix}.
\end{gather*}
In contrast to $\gamma_2$ in the case considered above, the coefficient $\gamma_1$ is
of the order $O(\sqrt{\e})$ rather than $O(1)$. Indeed,
the operator
$(A_\infty - z)^{-1}$
on $L^2(e_1)$
is simply
the resolvent of the self-adjoint Dirichlet operator $L_D$ defined by the differential expression
\[
a\biggl(\dfrac {1}{\rm i}\frac{d}{dx}+t\biggr)^2
\]
on
$L^2(e_1)$. It is an integral operator with a kernel $R(x,y;k)$ that can be found by the classical method of \cite{Sargsyan}, \cite{Marchenko} combined with the unitary elimination of the ``magnetic potential'' $t\sqrt{a}.$ Namely, let $A_D$ be the Dirichlet operator on the same space defined by the expression $-a({d^2}/{dx^2}),$
and let $\Phi$ be the unitary transformation
$(\Phi u)(x)={\rm e}^{-{\rm i}tx}u(x)$. Then
$L_D=\Phi A_D \Phi^*,$ and hence $(L_D-z)^{-1}=\Phi (A_D-z)^{-1} \Phi^*.$ The resolvent of $A_D$ is well-known, see {\it e.g.} \cite{Sargsyan}: it is the integral operator with kernel
$$
R_A(x,y;k)=\biggl(\sqrt{a}k\sin\dfrac{k \e l_1}{\sqrt{a}}\biggr)^{-1}\begin{cases}
\sin\dfrac{kx}{\sqrt{a}}\sin\dfrac {k(\e l_1-y)}{\sqrt{a}}, & x<y,\\[0.9em]
\sin\dfrac{ky}{\sqrt{a}}\sin\dfrac {k(\e l_1-x)}{\sqrt{a}}, & x>y.
\end{cases}
$$
Using the fact that $R(x,y;k)={\rm e}^{-{\rm i}tx}R_A(x,y;k){\rm e}^{{\rm i}ty},$ it follows that
\begin{multline*}
\dt\bigl((L_D-z)^{-1}f\bigr)(x)=
-{\rm e}^{-{\rm i}tx}\biggl(\sin \frac{k\e l_1}{\sqrt{a}}\biggr)^{-1}
\left[\cos\frac{k(\e l_1-x)}{\sqrt{a}}\int_0^x \sin\frac{ky}{\sqrt{a}}{\rm e}^{{\rm i}ty}f(y)dy\right.\\
-\left.\cos\frac{k x}{\sqrt{a}}\int_x^{\e l_1} \sin\frac{k(\e l_1-y)}{\sqrt{a}}{\rm e}^{{\rm i}ty}f(y)dy
\right].
\end{multline*}
Substituting trigonometric functions by the leading-order terms, as $\varepsilon\to0,$ of their power series
yields
\[
\dt\bigl((L_D-z)^{-1}f\bigr)(x)=
-\frac{1}{\e l_1}{\rm e}^{-{\rm i}tx}\left[\int_0^{\e l_1} y{\rm e}^{{\rm i}ty} f(y) dy -\e l_1 \int_x^{\e l_1}{\rm e}^{{\rm i}ty} f(y) dy\right]\bigl(1+O(\e^2)\bigr)+O(\e^{5/2})\|f\|,
\]
and therefore
\begin{equation*}
\gamma_1=\dt u\bigr|_0-e^{{\rm i}\e l_1 t}\dt u\bigr|_{\e l_1}=\int_0^{\e l_1}{\rm e}^{{\rm i}ty} f(y) dy\bigl(1+O(\e^2)\bigr)+O(\e^{5/2})\|f\|\ \ \ \ \ \ \ \ \ \ \ \ \ \ \ \ \ \ \ \ \ \ \ \ \ 
\end{equation*}
\begin{equation}
\label{eq:gamma}
\ \ \ \ \ \ \ \ \ \ \ \ \ \ \ \ \ \ \ \ \ \ \ \ \ \ \ \ \ \ \ \ \ \ \ \ \ \ \ \ \ \ \ \ \ \ \ \ \ \ =\bigl\langle f,{\rm e}^{-{\rm i}ty}\bigr\rangle_{L^2(e_1)}\bigl(1+O(\e^2)\bigr)+O(\e^{5/2})\|f\|.
\end{equation}
Notice that by the Kre\u\i n resolvent formula the term $O(\e^{5/2})\|f\|$ contributes an error of order $O(\e^2)$ in the resolvent estimate and can therefore be discarded. An application of the Schwartz inequality
yields $\gamma_1=O(\sqrt{\e})$, as claimed. It again remains to apply the operators $\gamma(z)$ and $\tilde\gamma(z).$ 

(a) Solutions on $e_2.$
Due to the definitions of the boundary triples, to the leading order in each case one solves boundary-value problems with boundary data
\[
u_2(0)=\frac{\gamma_1}{\sqrt{\e}D(k)}{\rm e}^{-{\rm i}\e l_1 t},\ \ \ \ \
u_2(l_2)=\frac{\gamma_1 }{\sqrt{\e}D(k)}{\rm e}^{{\rm i}\e l_3t}.
\]
with an error of order $O(\varepsilon^2)$ between the contributions to the resolvents $\bigl(A_\varepsilon^{(t)}-z\bigr)^{-1}$ and $\bigl(\tilde A_\varepsilon^{(t)}-z\bigr)^{-1}.$

(b) Solutions on $e_1$
In the case of $\tilde A_\e^{(t)}$, to the leading order one solves the boundary-value problem
with data
\[
u_1(0)=\frac{\gamma_1}{\e}\left( \frac 1 {k^2 (l_1+l_3)} +\frac 1 {D(k)} \right),\ \ \ \ \
u_1(\e l_1)=\frac{\gamma_1}{\e}\left( \frac 1 {k^2 (l_1+l_3)} +\frac 1 {D(k)} \right){\rm e}^{-{\rm i} \e l_1 t},
\]
whereas in the case of $A_\e^{(t)}$ the boundary data to the leading order are
\[
u_1(0)=\frac{\gamma_1}{\e D(k)},\ \ \ \ \ \ \ \ \ \
u_1(\e l_1)=\frac{\gamma_1}{\e D(k)}{\rm e}^{-{\rm i} \e l_1 t}.
\]
Clearly, a correcting boundary-value problem appears, for the
``stiff component to stiff component'' action of the intermediate generalised resolvent only.

(c) Solutions on $e_3.$
As in (b) above, a correcting boundary-value problem appears, which has the same form. Indeed, in the case of $\tilde A_\e^{(t)}$, to the leading order one solves the boundary-value problem
with boundary data
$$
u_3(0)=\frac{\gamma_1}{\e}\left( \frac 1 {k^2 (l_1+l_3)} +\frac 1 {D(k)} \right){\rm e}^{{\rm i} \e l_3 t},\ \ \ \ \ \ \
u_3(\e l_3)=\frac{\gamma_1}{\e}\left( \frac 1 {k^2 (l_1+l_3)} +\frac 1 {D(k)} \right),
$$
whereas in the case of $A_\e^{(t)}$ one has
\[
u_3(0)=\frac{\gamma_1}{\e D(k)}{\rm e}^{{\rm i}\e l_3 t},\ \ \ \ \ \ \ \
u_3(\e l_3)=\frac{\gamma_1}{\e D(k)}.
\]
In the cases (b), (c),
the error between the actions
the resolvents $\Phi_\e (A_\varepsilon^{(t)}-z)^{-1} \Phi_\e^*$ and $(\tilde A_\varepsilon^{(t)}-z)^{-1},$ up to the correcting term mentioned above, is of the order
$O(\varepsilon^2),$ due to the pointwise error being of the order $O(\varepsilon^{3/2}).$ Here we again use the fact that $e_1$ and $e_3$ have lengths proportional to $\e,$ as well as the above estimate for $\gamma_1.$

III. Finally, in the case $f=(0,0,f_3)^\top$ one has
\begin{equation*}
\Gamma_1^{(2)}\begin{pmatrix} 0\\0 \\ v\end{pmatrix} =
 \begin{pmatrix} \dnt v\bigr|_{\e l_3}\\ 0 \\ \dnt v\bigr|_0 \end{pmatrix}=: \begin{pmatrix} \beta_3 \\ 0 \\ \alpha_3 \end{pmatrix},\ \ \ \ \ \
\tilde \Gamma_1^{(2)}\begin{pmatrix} 0\\ 0 \\ v\end{pmatrix}=
\begin{pmatrix}
\dnt v\bigr|_{\e l_3}\\
0\\
 \dnt v\bigr|_0
\end{pmatrix}=
\begin{pmatrix}
\beta_3\\
0\\
\alpha_3
\end{pmatrix},
\end{equation*}
where we set $(A_\infty - z)^{-1}f=:(0,0,v)^\top.$
Denoting $\gamma_3:=\beta_3 + \alpha_3{\rm e}^{-{\rm i}\e l_3t}$ and using Lemma \ref{asymptotic_lemma}, we obtain:
\begin{gather*}
\bigl(M^{(2)}_\varepsilon(z)\bigr)^{-1}\Gamma_1^{(2)}\begin{pmatrix}0\\ 0 \\ v \end{pmatrix}=
\frac{1}{{\e}D(k)}
\begin{pmatrix}
\beta_3 + \alpha_3{\rm e}^{-{\rm i}\e l_3t} \\[0.4em]
\beta_3{\rm e}^{-{\rm i}\e l_1t} +\alpha_3{\rm e}^{-{\rm i}\e(l_1+l_3)t} \\[0.4em]
\beta_3{\rm e}^{{\rm i}\e l_3 t}+\alpha_3
\end{pmatrix}
+\dots
=\frac 1 {\e D(k)}\begin{pmatrix}
\gamma_3 \\[0.4em] \gamma_3{\rm e}^{-{\rm i}\e l_1 t} \\[0.4em] \gamma_3{\rm e}^{{\rm i} \e l_3 t}
\end{pmatrix}+\gamma_3 O(\e),
\\
\bigl(\tilde{M}^{(2)}_\varepsilon(z)-\tilde{B}^{(2)}(z)\bigr)^{-1}\tilde{\Gamma}_1^{(2)}\begin{pmatrix}0\\ 0 \\ v \end{pmatrix}=
\frac 1\e \left(\frac 1{k^2(l_1+l_3)}+ \frac{1}{D(k)} \right)
\begin{pmatrix}
 \gamma_3 \\[0.4em]
0 \\[0.4em]
\gamma_3 {\rm e}^{{\rm i}\e l_3t}
\end{pmatrix}
+\frac 1{D(k)}
\begin{pmatrix}
0\\[0.4em]
\dfrac{\gamma_3}{\sqrt{\e}}{\rm e}^{-{\rm i}\e l_1 t}\\[0.4em]
0
\end{pmatrix}
+\gamma_3 \begin{pmatrix}
O(\e)\\
O(\e^{3/2})\\
O(\e)
\end{pmatrix}.
\end{gather*}
An argument similar to the case of $\gamma_1$
yields the estimate $\gamma_3=O(\sqrt{\e})$. We now apply the operators $\gamma(z)$ and $\tilde\gamma(z).$

(a) Solutions on $e_2.$
Due to the definitions of the boundary triples, in both cases to the leading order one solves
the boundary-value problem with data
\[
u_2(0)=\frac{\gamma_3}{\sqrt{\e} D(k)}e^{-{\rm i}\e l_1 t},\ \ \ \ \ \ u_2(l_2)=\frac{\gamma_3}{\sqrt{\e}D(k)}{\rm e}^{{\rm i}\e l_3t}
\]
yielding an error of order $O(\varepsilon^2)$ between the actions of
the resolvents $\Phi_\e(A_\varepsilon^{(t)}-z)^{-1}\Phi_\e^*$ and $(\tilde A_\varepsilon^{(t)}-z)^{-1}.$

(b) Solutions on $e_1$
In the case of $\tilde A_\e^{(t)}$, to the leading order one solves the boundary-value problem
with
data
$$
u_1(0)=\frac{\gamma_3}{\e}\left( \frac 1 {k^2 (l_1+l_3)} +\frac 1 {D(k)} \right), \ \ \ \
u_1(\e l_1)=\frac {\gamma_3}\e \left( \frac 1 {k^2 (l_1+l_3)} +\frac 1 {D(k)} \right){\rm e}^{-{\rm i} \e l_1 t},
$$
whereas in the case of $A_\e^{(t)}$ one has
\[
u_1(0)=\frac{\gamma_3}{\e D(k)},\ \ \ \ \
u_1(\e l_1)=\frac{\gamma_3}{\e D(k)}{\rm e}^{-{\rm i} \e l_1 t}.
\]

(c) Solutions on $e_3.$
In the case of $\tilde A_\e^{(t)}$, to the leading order one solves the boundary-value problem  with 
data
$$
u_3(0)=\frac{\gamma_3}{\e}\left( \frac 1 {k^2 (l_1+l_3)} +\frac 1 {D(k)} \right){\rm e}^{{\rm i} \e l_3 t},\ \ \ \
u_3(\e l_3)=\frac{\gamma_3}{\e}\left( \frac 1 {k^2 (l_1+l_3)} +\frac 1 {D(k)} \right),
$$
whereas in the case of $A_\e^{(t)}$ one has
\[
u_3(0)=\frac{\gamma_3}{\e D(k)}{\rm e}^{{\rm i}\e l_3 t},\ \ \ \ \
u_3(\e l_3)=\frac{\gamma_3}{\e D(k)}.
\]
In the cases (b), (c),
 the error between the actions of
 $\Phi_\e\bigl(A_\varepsilon^{(t)}-z\bigr)^{-1}\Phi_\e^*$ and $\bigl(\tilde A_\varepsilon^{(t)}-z\bigr)^{-1},$ up to the correcting term, is of the order
$O(\varepsilon^2),$ due to
the order $O(\varepsilon^{3/2})$ pointwise error,
the above estimate for $\gamma_3,$
and the fact that $e_1,$ $e_3$ have lengths proportional to $\e.$

We now
consider the ``correcting'' term that appears above in the analysis of the action of $\bigl(\tilde A_\e^{(t)}-z\bigr)^{-1}$ restricted to the stiff component. On the face of it, this term is $\e$-singular, however this is an artificial singularity, since this corrector is equal to the difference of resolvents of two self-adjoint operators and as such is at most of order $O(1)$. The
order $O(\varepsilon^{-1})$ singularity is due to the fact that this operator acts in the space $L^2_{\rm stiff},$ see \eqref{L2s_def}, and disappears under a unitary rescaling. The correcting term admits the form
$$
C^{(t)}_\e \binom {f_1}{f_3}:=\frac{1}{\e k^2 (l_1+l_3)}\begin{pmatrix}
1&0&0\\
0&0&1
\end{pmatrix}\tilde \gamma (z)
\begin{pmatrix}
1&0&{\rm e}^{-{\rm i}\e l_3 t}\\
0&0&0\\
{\rm e}^{{\rm i}\e l_3 t} &0&1
\end{pmatrix}
\tilde \Gamma_1^{(2)}(A_\infty-z)^{-1}\left(\begin{array}{c}f_1\\0\\ f_3\end{array}\right),
$$
and for any fixed $k\neq 0$ can be treated as a bounded linear operator on
$L^2_{\rm stiff}.$
We next show that
up to an error of order $O(\e^2)$ it is a rank-one operator
multiplied by $k^{-2}.$ The analysis leading to the equation \eqref{eq:gamma} and the similar argument pertaining to the space $L^2(e_3)$ show that
$C^{(t)}_\e$ essentially only acts on the function ${\rm e}^{-{\rm i}ty}$. As for its range, the following simple argument applies. If one seeks to compute the action of the operator $\tilde \gamma(z)$ on a vector obtained by the application of $C^{(t)}_\e$ to the vector $(f_1,0)^\top\in L_{\rm stiff}^2$, then
for the restriction to the interval $e_1$
one has the
boundary-value problem with data
\[
u_1(0)=\frac{\gamma_1}{\e k^2 (l_1+l_3)},\ \ \ \ \ \ \quad u_1(\e l_1)=\frac{\gamma_1}{\e k^2 (l_1+l_3)}{\rm e}^{-{\rm i}\e l_1 t},
\]
where $\gamma_1$ is defined by \eqref{eq:gamma} with the terms $O(\e^{5/2})\|f\|$ dropped.
Its solution is given by
$$
u_1(x)=\frac {\gamma_1}{\e k^2 (l_1+l_3)}\left\{
{\rm e}^{-{\rm i}tx} \biggl(\sin \dfrac{k\e l_1}{\sqrt{a}}\biggr)^{-1}\sin \dfrac{k(\e l_1-x)}{\sqrt{a}}
+{\rm e}^{-{\rm i}\e l_1 t}{\rm e}^{{\rm i}t(\e l_1-x)} \biggl(\sin\dfrac{k\e l_1}{\sqrt{a}}\biggr)^{-1}\sin \dfrac{kx}{\sqrt{a}}
 \right\}
$$
$$
=\frac {\gamma_1}{\e k^2 (l_1+l_3)}\left\{{\rm e}^{-{\rm i}tx}\left(\frac x{\e l_1}+O(\e^2)\right)+
{\rm e}^{-{\rm i}\e l_1 t}{\rm e}^{{\rm i}t(\e l_1-x)}\left(1-\frac x{\e l_1} +O(\e^2)\right)\right\}
=\frac {\gamma_1}{\e k^2 (l_1+l_3)}\bigl({\rm e}^{-{\rm i} t x} +O(\e^2)\bigr).
$$

For the interval $e_3$
we look at the boundary-value problem with data
\[
u_3(0)=\frac{\gamma_1}{\e k^2 (l_1+l_3)}{\rm e}^{{\rm i}\e l_3 t},\ \ \ \ \ \quad u_3(\e l_3)=\frac{\gamma_1}{\e k^2 (l_1+l_3)},
\]
whence by the same argument we get
$$
u_3(x)=\frac {\gamma_1}{\e k^2 (l_1+l_3)}\bigl({\rm e}^{{\rm i} t (\e l_3-x)} +O(\e^2)\bigr).
$$

In the situation just considered, we have $\gamma_1 =\bigl\langle f_1,e^{-{\rm i}tx}\bigr
\rangle_{L^2(e_1)}\bigl(1+O(\e^2)\bigr),$ up to an error $O(\e^{5/2})\|f\|,$ which contributes an error $O(\e^2)$ to the norm-resolvent estimate. Using the notation (\ref{Chi_capital}),
one then gets the following representation for the correcting operator:
\begin{equation}\label{eq:Cte}
C^{(t)}_\e[\cdot]=\bigl(\e k^2 (l_1+l_3)\bigr)^{-1}\bigl\langle \cdot, {\mathcal X}^{(t)}\bigr\rangle_{L^2_{\rm stiff}} {\mathcal X}^{(t)}+ O(\e^2)=\widehat C^{(t)}[\cdot]+O(\e^2),
\end{equation}
where the error estimate is understood in the sense of the operator norm in $L^2(e_1).$

Now we show that the same expression accounts for the correcting term in the situation when $C^{(t)}_\e$ is evaluated on the vector $f=(0,f_3)^\top\in L^2_{\rm stiff}$. Indeed,
up to $O(\e^{5/2})\|f\|,$ one has
\begin{multline*}
\gamma_3
={\rm e}^{-{\rm i}\e l_3 t}\bigl(\dt v\bigr|_0-{\rm e}^{{\rm i}\e l_3 t} \dt v\bigr|_{\e l_3}\bigr)
={\rm e}^{-{\rm i} \e l_3 t}\bigl\langle f_3, {\rm e}^{-{\rm i}tx}\bigr\rangle_{L^2(e_3)}\bigl(1+O(\e^2)\bigr)
=\bigl\langle f_3, {\mathcal X}^{(t)}\bigr\rangle_{L^2(e_3)}\bigl(1+O(\e^2)\bigr).
\end{multline*}
By the same argument as above we get
\eqref{eq:Cte} in the sense of the norm in $L^2(e_3).$ Summarising, the estimate \eqref{eq:Cte} holds in the sense of the norm of $L_{\rm stiff}^2.$
The required estimate \eqref{eq:intermediate} follows. \end{proof}

\begin{remark}\label{rem:high-freq} Note that the norm of $C^{(t)}$ does not depend on $\e$ when $z=k^2$ is in $S^{(t)}_{K,\rho}$. However, if one considers a transition regime from the classical setting to high frequency homogenisation, {\it i.e.,}
the situation when $z\e^\omega,$  $\omega<2,$ tends to a positive constant, its norm starts decaying as $\e\to 0$ and this term thus has no influence on the result.
\end{remark}

\section{Behaviour of the resolvents $\bigl(\tilde{A}_\varepsilon^{(t)}-z\bigr)^{-1}$ and the main result.}
\label{main_section}


The next step of our argument concerns passing to the effective, or ``homogenised'', operator $A_{\rm hom}^{(\tau)},$ which provides
the ``operator asymptotics''  for the generalised resolvent of $\tilde A_\e^{(t)}$ for all $\varepsilon, t.$
Recall that in the present paper we consider the ``finite-frequency'' case,
by assuming throughout that $z\in S^{(t)}_{K,\rho}$ (see \eqref{eq:Srho}) for some compact $K$ and $\rho>0$. First, we introduce some notation.

\begin{definition}
\label{Psi_definition}
Consider the following normalisation of the vector ${\mathcal X}^{(t)}$ defined by $(\ref{Chi_capital}):$
$$
\psi^{(t)}:=\frac 1{\sqrt{\e(l_1+l_3)}}e^{{\rm i}\e l_1 t} {\mathcal X}^{(t)},
$$
and the orthogonal projection $P_{\psi}$ in the space $L^2_{\rm stiff},$ defined by (\ref{L2s_def}), onto the vector $\psi^{(t)}.$
For convenience, in what follows we keep the same notation $\psi^{(t)}$ for the extension, by the zero element in $L^2(e_2),$ of the vector $\psi^{(t)}$ to the whole space $H=L^2(e_1)\oplus L^2(e_2)\oplus L^2(e_3)$. For all $t\in[0,2\pi\varepsilon^{-1}),$ we define a unitary operator
\[
\Psi^{(t)}: P_\psi L^2_{\rm stiff}\oplus L^2(e_2)=:H_{\rm eff}\to H_{\rm hom}:=L^2(e_2)\oplus \mathbb{C}
\]
by mapping
$
\beta\psi^{(t)}\oplus u_2 \mapsto(u_2, \beta)^\top.
$
\end{definition}

\begin{definition}
\label{Ahom}
For all values $\tau\in[0, 2\pi),$ consider an operator $A_{\rm hom}^{(\tau)}$ on the above space $H_{\rm hom},$ defined as follows. Let the domain $\text{\rm dom}\bigl(A_{\rm hom}^{(\tau)}\bigr)$ consist of all pairs $(u,\beta)$ such that $u\in W^{2,2}(e_2)$ and the quasiperiodicity condition
\begin{equation}
u(0)=w_{\rm soft} u(l_2)=\frac{\beta}{\sqrt{l_1+l_3}}
\label{quasi_cond}
\end{equation}
is satisfied.  On $\text{\rm dom}\bigl(A_{\rm hom}^{(\tau)}\bigr)$
the action of the operator is set by
$$
A_{\rm hom}^{(\tau)}\binom{u}{\beta}=
\left(\begin{array}{c}\biggl(\dfrac{1}{\rm i}\dfrac{d}{dx}+\tau\biggr)^2\\[0.9em]
-\dfrac{1}{\sqrt{l_1+l_3}}
\sum\widehat\partial^{(\tau)}_nu
\end{array}\right),\ \ \ \ \ \ \ \
\sum\widehat\partial^{(\tau)}_nu:= \partial^{(\tau)} u\bigr|_0 - w_{\rm soft}\partial^{(\tau)} u\bigr|_{l_2}.
$$
\end{definition}


As we show below, the space $H_{\rm eff}$
is ``almost invariant'' for the generalised resolvent of
$\tilde A_\e^{(t)},$ whence this resolvent can be sandwiched by projections $P_{\rm eff}$ of $H$ onto $H_{\rm eff}$
at the expense of an error of order $O(\e^2).$ Having done this, we will only consider the situation in the space $H_{\rm eff}$. The function $u$ on the space of dimension one that remains of the stiff component is then uniquely defined by its value at the vertex $V_3$, which is determined by the boundary values of
$u$ on the soft component. These boundary values are not fixed by the domain of the operator $\tilde A_\e^{(t)}$ but are nevertheless readily available by the same argument as in the proof of Theorem \ref{comparison_theorem}. Once $u_1$ and $u_3$ are determined uniquely, one can rewrite the matching conditions on the soft component that decouple it from
the stiff component. Finally, the value $u_2(0)$ uniquely determines the solution on the stiff component, up to an error of order $O(\e^2)$.

\begin{theorem}
\label{aux_proposition} The following statements hold for any $z\in S^{(t)}_{K,\rho},$ where $S^{(t)}_{K,\rho}$ defined by \eqref{eq:Srho}:

1. The norm of the difference $\bigl(\tilde A_\e^{(t)} -z\bigr)^{-1}-P_{\rm eff}\bigl(\tilde A_\e^{(t)} -z\bigr)^{-1}P_{\rm eff}$
is of the order $O(\e^2).$

2. The action of the generalised resolvent
$\bigl(\tilde A_\e^{(t)} -z\bigr)^{-1}$ on a vector $f=(f_1,f_2,f_3)^\top$ is $O(\e^2)$--close in the operator-norm sense to the vector $u=(u_1,u_2,u_3)^\top$ described as follows. The component $u_2$ is the solution of the following boundary-value problem on $e_2:$
\begin{equation}\label{eq:Rz}
\biggl(\frac{1}{\rm i}\frac d{dx}+\tau\biggr)^2 u_2-z u_2=f_2,\ \ \ \ \ \ \
u_2(0)=w_{\text{\rm soft}} u_2(l_2), \ \ \ \ \ \
\sum\widehat\partial^{(\tau)}_nu_2
=-z(l_1+l_3) u_2(0)- \sqrt{l_1+l_3}\,\bigl\langle f, \psi^{(t)}\bigr\rangle,
\end{equation}
where $\psi^{(t)}$ is extended to a vector in $H_{\rm eff}$ by zero on the soft-component space $L^2(e_2).$ For the solution $u_2$ of (\ref{eq:Rz}), the component
$u_{\rm stiff}=(u_1,0,u_3)^\top$ is determined
by
\begin{equation}\label{eq:stiff_solution}
u_{\rm stiff}= \sqrt{l_1+l_3}\,u_2(0)\psi^{(t)}-z^{-1}\bigl\langle f,\psi^{(t)}\bigr\rangle\psi^{(t)}.
\end{equation}
\end{theorem}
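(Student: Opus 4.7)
The plan is to combine the Kre\u\i n resolvent formula of Proposition \ref{prop:Krein}, applied to $\tilde A_\e^{(t)}$ with reference operator $A_\infty$ (Dirichlet decoupling of all three edges), with the expansion of $\bigl(\tilde M_\e^{(2)}(z)-\tilde B^{(2)}(z)\bigr)^{-1}$ supplied by Lemma \ref{asymptotic_lemma}. Throughout, I track where the coupling of the stiff and soft components enters and show that, modulo $O(\e^2),$ the stiff part of any image of $\bigl(\tilde A_\e^{(t)}-z\bigr)^{-1}$ is proportional to $\psi^{(t)}$, which will simultaneously yield part 1 and the formula \eqref{eq:stiff_solution}.

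First I would handle the stiff component. On $L^2_{\rm stiff}=L^2(e_1)\oplus L^2(e_3)$ the Dirichlet part $(A_\infty-z)^{-1}$ has operator norm $O(\e^2)$, since the smallest Dirichlet eigenvalue scales like $\e^{-2}$. Hence the entire leading contribution there comes from the Kre\u\i n correction $\tilde\gamma(z)\bigl(\tilde B^{(2)}(z)-\tilde M_\e^{(2)}(z)\bigr)^{-1}\tilde\gamma^{*}(\bar z)$. By Lemma \ref{asymptotic_lemma}(2) the leading $\e^{-1}$ term $M_{-1}(z)$ has vanishing second row and column and its range is one-dimensional, spanned by $(1,0,{\rm e}^{{\rm i}\e l_3 t})^\top$. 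Applying $\tilde\gamma(z)$ to such a boundary datum produces a kernel element of $A_{\max}-z$ on $e_1,e_3$ whose leading-order form (using $\e l_j\to0$ and the explicit Dirichlet kernel on a short interval, exactly as in step II of the proof of Theorem \ref{comparison_theorem}) is proportional to $\mathcal X^{(t)}$ defined in \eqref{Chi_capital}, hence to $\psi^{(t)}$ after normalisation. Similarly, $\tilde\gamma^{*}(\bar z)=\tilde\Gamma_1^{(2)}(A_\infty-z)^{-1}$ evaluated on the stiff part extracts, to leading order, the inner product $\bigl\langle f,\psi^{(t)}\bigr\rangle$ (compare \eqref{eq:gamma}). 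This establishes Part 1: up to an $O(\e^2)$ error, the stiff component of $\bigl(\tilde A_\e^{(t)}-z\bigr)^{-1}f$ lies in $\mathbb C\psi^{(t)}$, and the soft component is unaffected by $P_{\rm eff}$.

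Next I would pin down the precise constant in front of $\psi^{(t)}$. The rank-one part of $M_{-1}(z)$ carries the coefficient $\bigl(k^{-2}(l_1+l_3)^{-1}+D(k)^{-1}\bigr)$. Decomposing this as in Theorem \ref{comparison_theorem}, the $D(k)^{-1}$ piece combines with the off-diagonal entries of $M_0(z)$ to produce, via the matching condition \eqref{matching2} at $V_3$ and the value $\psi^{(t)}(0_{e_3})=w_{\rm stiff}/\sqrt{\e(l_1+l_3)}$, a contribution proportional to $u_2(0)\psi^{(t)}$ with coefficient $\sqrt{l_1+l_3}$. The remaining $k^{-2}(l_1+l_3)^{-1}$ piece is precisely the correcting operator $\widehat C^{(t)}$ identified in \eqref{eq:Cte}; using $\|\mathcal X^{(t)}\|^2_{L^2_{\rm stiff}}=\e(l_1+l_3)$ this simplifies to $-z^{-1}\bigl\langle f,\psi^{(t)}\bigr\rangle\psi^{(t)}$ (with the sign coming from the $(B-M)^{-1}$ on the left-hand side of Kre\u\i n's formula). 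Summing the two contributions yields \eqref{eq:stiff_solution}.

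For Part 2 it then remains to derive the effective boundary-value problem on $e_2$. The differential equation and the quasi-periodicity $u_2(0)=w_{\rm soft}u_2(l_2)$ are simply inherited from \eqref{matching3} and the differential expression defining $\tilde A_\e^{(t)}$ (with parameter $\tau=\e t$ on the soft component). The nontrivial step is to close the system by eliminating $u_3(0)$ from \eqref{matching4}. Evaluating \eqref{eq:stiff_solution} at the endpoint $0$ of $e_3$ and using $w_{\rm stiff}w_{\rm soft}=1$ gives
\begin{equation*}
\sqrt{\e}\,k^2(l_1+l_3)\,w_{\rm soft}u_3(0)=k^2(l_1+l_3)u_2(0)-\sqrt{l_1+l_3}\,\bigl\langle f,\psi^{(t)}\bigr\rangle+O(\e^2),
\end{equation*}
which, combined with the term $-2k^2(l_1+l_3)u_2(0)$ in \eqref{matching4}, produces exactly the boundary identity $\sum\widehat\partial^{(\tau)}_n u_2=-z(l_1+l_3)u_2(0)-\sqrt{l_1+l_3}\,\bigl\langle f,\psi^{(t)}\bigr\rangle$ as claimed in \eqref{eq:Rz}. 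The hypothesis $z\in S^{(t)}_{K,\rho}$ is what guarantees that $D(k)$ and $\sin kl_2$ are bounded away from zero, so that Lemma \ref{asymptotic_lemma} applies uniformly in $t$ and the effective boundary-value problem \eqref{eq:Rz} is uniquely solvable with an $O(1)$ bound on its solution operator.

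The main obstacle I anticipate is making all error estimates uniform in $t\in[0,2\pi\e^{-1})$: since $t$ is unbounded as $\e\to0$, the factors ${\rm e}^{\pm{\rm i}\e l_j t}$ cannot be Taylor-expanded, and one must rely on their unimodularity together with the explicit structure of Lemma \ref{asymptotic_lemma}. The delicate bookkeeping is to verify that the phases $w_{\rm stiff}$, $w_{\rm soft}$, $e^{\pm{\rm i}\e l_j t}$ appearing through $\tilde\gamma(z)$, $M_{-1}(z)$, $M_0(z)$, and the matching conditions \eqref{matching2}, \eqref{matching4} combine exactly as needed to expose $\psi^{(t)}$ and the inner product $\bigl\langle\,\cdot\,,\psi^{(t)}\bigr\rangle$, with cross-terms either cancelling or absorbed into the $O(\e^2)$ remainder via the $O(\sqrt\e)$ estimate on $\gamma_1,\gamma_3$ already established in Theorem \ref{comparison_theorem}.
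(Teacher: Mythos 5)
Your overall architecture is the same as the paper's: the Kre\u\i n formula with the Dirichlet decoupling $A_\infty$ as reference, the asymptotic expansion of $\bigl(\tilde M_\e^{(2)}(z)-\tilde B^{(2)}(z)\bigr)^{-1}$ from Lemma~\ref{asymptotic_lemma}, the identification of the stiff-component image with $\mathbb C\psi^{(t)}$ via the rank-one structure of $M_{-1}$, and the elimination of $u_3(0)$ from \eqref{matching4} using the relation between $u_3(0)$ and $u_2(0)$ forced by the $\e^{-1}$-part of the Kre\u\i n correction. The phase bookkeeping in your third paragraph (using $w_{\rm soft}w_{\rm stiff}=1$ and $\psi^{(t)}(0_{e_3})=w_{\rm stiff}/\sqrt{\e(l_1+l_3)}$) is correct and reproduces the identity \eqref{eq:dt_match_extra}. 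The sign origin for the $-z^{-1}\langle f,\psi^{(t)}\rangle\psi^{(t)}$ term is slightly mis-described (it is the flip between $(\tilde B-\tilde M)^{-1}$ in Kre\u\i n's formula and the $(\tilde M-\tilde B)^{-1}$ expanded in Lemma~\ref{asymptotic_lemma}), but the formula you land on is right.

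The one place where your argument is genuinely under-justified is the passage from ``the actual $u_2$ satisfies the boundary relation \eqref{eq:Rz} up to an $O(\e^2)$ scalar perturbation $r_\e$'' to ``the actual $u_2$ — and, critically, its trace $u_2(0)$ — are $O(\e^2)$-close to the effective solution.'' You invoke only that the solution operator of \eqref{eq:Rz} has an $O(1)$ bound. That statement by itself does not control the effect of perturbing a boundary condition, nor does it control the pointwise value $u_2(0)$ (which feeds back into \eqref{eq:stiff_solution} and hence into the full operator-norm estimate). The paper spends a substantial part of the proof on precisely this point: it constructs an explicit lift $v$ (a quadratic polynomial times a phase) that absorbs the BC perturbation $r_\e$ with $\|v\|_{W^{2,2}}=O(\e^2)$, rewrites $u_2=u_0+v$ so that $u_0$ solves the unperturbed problem with an $O(\e^2)$-perturbed right-hand side, then invokes the $W^{2,2}(e_2)\hookrightarrow C(e_2)$ embedding on $\tilde u:=R_{\rm soft}(z)\{(\tfrac1{\rm i}\tfrac d{dx}+\tau)^2v-zv\}$ to conclude $\tilde u(0)=O(\e^2)$ and hence $u_2(0)=[R_{\rm soft}(z)f_2](0)+O(\e^2)$. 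Your route could be made just as rigorous, and arguably cleaner, by viewing \eqref{eq:Rz} as the resolvent $(A_{\rm hom}^{(\tau)}-z)^{-1}$ acting on $H_{\rm hom}=L^2(e_2)\oplus\mathbb C$, whose $\mathbb C$-output component is exactly $\sqrt{l_1+l_3}\,u_2(0)$: then the $O(\e^2)$ perturbation of the scalar input yields $O(\e^2)$ control of both $u_2$ and $u_2(0)$ simultaneously, without the embedding-theorem step. But you need to say this; as written, the gap stands.
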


\begin{proof}
We use the Kre\u\i n resolvent formula, see Section \ref{sect:Krein}, that links $\bigl(\tilde A_\e^{(t)} - z\bigr)^{-1}$ to $(A_\infty - z)^{-1}.$ Notice that the Dirichlet decoupling $(A_\infty - z)^{-1}$ has the property
$$
(A_\infty - z)^{-1}\bigl[(f_1, 0, f_3)^\top\bigr]
=O(\e^2),
$$
due to the fact that the lower bound of the spectrum of its first and third components is of the order
$O\bigl(\e^{-2}\bigr)$. 
Therefore, the contribution of the Dirichlet decoupling can be ignored in the proof, and the only part of the expression for the resolvent of $\tilde A_\e^{(t)}$ that needs to be accounted for is the second term in the Kre\u\i n formula (\ref{eq:resolvent}), related to the perturbation in the boundary space from the decoupled operator.

It follows from the proof of Theorem \ref{comparison_theorem} that for all vectors  $f=(f_1, 0, 0)^\top\in H$ and  $f=(0, 0, f_3)^\top\in H$,  whose projection onto $L^2_{\rm stiff}$
is orthogonal to $\psi,$
one has $\gamma_1=O(\e^{5/2})\|f\|$ and $\gamma_3=O(\e^{5/2})\|f\|$, respectively, as $\varepsilon\to0.$
This immediately implies that restricting $\bigl(\tilde A_\e^{(t)}-z\bigr)^{-1}$ to the space $H_{\rm eff}$ results in an error of order $O(\e^2)$ in the operator norm.

In order to estimate the effect of sandwiching the resolvent between two projections onto $H_{\rm eff}$, we start by considering the vector $u:=\bigl(\tilde A_\e^{(t)}-z\bigr)^{-1}(0,f_2,0)^\top$. By an argument in the proof of Theorem
\ref{comparison_theorem}, for $u_1$ and $u_3$ one has the boundary values (up to an error of order $O(\e^{3/2})$)
$$
u_1(0)=-\frac{\gamma_2}{\sqrt{\e} D(k)} {\rm e}^{{\rm i}\e l_1 t},\ \ \ \ u_1(\e l_1)=-\frac{\gamma_2}{\sqrt{\e} D(k)},
$$
and
\begin{equation}
\ \ u_3(0)=-\frac{\gamma_2}{\sqrt{\e} D(k)}{\rm e}^{{\rm i}\e (l_1+l_3) t},\ \ \ \ u_3(\e l_3)=-\frac{\gamma_2}{\sqrt{\e} D(k)}{\rm e}^{{\rm i}\e l_1 t},
\label{u3_data}
\end{equation}
respectively. In the same way as in approximating the corrector in the proof of Theorem \ref{comparison_theorem}, we obtain
$$
u_1(x)=-\frac{\gamma_2}{\sqrt{\e} D(k)}{\rm e}^{{\rm i} t (\e l_1 -x)}\bigl(1+O(\e^2)\bigr),\ \ \ \  u_3(x)= -\frac{\gamma_2}{\sqrt{\e} D(k)} {\rm e}^{{\rm i}\e l_1 t}{\rm e}^{{\rm i}t(\e l_3 -x)}\bigl(1+O(\e^2)\bigr),
$$
whence the restriction of the function $u$ to the stiff component is given by
\[
u_1\oplus u_3=-\frac{\gamma_2}{\sqrt{\e} D(k)}{\rm e}^{{\rm i}\e l_1 t} {\mathcal X}^{(t)}\bigl(1 +O(\e^2)\bigr).
\]
The first claim of the theorem in the case of the vector $f=(0,f_2,0)^\top$ readily follows, since the error term is of order $O(\e^2)$ in $L^2_{\rm stiff}$.

Postponing to a later stage the proof
of the case when the resolvent is applied to vectors of the form $f=(f_1,0,f_3)^\top,$ we proceed with the comparison of the asymptotic formulae for the boundary values of $u_2$ and $u_3$ in order to ascertain the second claim of the theorem on the vectors $f=(0,f_2,0)^\top.$ Building up on the analysis so far, we obtain
$$
u_2(0)=-\frac{\gamma_2}{D(k)}+O(\e^2),\ \ \ \ u_3(0)=-\frac{\gamma_2}{\sqrt{\e} D(k)}{\rm e}^{{\rm i}\e (l_1+l_3) t}+O(\e^{3/2}),
$$
where the expression for $u_3(0)$ is taken from (\ref{u3_data}),
while the expression for
$u_2(0)$ was obtained in the proof of Theorem \ref{comparison_theorem}. Clearly
$$
u_3(0)=\e^{-1/2}
{\rm e}^{{\rm i}\e (l_1+l_3) t}u_2(0)+O(\e^{3/2}),
$$
and therefore, taking into account the explicit description of the domain of $\tilde A^{(t)}_\varepsilon,$ one has
\begin{equation}
\sum\widehat\partial^{(\tau)}_nu_2
=-2k^2(l_1+l_3)u_2(0)+\sqrt{\e} k^2 (l_1+l_3) e^{-{\rm i}(l_1+l_3)\tau}u_3(0)+O(\e^2)=-z(l_1+l_3)u_2(0)+O(\e^2).
\label{eq:re1}
\end{equation}
We show that dropping the $O(\e^2)$ term on the right-hand side of (\ref{eq:re1}) leads to an error of order $O(\e^2)$ in the operator-norm sense.
Indeed, as $(u_1,u_2,u_3)^\top$ is in the domain of $\tilde A_\e^{(t)}$ by construction, the component $u_2$ satisfies
\begin{equation}
\biggl(\frac{1}{\rm i}\frac d{dx}+\tau\biggr)^2u_2-z u_2=f_2, \ \ \ \ \ \ \ u_2(0)=w_{\text{soft}} u_2(l_2).
\label{u2_add}
\end{equation}
Note, that up to an $O(\e^2)$ term the problem (\ref{eq:re1})--(\ref{u2_add}) is independent of the stiff component and no longer depends on $\e$. Looking for a solution $u_2=u_0+v$, with
\[
v\in{\mathcal V}:=W^{2,2}(e_2)\cap\bigl\{
v:\,v(0)=v(l_2)=0,\ \ \dtau v\bigr|_0-w_{\text{soft}} \dtau v\bigr|_{l_2}=r_\e\bigr\},
\]
where $r_\e$ is the $O(\e^2)$ term in \eqref{eq:re1},
one arrives at the following boundary-value problem for $u_0:$
\begin{gather*}
\biggl(\frac{1}{\rm i}\frac d{dx}+\tau\biggr)^2u_0-z u_0=f_2-\biggl(\frac{1}{\rm i}\frac d{dx}+\tau\biggr)^2 v+z v,\ \ \ \ \ \
u_0(0)=w_{\text{soft}} u_0(l_2),\ \ \ \ \
\sum\widehat\partial^{(\tau)}_nu_0
=-z (l_1+l_3) u_0(0).
\end{gather*}
Whenever $z$ is outside some fixed neighbourhood of the poles of the generalised resolvent $R_{\text{soft}}(z)$ of the last boundary-value problem (it is easily seen that this set
is
defined by the dispersion relation \eqref{CC}, {\it cf.} calculation
in Sections \ref{limit_delta} and \ref{non-Bloch}), one has:
$$
u_0=R_{\text{soft}} (z)\biggl\{f_2-\biggl(\frac{1}{\rm i}\frac d{dx}+\tau\biggr)^2v+z v\biggr\},
$$
Let $\kappa$ be a constant such that $0<l_1+l_3+\kappa l_2<1/4,$ and set
$v=\alpha x(1-x/l_2){\rm e}^{-{\rm i}\kappa \tau x},$
$\alpha=r_\e\bigl(1+{\rm e}^{-{\rm i}(l_1+l_3+\kappa l_3)\tau}\bigr)^{-1}. 
$
Clearly $v\in{\mathcal V},$ and
$$
\biggl\|\biggl(\frac{1}{\rm i}\frac d{dx}+\tau\biggr)^2 v-z v\biggr\|_{L^2(e_2)}=O(\e^2)
$$
uniformly with respect to $\tau$, so that
\begin{equation}
u_0=R_{\text{soft}} (z) f_2 + O(\e^2)
\label{u0_Rsoft}
\end{equation}
in the operator-norm sense. In view of (\ref{u0_Rsoft}) and the fact that $v=O\bigl(\varepsilon^2\bigr),$ 
the estimate
$$
u_2=u_0+ O(\e^2)=R_{\text{soft}} (z) f_2 + O(\e^2)
$$
holds.
In addition, the embedding of $W^{2,2}(e_2)$ into $C(e_2)$ implies that
$$
u_2(0)=\bigl[R_{\text{soft}} (z) f_2\bigr](0)+ O(\e^2).
$$
Indeed, $R_{\text{soft}}(z)$ can again be considered as the resolvent at the point $z$ of a closed linear operator $A_z$ defined by \eqref{eq:Rz}.
Therefore away from the spectrum of $A_z,$ the operator $R_{\text{soft}}(z)$ is bounded from 
$L^2(e_2)$ to 
$\dom(A_z)$ equipped with the graph norm. As is easily seen, within the conditions of the theorem we are guaranteed to be in this situation. Denoting
\[
\tilde u =R_{\text{soft}} (z)\biggl\{\biggl(\frac{1}{\rm i}\frac d{dx}+\tau\biggr)^2 v-z v\biggr\},
\] one then has
$
\|A_z \tilde u \|^2+\|\tilde u \|^2=O(\e^4),
$
whence
$$
\biggl\|\left(\frac 1{\rm i}\frac d{dx}+\tau\right)^2 \tilde u \biggr\|^2+\|\tilde u \|^2=\biggl\|\left(\frac 1{\rm i}\frac d{dx}\right)^2{\rm e}^{{\rm i}\tau x}\tilde u \biggr\|^2+\|{\rm e}^{{\rm i}\tau x}\tilde u \|^2=O(\e^4),
$$
and $\tilde u(0)=O(\e^2)$ by the embedding theorem. Noting that $u_2=R_{\text{soft}} (z)f_2-
\tilde u +v$ and $v=O\bigl(\varepsilon^2\bigr)$ in $W^{2,2}$-norm, the claim follows.

The explicit relationship between $u_3(0)$ and $u_2(0)$ is now used to construct the solution on the stiff component. As mentioned above, this solution is fully determined by its value at the vertex $V_3:$
$$
u_{\rm stiff}=
\e^{-1/2}{\rm e}^{{\rm i}\e l_1 t}\bigl[R_{\text{soft}} (z)f_2\bigr](0){\mathcal X}^{(t)}+ O(\e^{3/2}),
$$
where the $O(\e^{3/2})$ terms leads to an order $O(\e^2)$ error in $L^2_{\rm stiff}$, as claimed.

It remains to show that both claims of the theorem hold for the resolvent applied to the right-hand side supported on the stiff component, namely $f=(f_1,0,f_3)^\top$. Since we have already shown that the resolvent $\bigl(\tilde A_\e^{(t)}-z\bigr)^{-1}$ can be restricted to the space $H_{\rm eff}$ up to an error of order $O(\e^2)$ in the operator-norm sense, we assume that $f$ is proportional to $\psi^{(t)}$. By linearity, we split the calculation into two cases, $f=(f_1,0,0)^\top$ and $f=(0,0,f_3)^\top,$ which are labelled by the index $j=1,3.$  Once again, in each of the two cases we start by reconstructing the solutions that pertain to $\tilde A_\e^{(t)}$
restricted to the stiff component. These are sums of solutions to the boundary-value problems on $[0,\varepsilon l_1],$ 
$[0,\varepsilon l_3]:$ 
\begin{gather*}
u_1(0)=-\frac {\gamma_j} {\e D(k)}\bigl(1 +O(\e^2)\bigr),\ \ \ \ \ \ u_1(\e l_1)=- \frac {\gamma_j} {\e D(k)}{\rm e}^{-{\rm i}\e l_1 t}\bigl(1 +O(\e^2)\bigr),\ \ \ \ j=1,3,\\[0.6em]
u_3(0)=-\frac {\gamma_j} {\e D(k)}{\rm e}^{{\rm i}\e l_3 t}\bigl(1 +O(\e^2)\bigr),\ \ \ \ \ u_3(\e l_3)= - \frac {\gamma_j} {\e D(k)}\bigl(1 +O(\e^2)\bigr),\ \ \ j=1,3,
\end{gather*}
and solutions to the boundary-value problems due to the corrector $C^{(t)}$. By the same asymptotic expansion as above, we get 
$$
u_1=-\frac {\gamma_j} {\e D(k)}{\rm e}^{-{\rm i}tx}\bigl(1 +O(\e^2)\bigr),\ \ \ \ \ \ u_3=-\frac {\gamma_j} {\e D(k)}{\rm e}^{{\rm i}t(\e l_3 -x)}(1 +O(\e^2)\bigr),\ \ \ j=1,3.
$$
Taking into account the contributions due to the corrector term yields
$$
u_1\oplus u_3=-\frac {\gamma_j} {\e D(k)} {\mathcal X}^{(t)}\bigl(1+O(\e^2)\bigr)-\frac 1 \e \frac {\gamma_j} {k^2(l_1+l_3)} {\mathcal X}^{(t)}, \ \ \ j=1,3,
$$
which clearly suffices to ascertain the first claim of the theorem, taking into account the estimates $\gamma_1=O(\sqrt{\e})$,  $\gamma_3=O(\sqrt{\e})$ obtained in the proof of Theorem \ref{comparison_theorem}.

In order to prove the second claim of the theorem, we proceed in the same way as above.
Using the boundary data for $u_2$, namely,
$u_2(0)=-\gamma_j\bigl(\sqrt{\e} D(k)\bigr)^{-1}{\rm e}^{-{\rm i}\e l_1 t},$ to the leading order, $j=1,3,$ we obtain for the cases
$f=f^{(1)}:=(f_1,0,0)^\top$ and $f=f^{(3)}:=(0,0,f_3)^\top,$ {\it cf.} (\ref{eq:re1}):
\begin{multline}\label{eq:dt_match_extra}
\sum\widehat\partial^{(\tau)}_nu_2= -2k^2 (l_1+l_3) u_2(0)+\sqrt{\e} k^2 (l_1+l_3) {\rm e}^{-{\rm i}(l_1+l_3)\tau}  u_3(0)
\\[0.4em]
=-2k^2 (l_1+l_3)u_2(0)+\sqrt{\e}k^2 (l_1+l_3){\rm e}^{-{\rm i}(l_1+l_3)\tau}\biggl(\frac{1}{\sqrt{\e}}
{\rm e}^{{\rm i}(l_1+l_3)\tau}u_2(0)-\frac 1\e \frac {\gamma_j} {k^2 (l_1+l_3)}{\rm e}^{{\rm i}l_3\tau}+O(\varepsilon^{3/2})\biggr)\ \ \ \ \ \ \ \ \ \ \ \  \ \ \ \\[0.3em]
= -k^2(l_1+l_3)u_2(0)-\frac 1{\sqrt{\e}}{\rm e}^{-{\rm i}l_1\tau} \gamma_j+O(\varepsilon^{2})=-z(l_1+l_3)u_2(0)-\frac 1{\sqrt{\e}}
{\rm e}^{-{\rm i}l_1\tau}\bigl\langle f^{(j)},{\mathcal X}^{(t)}\bigr\rangle+O(\varepsilon^{2}),\ \ j=1,3.
\end{multline}
Further, we discard the  $O(\e^2)$ term on the right-hand side, due to the same argument as for  $\bigl(\tilde A_\e^{(t)}-z\bigr)^{-1}\bigl[(0,f_2,0)^\top\bigr]$. The only difference in this case is that in order to reduce the problem to that for $R_{\text{soft}} (z)$, we look for the solution $u_2$ as a sum of three functions, namely $u_2=u_0+\tilde v +v$, where 
$v$ is
as above and 
\[
\tilde v\in
W^{2,2}(e_2)\cap\Bigl\{
\tilde v:\,\tilde v(0)=\tilde v(l_2)=0,\ \ \dtau \tilde v\big|_0-w_{\text{soft}} \dtau \tilde v\big|_{l_2}=-\frac 1{\sqrt{\e}}
{\rm e}^{-{\rm i}l_1\tau}\bigl\langle f^{(j)},{\mathcal X}^{(t)}\bigr\rangle\Bigr\},\ \ \ j=1,3,
\]
is constructed in 
the same way as $v$. The function
\[
\tilde f_2:=-\biggl(\frac 1{\rm i}\frac d{dx}+\tau\biggr)^2 \tilde v+z \tilde v
\]
then takes the place of the function $f_2$ in the corresponding construction for $u_0$ in the case $f=(0,f_2,0)^\top,$  allowing  
to drop an error term of order $O(\e^2)$ in $u_2,$ by an application of the same embedding theorem. Finally, the function $u_2=R_{\text{soft}} (z)\tilde f_2+\tilde v$  solves the boundary-value problem \eqref{eq:Rz}, since
in terms of the 
function $\psi^{(t)}$ the boundary condition 
\eqref{eq:dt_match_extra}  reads
\begin{equation*}
\sum\widehat\partial^{(\tau)}_nu_2
= -k^2(l_1+l_3)u_2(0) - \sqrt{l_1+l_3}\,\bigl\langle f^{(j)}, \psi^{(t)}\bigr\rangle,\ \ \ \ j=1,3.
\end{equation*}
This completes the proof.
\end{proof}


\begin{theorem}
\label{second_claim}
For a given compact $K$ and $\rho>0,$ let $S^{(t)}_{K,\rho}$ be defined by \eqref{eq:Srho}. The asymptotic formula
\begin{equation}
\label{eq:claim_corrected}
\Psi^{(t)}\bigl[\bigl(\tilde A_\e^{(t)}-z\bigr)^{-1}+C^{(t)}\bigr]\bigl(\Psi^{(t)}\bigr)^*=\bigl(A_{\rm hom}^{(\tau)}-z\bigr)^{-1}+O(\varepsilon^2),\ \ \ \ \ \varepsilon\to0,
\end{equation}
holds, with the error understood in the sense of the operator norm, uniformly with respect to $z\in S^{(t)}_{K,\rho}.$ In the
formula (\ref{eq:claim_corrected}), the unitary operator $\Psi^{(t)}$ is given by Definition \ref{Psi_definition}; the operator $C^{(t)}$ is given by \eqref{Ct_def1},
or equivalently
$
C^{(t)}[\cdot]=
z^{-1}\bigl\langle \cdot, \psi^{(t)}\bigr\rangle\psi^{(t)},
$
where $\psi^{(t)}$ is extended to a vector in $H_{\rm eff}$ by zero on the soft-component space $L^2(e_2)$.

\end{theorem}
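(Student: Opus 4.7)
The plan is to combine the explicit description of the action of $\bigl(\tilde A_\e^{(t)}-z\bigr)^{-1}$ provided by Theorem \ref{aux_proposition} with a direct verification that the sandwiched object coincides, for each argument, with the resolvent of $A_{\rm hom}^{(\tau)}$ on $H_{\rm hom}.$ Given $g=(g_2,\alpha)^\top\in H_{\rm hom}$ with $g_2\in L^2(e_2)$ and $\alpha\in\mathbb C,$ set $f:=\bigl(\Psi^{(t)}\bigr)^*g\in H_{\rm eff}\subset H,$ so that $f$ has soft component $g_2,$ stiff component $\alpha\psi^{(t)},$ and in particular $\bigl\langle f,\psi^{(t)}\bigr\rangle_{L^2_{\rm stiff}}=\alpha.$ I also verify upfront the equivalence of the two expressions for $C^{(t)}$ claimed in the statement: substituting $\psi^{(t)}=\bigl(\e(l_1+l_3)\bigr)^{-1/2}{\rm e}^{{\rm i}\e l_1 t}{\mathcal X}^{(t)}$ into $z^{-1}\bigl\langle\cdot,\psi^{(t)}\bigr\rangle\psi^{(t)}$ makes the phase factor drop out and yields $\bigl(\e z(l_1+l_3)\bigr)^{-1}\bigl\langle\cdot,{\mathcal X}^{(t)}\bigr\rangle{\mathcal X}^{(t)}=\widehat C^{(t)}[\cdot],$ as in \eqref{Ct_def1}.

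First I would apply Theorem \ref{aux_proposition}(2) to $f.$ This yields that $\bigl(\tilde A_\e^{(t)}-z\bigr)^{-1}f$ is $O(\varepsilon^2)$-close, in the operator-norm sense, to the vector $(u_1,u_2,u_3)^\top$ for which $u_2$ solves the boundary-value problem \eqref{eq:Rz} with $\bigl\langle f,\psi^{(t)}\bigr\rangle=\alpha,$ and the stiff part is given by \eqref{eq:stiff_solution} as
\begin{equation*}
u_{\rm stiff}=\sqrt{l_1+l_3}\,u_2(0)\,\psi^{(t)}-z^{-1}\alpha\,\psi^{(t)}.
\end{equation*}
The central observation is that adding $C^{(t)}f=z^{-1}\alpha\,\psi^{(t)}$ (regarded as a vector of $H$ supported on the stiff component) cancels precisely the second summand in $u_{\rm stiff},$ leaving a combined stiff part equal to $\sqrt{l_1+l_3}\,u_2(0)\,\psi^{(t)},$ which lies in $P_\psi L^2_{\rm stiff}.$ Thus $\bigl[(\tilde A_\e^{(t)}-z)^{-1}+C^{(t)}\bigr]f$ is, up to an operator-norm error $O(\varepsilon^2),$ a vector in $H_{\rm eff},$ and applying $\Psi^{(t)}$ produces the pair $\bigl(u_2,\sqrt{l_1+l_3}\,u_2(0)\bigr)^\top\in H_{\rm hom}.$

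The remaining step is to check that this pair solves $\bigl(A_{\rm hom}^{(\tau)}-z\bigr)(v,\gamma)^\top=g.$ Taking $\gamma:=\sqrt{l_1+l_3}\,u_2(0),$ the domain requirement \eqref{quasi_cond}, namely $v(0)=w_{\rm soft}v(l_2)=\gamma/\sqrt{l_1+l_3},$ reduces to the quasi-periodic condition on $u_2$ already built into \eqref{eq:Rz}. The first scalar equation of $\bigl(A_{\rm hom}^{(\tau)}-z\bigr)(u_2,\gamma)^\top=g$ is exactly the differential equation in \eqref{eq:Rz}, while the second,
\begin{equation*}
-\frac{1}{\sqrt{l_1+l_3}}\sum\widehat\partial^{(\tau)}_nu_2-z\gamma=\alpha,
\end{equation*}
upon substituting $\gamma=\sqrt{l_1+l_3}\,u_2(0),$ becomes the last matching condition of \eqref{eq:Rz}. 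For $z\in S^{(t)}_{K,\rho}$ this boundary-value problem is uniquely solvable, yielding $\bigl(A_{\rm hom}^{(\tau)}-z\bigr)^{-1}g=\bigl(u_2,\sqrt{l_1+l_3}\,u_2(0)\bigr)^\top.$

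Because $\Psi^{(t)}$ and $\bigl(\Psi^{(t)}\bigr)^*$ are unitary, the $O(\varepsilon^2)$ estimate from Theorem \ref{aux_proposition} passes through unchanged to \eqref{eq:claim_corrected}. The main obstacle to making this fully rigorous is the uniform solvability of \eqref{eq:Rz} together with a uniform bound on $\bigl(A_{\rm hom}^{(\tau)}-z\bigr)^{-1},$ both for $z\in S^{(t)}_{K,\rho}$ and for $\tau=\varepsilon t\in[0,2\pi);$ however, the set of poles of the homogenised resolvent coincides with the dispersion set \eqref{CC}, as already observed in Section \ref{sect:motivation} for the closely related family $\check A^{(\check\tau)}(z),$ and this set is by construction excluded from $S^{(t)}_{K,\rho}.$ The uniform control of the pointwise boundary value $u_2(0)$ in terms of the $L^2$-norm of the data, which is needed to propagate the $O(\varepsilon^2)$ error from the $W^{2,2}$-norm to the stiff-component reconstruction, follows from the embedding $W^{2,2}(e_2)\hookrightarrow C(e_2)$ exactly as in the closing argument of the proof of Theorem \ref{aux_proposition}.
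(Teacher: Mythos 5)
Your proof is correct and follows essentially the same route as the paper: you use Theorem \ref{aux_proposition}(2) to describe the action of $\bigl(\tilde A_\e^{(t)}-z\bigr)^{-1}$, observe that $C^{(t)}f=z^{-1}\langle f,\psi^{(t)}\rangle\psi^{(t)}$ precisely cancels the second summand in \eqref{eq:stiff_solution}, apply $\Psi^{(t)}$, and match the result against the defining boundary-value problem for $\bigl(A_{\rm hom}^{(\tau)}-z\bigr)^{-1}$. The paper splits the argument by linearity into the two cases $(f,0)^\top$ and $(0,\beta_f)^\top$ whereas you handle a general $g=(g_2,\alpha)^\top$ directly, but this is a purely presentational difference.
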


\begin{proof}

I. We first verify the claimed identity on vectors $(f, 0)^\top.$ Notice that $\bigl(\Psi^{(t)}\bigr)^*\bigl[(f, 0)^\top\bigr]=(0,f,0)^\top$, which is the case considered in Theorem \ref{aux_proposition} with $f_2=f,$ where we show that the following representation for the action of the resolvent on the left-hand side holds, in the limit as $\e\to0:$
\begin{gather*}
 \biggl(\frac 1{\rm i}\frac{d}{dx}+\tau\biggr)^2 u_2-z u_2=f,\ \ \ 
u_2(0)=
w_{\rm soft}
u_2(l_2),\ \ \ 
\sum\widehat\partial^{(\tau)}_nu_2
=-z(l_1+l_3) u_2(0),
\ \ \ u_{\rm stiff}=\sqrt{l_1+l_3}\,u_2(0)\psi^{(t)}.
\end{gather*}
Evaluating $\Psi^{(t)}$ on the vector $u_1\oplus u_2 \oplus u_3\equiv u_2 \oplus u_{\rm stiff}$ yields
$\Psi^{(t)}(u_2\oplus u_{\rm stiff})=(u_2, \beta)^\top,$
$\beta = \sqrt{l_1+l_3}\,u_2(0).$
On the other hand, for the action $\bigl(A_{\rm hom}^{(\tau)}-z\bigr)^{-1}\bigl[(f, 0)^\top\bigr]=:(u, \beta_u)^\top$  of the right-hand side of
(\ref{eq:claim_corrected}) on the same vector, one has:
\begin{gather*}
\biggl(\frac 1{\rm i}\frac{d}{dx}+\tau\biggr)^2 u-z u=f,\ \ \ \ \ \
u(0)=w_{\rm soft}
u(l_2)=\frac{\beta_u}{\sqrt{l_1+l_3}},\ \ \ \ \ \ \
\sum\widehat\partial^{(\tau)}_nu
=-z(l_1+l_3) u(0),
\end{gather*}
which is clearly the same as for the left-hand side of \eqref{eq:claim_corrected}. This completes the first part of the proof.

II. By linearity, it suffices to verify the validity of our claim on vectors of the form $(0, \beta_f)^\top.$
We have $\bigl(\Psi^{(t)}\bigr)^*\bigl[(0, \beta_f)^\top\bigr]=(f_1,0,f_3)^\top$ so that $f_1\oplus f_3=\beta_f\psi^{(t)}$. Noting that the inner product in the last term of \eqref{eq:Rz} equals $\beta_f$ and using Theorem \ref{aux_proposition} again, we obtain, for the action of the left-hand side of \eqref{eq:claim_corrected}:
\begin{gather*}
\biggl(\frac 1{\rm i} \frac{d}{dx}+\tau\biggr)^2 u_2-zu_2=0,
\\[0.4em]
u_2(0)=w_{\rm soft}
u_2(l_2),\ \ \ \ \
\sum\widehat\partial^{(\tau)}_nu_2=-z(l_1+l_3) u_2(0)- \sqrt{l_1+l_3}\,\beta_f,
\ \ \ \ \ u_{\rm stiff}= \sqrt{l_1+l_3}\,u_2(0)\psi^{(t)}.
\end{gather*}
Once again, one has
$\Psi^{(t)}(u_2\oplus u_{\rm stiff})=(u_2, \beta)^\top,$
$\beta = \sqrt{l_1+l_3}\,u_2(0).$
We consider the result of applying the resolvent $\bigl(A_{\rm hom}^{(\tau)}-z\bigr)^{-1}$ to the vector $(0, \beta_f)^\top:$
\begin{gather*}
\biggl(\frac 1{\rm i} \frac{d}{dx}+\tau\biggr)^2 u-zu=0,\ \ \ \ \ \ \
u(0)=w_{\rm soft}
u(l_2)=\frac{\beta_u}{\sqrt{l_1+l_3}},\ \ \ \ \ \ \
\sum\widehat\partial^{(\tau)}_nu
=-z(l_1+l_3) u(0)-\sqrt{l_1+l_3}\,\beta_f ,
\end{gather*}
and note that $u=u_2,$ $\beta_u=\beta,$ which concludes the proof of the claim.
\end{proof}

\begin{corollary}
\label{main_result_statement}
For
$z\in S^{(t)}_{K,\rho},$
there exists a constant $C>0$ independent of $t$ such that
\begin{equation}
\bigl\Vert\bigl(A_\e^{(t)}-z\bigr)^{-1}-\widetilde{P}_\psi\Phi_\varepsilon^*\bigl(\Psi^{(t)}\bigr)^*\bigl(A_{\rm hom}^{(\tau)}-z\bigr)^{-1}\Psi^{(t)}\Phi_\e\widetilde{P}_\psi\bigr\Vert\le C\varepsilon^2,\ \ \ \ \ \ \ \ \tau=\varepsilon t,
\end{equation}
for all $\e\in(0,1]$ and $t\in[0,2\pi\varepsilon^{-1}).$ Here $\widetilde{P}_\psi:=P_\psi\oplus I_2,$ where $I_2$ is the identity operator on $L^2(e_2).$
\end{corollary}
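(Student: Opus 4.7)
The plan is to chain Theorem \ref{comparison_theorem} with Theorem \ref{second_claim} via the triangle inequality, and then conjugate by the unitary $\Phi_\e$ to return to the original Hilbert space $H_\e$. The key book-keeping observation is that under the decomposition $H=L^2_{\rm stiff}\oplus L^2(e_2)$, the projection $\widetilde P_\psi=P_\psi\oplus I_2$ in the statement of the corollary coincides with the projection $P_{\rm eff}$ of $H$ onto $H_{\rm eff}$, and that $\Phi_\e$ commutes with $\widetilde P_\psi$ since $\Phi_\e$ is the identity on the stiff components.

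First, I would apply Theorem \ref{comparison_theorem} to obtain, uniformly in $t\in[0,2\pi\e^{-1})$ and $z\in S^{(t)}_{K,\rho}$,
\[
\Phi_\e\bigl(A_\e^{(t)}-z\bigr)^{-1}\Phi_\e^*=\bigl(\tilde A_\e^{(t)}-z\bigr)^{-1}+C^{(t)}+O(\e^2).
\]
Next, Theorem \ref{second_claim}, together with the unitarity of $\Psi^{(t)}$ from $H_{\rm eff}$ to $H_{\rm hom}$ (so that $(\Psi^{(t)})^*\Psi^{(t)}=\widetilde P_\psi$ on $H$), is equivalent to the statement
\[
\widetilde P_\psi\bigl[\bigl(\tilde A_\e^{(t)}-z\bigr)^{-1}+C^{(t)}\bigr]\widetilde P_\psi=\bigl(\Psi^{(t)}\bigr)^*\bigl(A_{\rm hom}^{(\tau)}-z\bigr)^{-1}\Psi^{(t)}+O(\e^2).
\]
To insert the sandwiching projections at cost $O(\e^2)$, I would invoke the first claim of Theorem \ref{aux_proposition} for the resolvent part, together with formula \eqref{eq:Cte}, which identifies $C^{(t)}$ up to $O(\e^2)$ with a rank-one operator whose range and co-range both lie in $\mathrm{span}\{\psi^{(t)}\}\subset \widetilde P_\psi H$, so that $\widetilde P_\psi C^{(t)}\widetilde P_\psi=C^{(t)}+O(\e^2)$.

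Combining the three displays above yields
\[
\Phi_\e\bigl(A_\e^{(t)}-z\bigr)^{-1}\Phi_\e^*=\bigl(\Psi^{(t)}\bigr)^*\bigl(A_{\rm hom}^{(\tau)}-z\bigr)^{-1}\Psi^{(t)}+O(\e^2),
\]
where both the residual error and the norm of the right-hand side are uniform in $t,z$. Conjugating by $\Phi_\e^*$ on the left and $\Phi_\e$ on the right, using unitarity, and then inserting the projections $\widetilde P_\psi$ (which is cost-free since $(\Psi^{(t)})^*$ already lands in $H_{\rm eff}$ and $\Phi_\e$ intertwines $\widetilde P_\psi$ between $H$ and $H_\e$) produces the bound stated in the corollary.

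The computational work has already been done in the previous two theorems; the only subtlety here is the identification $\widetilde P_\psi=P_{\rm eff}$ under the appropriate decomposition and the fact that this identification is preserved under $\Phi_\e$. Once this is noted, the result is purely formal. The main obstacle would have been verifying that the correcting operator $C^{(t)}$ is supported on the $\psi^{(t)}$-direction up to $O(\e^2)$, but this is established by \eqref{eq:Cte} in the proof of Theorem \ref{comparison_theorem}, so no further analytic work is required at this stage.
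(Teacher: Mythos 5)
Your proof is correct and reconstructs the argument the paper leaves implicit: Corollary \ref{main_result_statement} is obtained by chaining Theorem \ref{comparison_theorem} with Theorem \ref{second_claim}, absorbing the insertion of $\widetilde P_\psi=P_{\rm eff}$ at cost $O(\e^2)$ via the first claim of Theorem \ref{aux_proposition} (and the fact that $C^{(t)}$, by \eqref{Ct_def1}, is already a rank-one operator in the $\psi^{(t)}$-direction, so $\widetilde P_\psi C^{(t)}\widetilde P_\psi=C^{(t)}$ exactly), and finally conjugating by the unitary $\Phi_\e$ which commutes with $\widetilde P_\psi$. This is exactly the intended route.
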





\begin{remark}
The function $u$ in the eigenvalue problem
$$
A_{\rm hom}^{(\tau)}\binom{u}{\beta}=z\binom{u}{\beta}
$$
is the solution to
\begin{equation}\label{limit_diff_exp}
\biggl(\frac 1{\rm i}\frac {d}{dx}+\tau\biggr)^2u=z u,
\end{equation}
\begin{equation}\label{limit_diff_exp_1}
u(0)=w_{\rm soft} u(l_2),\ \ \ \ \ \ \ \
\sum\widehat\partial^{(\tau)}_nu
=-z (l_1+l_3) u(0),
\end{equation}
where the last condition follows from the equation on the second
components. This coincides with the problem for the ``eigenvectors'' of the energy-dependent boundary-value problem obtained as a Datta -- Das Sarma modification of the
problem considered in Section \ref{sect:motivation}. Moreover,
the two can be shown to be  isospectral (and hence isospectral with the limiting operator $A_{\rm hom}^{(\tau)}$).

The argument leading to Theorem \ref{second_claim} further implies that the operator $A_{\rm hom}^{(\tau)}$, which serves as the norm-resolvent limit of the operator family $A_\varepsilon^{(t)}$, is an out-of-space extension of the related minimal operator (see Section \ref{sect:Krein}, equation \eqref{eq:out-of-space}) corresponding to the generalised resolvent of the spectral boundary-value problem \eqref{limit_diff_exp}--\eqref{limit_diff_exp_1}.

\label{remark_label}
\end{remark}

\section{Transformation to a Kronig-Penney model of $\delta'$-type: Bloch spectrum}
\label{Kronig_Penney}

Now we turn our attention to the question of unitary transformation of the direct integral of homogenised fibre operators $A_{\text{hom}}^{(\tau)}$ into the operator in the original Hilbert space $L^2(\mathbb R)$. We claim that
$A_{\rm hom}^{(\tau)}$ can be
transformed to an operator with non-trivial $\delta'$-type coupling condition (with an \emph{energy-independent} domain description). This transformation, which will be calculated below explicitly on eigenvectors of either operator, involves a change in the magnetic potential. Followed by the application of the inverse Gelfand transform, see Section \ref{Gelfand}, this results in a periodic operator on the real line ${\mathbb R}.$ We recall that  $\tau=\e t$, so that $\tau\in[0,2\pi)$.

\subsection{Limit fibre representation of $\delta$-type: Bloch spectrum}
\label{limit_delta}

We first calculate the eigenfunctions of the self-adjoint operator $A_{\rm hom}^{(\tau)}$. Its spectrum consists of two parts: the $\tau$-dependent spectrum (``Bloch spectrum'') described by the corresponding dispersion relation
and, possibly, the ``non-Bloch'' part of the spectrum, which is not described by the same and which we
calculate explicitly in Section \ref{non-Bloch} after discussing the Bloch spectrum. In order to compute the eigenfunction corresponding to any of the energies described by the dispersion relation, one must consider solutions to the differential equation
\begin{equation}
\left(\frac {1}{\rm i}\frac {d}{dx} +\tau\right)^2 u = z u\ \  \text { on }\  e_2.
\label{above_equation}
\end{equation}
For the Bloch spectrum, one has the boundary-value problem
\begin{equation}\label{eq:delta_conditions}
u(0)=w_{\rm soft}
u(l_2),\ \ \ \ \ \
\sum\widehat\partial^{(\tau)}_nu
=-z(l_1+l_3) u(0),
\end{equation}
under the additional condition $\sin(k l_2)\not=0.$
The solution $u=u(\alpha,\beta; \cdot)$ of (\ref{above_equation})
 subject to the conditions $u(0)=\alpha, u(l_2)=\beta,$ is then given by
$$
u(\alpha, \beta; x)= \alpha {\rm e}^{-{\rm i}\tau x}\frac {\sin k(l_2-x)}{\sin{k l_2}}+
\beta {\rm e}^{{\rm i}\tau (l_2-x)} \frac {\sin kx}{\sin k l_2},\ \ \ \ \ x\in e_2.
$$
The boundary condition involving normal derivatives then yields ({\it cf.} \eqref{CC}) the dispersion relation
\begin{equation}
2 \cot k l_2-2 \cos{\tau} \csc k l_2 = k (l_1+l_3).
\label{dispersion_rel}
\end{equation}
Therefore, for the eigenvectors $\bar{u}$ of the operator $A_{\rm hom}^{(\tau)}$ on the space $H_{\rm hom}$ one has
\begin{equation}
\bar{u}(k)=\left(\begin{array}{c}u(1, {\rm e}^{{\rm i}(l_1+l_3)\tau}; \cdot)\\[0.3em] \sqrt{l_1+l_3}\end{array}\right),
\label{ubar_vector}
\end{equation}
where
$$
u\bigl(1, {\rm e}^{{\rm i}(l_1+l_3)\tau}; x\bigr)= {\rm e}^{-{\rm i}\tau x}\left(\frac {\sin k(l_2-x)}{\sin{k l_2}}+
 {\rm e}^{{\rm i}\tau} \frac {\sin kx}{\sin k l_2}\right),\ \ \ \ \ x\in e_2,
$$
subject to the dispersion relation \eqref{dispersion_rel} holding so that $k^2$ is in the spectrum.
A straightforward integration then yields:
$$
\bigl\|\bar{u}(k)\bigr\|^2_{H_{\rm hom}}=\frac{l_1+l_3}2 +\frac {l_2}{(\sin k l_2)^2}\left( 1-\cos\tau \cos kl_2\right).
$$
The first component in (\ref{ubar_vector})
is rewritten using
(\ref{dispersion_rel}):
\begin{multline}
u\bigl(1, {\rm e}^{{\rm i}(l_1+l_3)\tau}; x\bigr)={\rm e}^{-{\rm i}\tau x}\left(\frac {\sin k(l_2-x)}{\sin{k l_2}}+
{\rm e}^{{\rm i}\tau} \frac {\sin kx}{\sin k l_2}\right)={\rm e}^{-{\rm i}\tau x} \left( \cos kx -\frac{\cos kl_2}{\sin{k l_2}} \sin kx + {\rm e}^{{\rm i}\tau} \frac {\sin kx}{\sin kl_2}\right)\\[0.6em]
={\rm e}^{-{\rm i}\tau x} \left( \cos kx + \left[-k \frac{l_1+l_3}2 + {\rm i}\frac {\sin \tau}{\sin k l_2}\right]\sin kx\right),\ \ \ \ \ x\in e_2.
\label{uform}
\end{multline}

\subsection{Limit fibre representation of $\delta'$-type: Bloch spectrum}
\label{deltaprimesection}

Consider the operator $A'_{\rm hom}=A'_{\rm hom}(\tau')$ in the space $L^2(e_2)$
defined by the same differential expression as $A_{\rm hom}^{(\tau)},$ with the parameter $\tau$ replaced by
$\tau':$
\[
\biggl(\frac{1}{\rm i}\frac {d}{dx}+\tau'\biggr)^2,
\]
on the domain described by the conditions
\begin{eqnarray}
u(0)+{\rm e}^{-{\rm i}(l_1+l_3)\tau'} u(l_2)=(l_1+l_3) {\partial^{(\tau')}} u\bigr|_0,\label{u_condition}\\[0.4em]
{\partial^{(\tau')}} u\bigl|_0=-{\rm e}^{-{\rm i}(l_1+l_3)\tau'}{\partial^{(\tau')}} u\bigr|_{l_2}.\label{uprime_condition}
\end{eqnarray}
Note that the above conditions are written equivalently as
\begin{equation}
u(0)+{\rm e}^{-{\rm i}(l_1+l_3)\tau'} u(l_2)=(l_1+l_3)\widehat{\partial}_n^{(\tau')}u\bigr|_0,\ \ \ \ \ \ \
\widehat{\partial}_n^{(\tau')} u\bigl|_0=\widehat{\partial}_n^{(\tau')} u\bigr|_{l_2},
\label{khren'}
\end{equation}
by passing over to the corresponding Datta -- Das Sarma modification, \emph{i.e.}, by associating the weight ${\rm e}^{-{\rm i}(l_1+l_3)\tau'}$ with the right endpoint of the interval
$e_2.$
The operator $A'_{\rm hom}$ is a self-adjoint extension of $\delta'$ type, {\it i.e.} it can be formally written as a
$\delta'$-type perturbation of a second-order differential operator, see, \emph{e.g.} \cite{Exner2,Kuchment2}. The coupling constant corresponding to this $\delta'$-type matching condition is $l_1+l_3$.

The boundary triple for the operator $A'_{\rm hom}(\tau')$ can be chosen \cite{Yorzh3} so that
 the boundary space is $\mathcal H=\mathbb C$ and the boundary operators are
\begin{equation}\label{eq:delta-prime-triple}
\Gamma_0'u = \widehat{\partial}_n^{(\tau')} u\bigr|_V, \quad \Gamma_1'u = - \sum_{x\in V} w(x)u(x),
\end{equation}
where both endpoints of $e_2$
are identified with each other, so that $e_2$ forms a loop attached at the vertex $V.$ The left and right endpoints of this loop are assigned the Datta -- Das Sarma weights 1 and
${\rm e}^{-{\rm i}(l_1+l_3)\tau'},$ respectively.
 The parameterising matrix (see Definition \ref{Def_BoundTrip}) is the scalar $B'=-(l_1+l_3)$.

The spectrum of $A'_{\rm hom}(\tau')$ is discrete and consists of Bloch-type eigenvalues and, possibly, eigenvalues of non-Bloch type. With respect to the  boundary triple  introduced above  these parts of the spectrum also correspond to the spectrum that is ``visible'' to the $M$-matrix of the maximal operator and the one which is ``invisible'' to it (as eigenvalues of the corresponding minimal operator which is then non-simple).


The Bloch spectrum
is characterised in the following way. At a given $k$, consider the solution to the spectral equation with the boundary data
\[
\widehat{\partial}_n^{(\tau')} u\big|_0=\widehat{\partial}_n^{(\tau')} u\big|_{l_2}=k.
\]
The corresponding solution is given by
$$
v(x;k)={\rm e}^{-{\rm i}\tau'x} \left(  \frac{\cos k(l_2-x)}{\sin k l_2}+{\rm e}^{{\rm i}\tau'} \frac {\cos kx}{\sin k l_2} \right), \ \ \ \ \ x\in e_2.
$$
Clearly, this is an eigenfunction of the operator $A'_{\rm hom}(\tau')$ provided that
\begin{equation}
2 \cot k l_2+2 \cos{\tau'} \csc k l_2 = k (l_1+l_3).
\label{tau_prime_dispersion}
\end{equation}
Note that if $\tau'=\tau+\pi\ {\rm (mod}\ 2\pi{\rm )}$, the dispersion relation for $A'_{\rm hom}(\tau')$ at $k$ is identical to the one for $A_{\rm hom}^{(\tau)}$, see \eqref{dispersion_rel}, at the same point $k$, and hence their Bloch spectra coincide. Combining the dispersion relation (\ref{tau_prime_dispersion}) and the expression for $v(x;k)$ yields
\begin{equation}
v(x;k)={\rm e}^{-{\rm i}\tau' x} \left(
\sin kx + \left[k\frac{l_1+l_3}2+{\rm i}\frac{\sin \tau'}{\sin k l_2}\right] \cos kx
\right), \ \ \ \ \ x\in e_2.
\label{v_exp}
\end{equation}
It is checked that
$$
\bigl\|v(\cdot,k)\bigr\|^2= \frac{l_1+l_3}2 +\frac {l_2}{(\sin k l_2)^2}\left( 1+\cos\tau'\cos kl_2\right),
$$
{\it i.e.} the norms of $\bar{u}(k)$ and $v(\cdot;k)$ coincide for $\tau$ and $\tau',$ respectively, when
$\tau'=\tau+\pi\,{\rm (mod}\ 2\pi{\rm )}$. Finally, substituting $\tau'=\tau+\pi\ {\rm (mod\ }2\pi{\rm )}$ into (\ref{v_exp}) yields the following formula for
$v(x;k)$ in terms of the parameter $\tau:$
\begin{equation}
v(x;k)={\rm e}^{-{\rm i} \pi x} {\rm e}^{-{\rm i}\tau x} \left(
\sin kx + \left[k\frac{l_1+l_3}2-{\rm i}\frac{\sin \tau}{\sin k l_2}\right] \cos kx
\right), \ \ \ \ \ x\in e_2,
\label{vform}
\end{equation}
which we compare below with the first component of the eigenvector $\bar{u}(k).$

\subsection{Non-Bloch spectrum in the $\delta$- and $\delta'$-type cases
}\label{non-Bloch}
As far as the non-Bloch spectrum is concerned,
for the operator $A_{\rm hom}^{(\tau)}$ one has to solve the spectral equation (\ref{above_equation})
when $\sin(k l_2)=0$ subject to the boundary conditions \eqref{eq:delta_conditions}.
While a general solution of (\ref{above_equation}) has the form
$
u=A {\rm e}^{-{\rm i}\tau x}{\rm e}^{{\rm i} k x}+ B{\rm e}^{-{\rm i} \tau x}{\rm e}^{-{\rm i} k x},
$
the conditions (\ref{eq:delta_conditions}) are shown to imply that $\cos k l_2 ={\rm e}^{{\rm i}\tau}$ and the solution sought admits the form
$u=u(0){\rm e}^{-{\rm i}\tau x}{\rm e}^{{\rm i} k x}+C{\rm e}^{-{\rm i} \tau x}\sin kx $ with an arbitrary $C\in{\mathbb C}.$
This leads to
the eigenvector ${\rm e}^{-{\rm i}\tau x} \sin kx$ at the values $\tau = 0$, $\tau =\pi,$
where $k=\pi m/l_2$ for an even \emph{non-zero} (for $\tau = 0$) or odd (for $\tau =\pi$) value of $m,$ and to the eigenvector ${\rm e}^{-{\rm i}\tau x}\equiv 1$ for $\tau=0,$ $m=0.$

The non-Bloch spectrum of the
operator $A'_{\rm hom}$ can be treated in a similar way, which allows for a simplification since, as argued in Section \ref{deltaprimesection}, it is the set of eigenvalues of the minimal (symmetric) operator $A_{\min},$ the domain of which is uniquely defined by the  boundary triple \eqref{eq:delta-prime-triple} via conditions $\Gamma_0 u=\Gamma_1 u =0$ (see also \cite{Yorzh1} for further details). These  eigenvectors  satisfy the spectral equation and the boundary conditions that determine the domain of the minimal operator:
$$
\dtau u\bigr|_0=\dtau u\bigr|_{l_2}=0,\ \  \quad u(0) +{\rm e}^{-{\rm i}(l_1+l_3)\tau} u(l_2)
=0.
$$
The general solution is the same as above, while the boundary conditions yield
$
A=B,$ $\sin kl_2=0,$ $\cos k l_2= -{\rm e}^{{\rm i}\tau}.
$
This system has a solution for $\tau =0$ and $\tau=\pi,$ where the associated eigenfunction is given by
${\rm e}^{-{\rm i}\tau x} \cos kx,$
$k=\pi m/l_2$ for an odd or even $m,$ respectively.
If follows immediately that the operator $A_{\rm hom}^{(\tau)}$ at $\tau =0,$ $\tau=\pi$ has the same non-Bloch spectrum as $A'_{\rm hom}(\tau')$ at $\tau'=\pi,$ $\tau'=0,$ respectively.




\subsection{Unitary equivalence of $A_{\rm hom}^{(\tau)}$ and $A'_{\rm hom}(\tau')$ and the whole-line form of the limit model}
\label{Sect:Kronig_Penney}

Since $A_{\rm hom}^{(\tau)}$ and $A'_{\rm hom}(\tau')$ are self-adjoint operators with purely discrete spectra in
$H_{\rm hom}$ and $L^2(e_2)$, respectively,
for each $\tau$ and $\tau'$their
eigenfunctions form orthogonal bases in these spaces.
It follows from the above analysis that for each $\tau$ the operator $A_{\rm hom}^{(\tau)}$ is unitarily equivalent to
$A'_{\rm hom}(\tau')$,
$\tau'=\tau+\pi\,{\rm (mod}\ 2\pi{\rm )}.$
The corresponding unitary transformation is described by mapping, for each value of $k,$ the eigenfunctions of
$A_{\rm hom}^{(\tau)}$ with the first component (\ref{uform}) to the eigenfunctions (\ref{vform}) of $A'_{\rm hom}(\tau'),$
as well as the respective eigenfunctions of the non-Bloch spectra (see Section \ref{non-Bloch}).
Notice that formally this is equivalent to the simultaneous substitution of $\cos kx $ by $\sin kx$ and $\sin kx$ by $-\cos kx$ in (\ref{uform}).

Finally, we rewrite the eigenvalue problems for the operators $A_{\rm hom}'(\tau')$ in a form convenient for the application of the inverse
Gelfand transform,  see Section \ref{Gelfand}. This is followed by the description of an operator in $L^2({\mathbb R})$ of the Kronig-Penney type, whose image under the Gelfand transform is given by the family  $A_{\rm hom}'(\tau'),$ $\tau'\in[0,2\pi).$
To this end, introduce a new function $\tilde{u}$ in (\ref{u_condition})--(\ref{uprime_condition}) by the formula
\begin{equation}
\tilde{u}(y)={\rm e}^{{\rm i}l_2 y\tau'}u(l_2y),\ \ \ \ \ y\in[0,1],
\label{u_form}
\end{equation}
results in the following conditions for $\tilde{u}:$
\begin{equation*}
\tilde{u}(0)-{\rm e}^{-{\rm i}\tilde{\tau}}\tilde{u}(1)=\frac{l_1+l_3}{l_2}\tilde{u}'(0),\ \ \ \ \ \ \
\tilde{u}'(1)={\rm e}^{{\rm i}\tilde{\tau}}\tilde{u}'(0),
\end{equation*}
where
$\tilde{\tau}=\tau'+\pi\, {\rm (mod}\ 2\pi{\rm )},$
which returns the original value of the fibre parameter $\tau\, {\rm (mod}\ 2\pi{\rm )}.$
Now, considering
\begin{equation}
v(y)={\rm e}^{{-\rm i}\tilde{\tau} y}\tilde{u}(y), \ \ \ \ \ \ y\in[0,1],
\label{v_form}
\end{equation}
results in
\begin{gather}
v(1)-v(0)=-\frac{l_1+l_3}{l_2}\biggl(\frac{d}{dy}+{\rm i}\tilde{\tau}\biggr)v\biggr\vert_0,\ \ \ \ \ \ \ \ \ \
\biggl(\frac{d}{dy}+{\rm i}\tilde{\tau}\biggr)v\biggr\vert_1=\biggl(\frac{d}{dy}+{\rm i}\tilde{\tau}\biggr)v\biggr\vert_0.\label{second_cond}
\end{gather}
The
differential expression on the left-hand side of (\ref{limit_diff_exp}) takes the following form in terms of the function $v:$
\[
\frac{1}{l_2^2}\biggl(\frac{1}{\rm i}\frac {d}{dy}+\tau'+\pi\biggr)^2v=
\frac{1}{l_2^2}\biggl(\frac{1}{\rm i}\frac {d}{dy}+\tilde{\tau}\biggr)^2v,
\]
Hence, the limit Kronig-Penney model is given in each fibre $\tilde{\tau}\in[0, 2\pi)$ by the spectral equation
\begin{equation}
\frac{1}{l_2^2}\biggl(\frac{1}{\rm i}\frac {d}{dy}+\tilde{\tau}\biggr)^2v=zv,
\label{tau_spectrum}
\end{equation}
subject to the conditions
(\ref{second_cond}). Finally, the inverse Gelfand transform (\ref{inverse_scaled_Gelfand}) results in the following spectral problem on
${\mathbb R}$ for $U$ such that $\hat{U}=v,$ {\it cf.} (\ref{Gelfand_formula}):
\begin{equation}
-l_2^{-2}U''=zU,\ \ \ \ \ \ U'\in C(\mathbb R),
\ \ \ \ \ \forall n\in\mathbb Z\ \ \ U\in C[n,n+1], \ \ \ U(n+0)-U(n-0)=
l_2^{-1}(l_1+l_3)U'(n),
\label{deltaprime_eigen}
\end{equation}
where $l_1+l_2+l_3=1.$ Notice that in the case when $l_2=1$ ({\it i.e.} the stiff component is absent) we obtain the
usual operator $-d^2/{dx^2}$ on ${\mathbb R}.$ The spectral problem (\ref{deltaprime_eigen}) describes (generalised)
eigenfunctions of the operator ${\mathcal A}_{\rm hom}'$ in $L^2({\mathbb R})$ given by the diferrential expression
$-l_2^{-2}d^2/{dx^2}$ on
\[
{\rm dom}({\mathcal A}_{\rm hom}')=\bigl\{U: \forall n\in\mathbb Z\ \ U\in W^{2,2}(n, n+1),\ \ U'\in C(\mathbb R),\ \forall n\in\mathbb Z\ \ U(n+0)-U(n-0)=
l_2^{-1}(l_1+l_3)U'(n)\bigr\}.
\]

\section{Relation to earlier results}
\label{discussion}

1. Our approach via the theory of boundary triples and Krein formula offers a strategy to obtain operator-norm resolvent convergence estimates for the setting of
\cite{Exner},
\cite{KuchmentZeng}, \cite{KuchmentZeng2004},
who discuss the behaviour of the spectra of operator sequences associated with
``shrinking'' domains as in Fig.\,\ref{domain_figure}.
\begin{figure}
\begin{center}
\includegraphics[width=.8\textwidth]{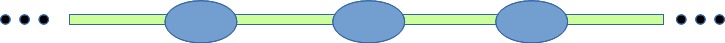}
\end{center}
\caption{{\scshape The ``fattened graph''} of the earlier works on spectral convergence of the Laplace operator on thin domains with Neumann boundary conditions.}
\label{domain_figure}
\end{figure}
Here the rate of shrinking of the green ``edge'' parts
is assumed to be related to the rate of shrinking of the  blue ``vertex'' parts via
\begin{equation}
\frac{\text{vol}(V^\varepsilon_{\rm vertex})}{\text{vol}(V^\varepsilon_{\rm edge})}\to\alpha\ge0, \ \ \ \varepsilon\to0.
\label{shrinking_rate}
\end{equation}
It is shown in the above works (for the case $\alpha=0$ in \cite{KuchmentZeng})
that the spectra of the corresponding Laplacian operators with Neumann boundary conditions
converge to the spectrum of an operator on a
one-dimensional lattice obtained as the limit of the domain in Fig.\,\ref{domain_figure} as $\varepsilon\to0.$
Our operator $A_{\rm hom}^{(\tau)},$ see Definition \ref{Ahom}, coincides with the limit operator
in \cite{Exner}, \cite{KuchmentZeng2004}.
The ``weight'' $l_1+l_3$ in our analysis
plays the r\^{o}le of the constant $\alpha$ in (\ref{shrinking_rate}), see {\it e.g.} (\ref{limit_diff_exp_1}).

In view of our results, it is intriguing to consider the one-dimensional high-contrast
problem (\ref{orig_problem})--(\ref{high_contrast_case}) as
an equivalent (in the resolvent sense) of Neumann Laplacians defined on a two-dimensional domain shrinking to an infinite chain graph, under the assumption (\ref{shrinking_rate}) with $\alpha\neq0.$
This should allow for the treatment of the  homogenisation problem in terms of resonant properties of thin structures,
thereby relating properties that are due to high contrast to properly chosen ``sizes'' of resonators located at the chain vertices. 
It would be instructive to compare such results with \cite{Zhikov_Pastukhova_singular_structures},
where $\alpha=0$ and thus the effective operator is the Laplacian on a
periodic graph with standard Kirchhoff conditions at the vertices, fully in line with the results
of \cite{Exner}, \cite{KuchmentZeng}. Notably, a resonance scattering theory approach
to the treatment of effective transmitting properties of thin graph-like structures has been developed in
\cite{pavlov2, pavlov1, pavlov3} and references therein, whose results, in our view, pave the way for
yet another promising approach to the treatment of homogenisation problems with high contrast.


\

\noindent 2. To the best of our knowledge, the fact that the limiting operator of \cite{Exner}, \cite{KuchmentZeng2004} is unitarily equivalent to a Laplacian with a non-trivial $\delta'$-type perturbation supported on an infinite one-dimensional lattice is observed in the present paper for the first time.  Building upon the results of \cite{Exner}, \cite{KuchmentZeng2004}
in the special case of infinite chain graphs, this further reveals the meaning of $\delta'$-type coupling conditions in quantum graphs,
which has attracted considerable attention during the past decade. We conjecture that the same effect occurs in the general case of periodic metric graphs, which will be discussed in a forthcoming publication.

\

\noindent 3.  Our main result, Corollary \ref{main_result_statement}, describes the asymptotic behaviour of  the problem (\ref{generic_eq}), (\ref{generic_high_contrast}) in classical operator-theoretic terms, and is similar in this to the
work \cite{CC}, where resolvent estimates of order $O(\varepsilon)$ are obtained in the multi-dimensional case $d\ge 2$ under the assumption ${\rm dist} (Q_0, \partial Q)>0,$ see (\ref{generic_high_contrast}).
We do not rely on the techniques based on two-scale convergence, which have otherwise been used in the analysis of high-contrast problems, see \cite{Zhikov2000}, \cite{BouchitteFelbacq}, \cite{KamSm2}. Our approach provides asymptotic estimates that are both norm-sharp and $\varepsilon$-order sharp, and is free from restrictions on the geometry of the composite (except for minimal smoothness assumptions on the interfaces), which in our view shows the potential of operator-theoretic techniques in the study of ``non-classical'' periodic media.

\

\noindent 4. In  the work \cite{BirmanSuslina} the effective model (\ref{effective}) was derived by an asymptotic analysis of the fibre decomposition of the resolvents (\ref{resolvents_scaling}) and a fundamental notion of spectral germ was introduced, as an operator-theoretic tool for the analysis of the
``threshold behaviour" of (\ref{resolvents_scaling}) when the parameter $\varepsilon^2z<0$ approaches the spectrum at zero. The approach of
\cite{BirmanSuslina}
applies to operators that can be defined in terms of pencils
of the form $(X_0+tX_1)^*(X_0+tX_1),$ $t\in[0,1),$ ${\rm ker} X_0\neq\{0\},$ under some additional technical assumptions on $X_0,$ $X_1.$
However, a key requirement of this approach concerning the behaviour of the pencil, namely that the number of its eigenvalues in a sufficiently small neighbourhood of zero is finite,
is not satisfied in the case of the pencil (\ref{fibre_equation}), (\ref{generic_high_contrast}), where the r\^{o}le to $t$ is played by $\vert\varkappa\vert,$ see a related discussion in Section \ref{intro_section}.  From this perspective, one of the main results of our analysis is the development of a generalised notion of spectral germ for high-contrast periodic problems. While such an object would seem to have to involve an infinite set of data, due to a growing  (as $\varepsilon\to0$) set of eigenvalues of the pencil in any given neighbourhood of zero, it is remarkable that our limit model is a simple quantum
graph with non-trivial, dipole-type interface conditions (\ref{khren'}).


\

\noindent 5.  All the ingredients of our approach to high-contrast problems of the kind
(\ref{generic_eq}), (\ref{generic_high_contrast}) are either already formulated in an abstract operator-theoretic form or can be reformulated in such a form, despite the fact that the proofs of Theorems \ref{comparison_theorem}, \ref{second_claim}
involve a list of explicit one-dimensional calculations.
In particular, in the multi-dimensional case $d\ge2$ we expect Fig.\,\ref{fig:mod} to be relevant, illustrating the related modification procedure in terms of its one-dimensional sections.
It is for this reason that we believe in the strong potential of our approach for the treatment of PDE settings. This will be realised under an appropriate modification of the classical boundary triple setup, whose abstract version \cite{DM} is not directly applicable to the PDE case.
At the same time, a suitable generalisation is readily available for one-dimensional graphs that are periodic in several directions, which we shall also address elsewhere.

\subsection*{Acknowledgements} This work was carried out under the financial support of
the Engineering and Physical Sciences Research Council (Grant EP/L018802/1 ``Mathematical foundations of metamaterials: homogenisation, dissipation and operator theory''). AVK is grateful to the University of Bath for hospitality during his research visit in 2015,
when the main body of research leading to the present publication was carried out. The work of AVK was also partially supported by a grant of the Ukrainian Ministry for Education. We are specially indebted to Dr Shane Cooper for his careful reading of the manuscript.


\begin{thebibliography}{29}

\bibitem{pavlov3}
Adamyan, V. M., Martin, G. J., Pavlov, B. S., 2014. Local inverse scattering problem as a tool of perturbation analysis for resonance systems, {\it Amer. Math. Soc. Trans. Ser 2} {\bf 233}, 1--26.

\bibitem{Glazman}
Akhiezer, N. I.,  Glazman I. M., 1981. \emph{Theory of Linear Operators}, Pitman, London.


\bibitem{BLP}
Bensoussan, A., Lions, J.-L., Papanicolaou, G., 1978. {\it Asymptotic Analysis for Periodic Structures,} North Holland.

\bibitem{BirmanSuslina}
Birman, M. Sh., Suslina, T. A.,  2004. Second order periodic differential operators. Threshold properties and homogenisation. {\it St. Petersburg. Math. J.}
{\bf 15}(5), 639--714.


\bibitem{BP}
 Bakhvalov, N.,  Panasenko, G., 1989. {\it Homogenisation: averaging processes in periodic media,}
 Kluwer Academic Publishers, Dordrecht, 1989.

\bibitem{Kuchment2} Berkolaiko, G., Kuchment, P., 2012. \emph{Introduction to Quantum
Graphs}, Mathematical Surveys and Monographs \textbf{186},
American Mathematical Society.

\bibitem{Birman} Birman, M. Sh., 1956. On the self-adjoint extensions of positive definite operators. {\it Math. Sb.}
{\bf 38}, 431--450.

\bibitem{Birman_internal_gap}
Birman, M. Sh., 2004 On the averaging procedure for periodic operators in a neighborhood of an edge of an internal gap.
{\it St. Petersburg Math. J.} {\bf 15}(4), 507--513

\bibitem{BouchitteFelbacq}
Bouchitt\'{e}, G., Felbacq, D., 2004. Homogenisation near resonances and artificial magnetism from dielectrics. {\it C. R. Math. Acad. Sci.
Paris} {\bf 339}(5), 377--382.

\bibitem{CC}
Cherednichenko, K. D., Cooper, S., 2015. Resolvent estimates for high-contrast homogenisation problems. {\it To appear in Archive for Rational Mechanics and Analysis.}

\bibitem{CherednichenkoCooper}
 Cherednichenko, K. D., Cooper, S., Guenneau, S., 2015 Spectral analysis of one-dimensional high-contrast elliptic problems with periodic coefficients. {\it Multiscale Model. Simul.} {\bf 13}(1), 72--€"98.
\bibitem{CKS}
Cooper, S., Kamotski, I., Smyshlyaev, V., 2014. On band gaps in photonic crystal fibers, {\it arXiv preprint} arXiv:1411.0238.
\bibitem{CKP}
Craster, R. V., Kaplunov, J., and Pichugin, A. V., 2010. High-frequency homogenisation for periodic media. {\it Proc. R. Soc. A} {\bf 466}, 2341--2362.

\bibitem{Datta}
Datta, S., 1995. \emph{Electronic transport in mesoscopic systems}. Cambridge University Press.


\bibitem{DM}
Derkach, V. A., Malamud M. M., 1991. Generalised resolvents and
the boundary value problems for Hermitian operators with gaps, {\it J. Funct. Anal. } \textbf{95}, 1--95.

\bibitem{DHMdS}
Derkach, V.,  Hassi, S.,  Malamud, M.,  de Snoo, H., 2006. Boundary relations and their Weyl families. {\it Trans. Amer. Math. Soc.}
 {\bf 358}(12), 5351--5400.

\bibitem{Yorzh1} Ershova, Yu., Kiselev A.V., 2012. Trace formulae for graph Laplacians with applications to
recovering matching conditions. {\it Methods of Funct. Anal. Topol.}
\textbf{18}(4), 343--359.

\bibitem{Yorzh3} Ershova, Yu., Karpenko, I. I., Kiselev, A.V., 2014. Isospectrality for graph Laplacians under the change of
coupling at graph vertices, {\it arXiv preprint, arXiv: 1405.2997.} {\it To appear in J. Spect.  Th.}

\bibitem{Exner}
Exner, P., Post, O., 2005. Convergence of spectra of graph-like thin manifolds. {\it J. Geom. Phys.} \textbf{54}(1), 77--115.

\bibitem{Exner2} Exner, P., 1995. Lattice Kronig-Penney models. {\it Phys. Rev. Lett.} \textbf{74}, 3503--3506.

\bibitem{FK} Figotin, A., Kuchment, P., 1998. Spectral properties of classical waves in high-contrast periodic media. {\it SIAM J. Appl. Math.} {\bf 58} (2), 683--702.

\bibitem{Gelfand}
Gel'fand, I. M., 1950. Expansion in characteristic functions of an equation with periodic coefficients. (Russian) {\it Doklady Akad. Nauk SSSR (N.S.)} {\bf 73}, 1117--1120.

\bibitem{Gor}Gorbachuk, V. I., Gorbachuk, M. L., 1991. {\it Boundary value
problems for operator differential equations.}
Mathematics and its
Applications (Soviet Series), \textbf{48},  Kluwer Academic
Publishers Group, Dordrecht.


\bibitem{pavlov1}
Harmer, M., Pavlov, B.,  Yafyasov, A., 2007. Boundary condition at the junction. {\it Journal of Computational  Electonics,} {\bf 6}, 153--157.

\bibitem{JKO}
Jikov, V. V., Kozlov, S. M., Oleinik, O.A., 1994. {\it Homogenisation of Differential Operators and Integral Functionals, } Springer.

\bibitem{KamSm1}
Kamotski, I. V., Smyshlyaev, V. P., 2011. Localised modes due to defects in high contrast
periodic media via homogenisation. {\it Bath Institute for Complex Systems, preprint 3/06}.

\bibitem{KamSm2}
Kamotski, I. V., Smyshlyaev, V. P., 2011. Two-scale homogenisation for a class of partially degenerating PDE systems. {\it arXiv preprint arxiv:1309.4579}.


\bibitem{Ko1}
Ko\v cube\u\i\ A. N., 1975. On extension of symmetric operators
and symmetric binary  relations, {\it Math.  Notes} \textbf{17}, 41--48.

\bibitem{Koch}
Ko\v cube\u\i\ A. N., 1980. Characteristic functions of symmetric operators
and their extensions (in Russian), {\it Izv. Akad. Nauk Arm. SSR Ser.
Mat.} \textbf{15}(3), 219--232.

\bibitem{Kohn_Shipman}
Kohn, R. V., Shipman, S. P., 2008. Magnetism and homogenisation of microresonators. {\it Multiscale Model. Simul.}
{\bf 7}(1), 62--92.

\bibitem{Krein} Kre\u\i n, M.G., 1947. Theory of self-adjoint extensions of semibounded hermitian operators and
applications II. {\it Mat. Sb.} {\bf 21}(3), 365--404.

\bibitem{KuchmentZeng}
 Kuchment, P., Zeng, H., 2001. Convergence of spectra of mesoscopic systems collapsing onto a graph. {\it J. Math. Anal. Appl.}
 {\bf 258}(2), 671--€"700.
 \bibitem{KuchmentZeng2004}
 Kuchment, P.,  Zeng, H., 2004. Asymptotics of spectra of Neumann Laplacians in thin domains. {\it Contemporary Mathematics} {\bf 327},
Amer. Math. Soc., Providence, Rhode Island, 199--213.



\bibitem{Aharonov}
Kurasov P., 2010. Inverse problem for Aharonov-Bohm rings,
{\it Math. Proc. Cam. Phil. Soc.} \textbf{148}, 331--362.



\bibitem{Sargsyan}
Levitan, B. M., Sargsyan, I. S., 1991. \emph{Sturm-Liouville and Dirac Operators}, Kluwer, Dordrecht.


\bibitem{Marchenko}
 Marchenko, V. A., 1986. {\it Sturm-Liouville Operators and Applications.}
 Operator Theory: Advances and Applications {\bf 22},  Birkh\"{a}user, Basel.



\bibitem{pavlov2}
Mikhailova, A., Pavlov, B., Prokhorov, L., 2007. Intermediate Hamiltonian via Glazman splitting and analytic
perturbation  for meromorphic  matrix-functions.  {\it Mathematische Nachrichten,} {\bf 280}(12), 1376--1416.




\bibitem{Ryzhov}
Ryzhov, V., 2007. Functional model of a class of nonselfadjoint extensions of symmetric operators.
{\it Operator theory, analysis and mathematical physics,}
{\it Oper. Theory Adv. Appl.} {\bf 174}, Birkh\"{a}user, Basel, 117--158




\bibitem{Strauss} Strauss A.V., 1954. Generalised resolvents of symmetric operators, \emph{Izv. Akad. Nauk SSSR, Ser. Mat.,} \textbf{18}, 51--86 (in Russian).



\bibitem{Titchmarsh}
Titchmarsh, E. S., 1962. {\it Eigenfunction Expansions Associated with Second-Order Differential Equations, Part I,} Clarendon Press, Oxford.

\bibitem{Vishik} Vi\v sik, M.I., 1952. On general boundary problems for elliptic differential equations (Russian). {\it Trudy
Moskov. Mat. Ob\v sc.} {\bf 1}, 187--246.


\bibitem{JvN} von Neumann, J., 1932. Uber adjungierte operatoren. {\it Ann. Math.} {\bf 33}(2), 294--310.

\bibitem{Zhikov1989}
Zhikov, V. V., 1989. Spectral approach to asymptotic diffusion problems (Russian). {\it Differentsial'nye uravneniya}
{\bf 25}(1), 44--50.

\bibitem{Zhikov2000}
Zhikov, V. V., 2000. On an extension of the method of two-scale convergence and its applications, {\it Sbornik: Mathematics} {\bf 191}(7), 973--1014.


\bibitem{Zhikov2005}
Zhikov, V. V., 2005. On gaps in the spectrum of some divergence elliptic operators with periodic coefficients. {\it St. Petersburg Math. J.} {\bf 16}(5) 773--719.

\bibitem{Zhikov_Pastukhova_singular_structures}
Zhikov, V. V., Pastukhova, S. E., 2003. Averaging of problems in the theory of elasticity on periodic grids of critical thickness.
{\it Sb. Math.} {\bf 194}(5-6), 697--732.







\end{thebibliography}
\end{document}